\newcommand{\eqdef}{\mathrel{\mathop=}:}
\newcommand{\R}{\mathbb{R}}
\newlength{\figsize} \setlength{\figsize}{0.32\textwidth}
\newcounter{property_counter}
\newtheorem{property}[property_counter]{Property}
\title{Frequent Directions : Simple and Deterministic Matrix Sketching}
\author{
Mina Ghashami\\ University of Utah \\ \texttt{ghashami@cs.utah.edu}
\and
Edo Liberty\\ Yahoo Labs\\ \texttt{edo.liberty@yahoo.com}
\and
Jeff M. Phillips\thanks{Thanks to support by NSF CCF-1350888, IIS-1251019, and ACI-1443046.}\\ University of Utah \\ \texttt{jeffp@cs.utah.edu}
\and
David P. Woodruff\thanks{Supported by the XDATA program of DARPA administered through Air Force Research Laboratory contract FA8750-12-C0323.}\\ IBM Research--Almaden \\ \texttt{dpwoodru@us.ibm.com}
}
\begin{document}
\date\nonumber
\maketitle

\begin{abstract}
We describe a new algorithm called Frequent Directions for deterministic matrix sketching in the row-updates model.  
The algorithm is presented an arbitrary input matrix $A \in \R^{n \times d}$ one row at a time. It performed $O(d\ell)$ operations per row and maintains a sketch matrix $B \in \R^{\ell \times d}$ such that for any $k < \ell$
\[
\|A^TA - B^TB \|_2 \leq \|A - A_k\|_F^2 / (\ell-k) \mbox{\;\;\;\;and\;\;\;\;} \|A - \pi_{B_k}(A)\|_F^2 \leq \big(1 + \frac{k}{\ell-k}\big) \|A-A_k\|_F^2 \ .
\]
Here, $A_k$ stands for the minimizer of $\|A - A_k\|_F$ over all rank $k$ matrices (similarly $B_k$) and $\pi_{B_k}(A)$ is the rank $k$ matrix resulting from projecting $A$ on the row span of $B_k$.\footnote{The matrix $A_0$ is defined to be an all zeros matrix of the appropriate dimensions.}

We show both of these bounds are the best possible for the space allowed.  
The summary is mergeable, and hence trivially parallelizable.  
Moreover, Frequent Directions outperforms exemplar implementations of existing streaming algorithms in 
the space-error tradeoff.\footnote{This paper combines, extends and simplifies the results in \cite{Lib12}\cite{GhashamiP14} and \cite{Woo14}.}
\end{abstract}

\section{Introduction}
\label{sec:intro}

The data streaming paradigm~\cite{muthukrishnan05:_book, gibbons1999synopsis} considers computation on a large data set $A$ where data items arrive in arbitrary order, are processed, and then never seen again.  
It also enforces that only a small amount of memory is available at any given time.  
This small space constraint is critical when the full data set cannot fit in memory or disk.  
Typically, the amount of space required is traded off with the accuracy of the computation on $A$.  
Usually the computation results in some summary $S(A)$ of $A$, and this trade-off determines how accurate one can be with the available space resources.  

Modern large data sets are often viewed as large matrices. For example, textual data in the bag-of-words model is represented by a matrix whose rows correspond to documents. In large scale image analysis, each row in the matrix corresponds to one image and contains either pixel values or other derived feature values. Other large scale machine learning systems generate such matrices by converting each example into a list of numeric features. Low rank approximations for such matrices are used in common data mining tasks such as Principal Component Analysis (PCA), Latent Semantic Indexing (LSI), and k-means clustering. Regardless of the data source, the optimal low rank approximation for any matrix is obtained by its truncated Singular Value Decompositions (SVD).

In large matrices as above, one processor (and memory) is often incapable of handling all of the dataset $A$ in a feasible amount of time.  Even reading a terabyte of data on a single processor can take many hours. Thus this computation is often spread among some set of machines. This renders standard SVD algorithms infeasible. Given a very large matrix $A$, a common approach is to compute  in the streaming paradigm a sketch matrix $B$ that is significantly smaller than the original. A good sketch matrix $B$ is such that $A \approx B$ or $A^T A \approx B^T B$ and so computations can be performed on $B$ rather than on $A$ without much loss in precision.

Prior to this work, there are three main matrix sketching approaches, presented here in an arbitrary order. 
The first generates a sparser version of the matrix. Sparser matrices are stored more efficiently and can be multiplied faster by other matrices~\cite{arora2006fast,achlioptas2001fast,drineas2011note}. 
The second approach is to randomly combine matrix rows~\cite{papadimitriou1998latent,vempala2004random,sarlos2006improved,liberty2007randomized}. The proofs for these rely on subspace embedding techniques and strong concentration of measure phenomena. The above methods will be collectively referred to as \emph{random-projection} in the experimental section. 
A recent result along these lines~\cite{clarkson2013low}, gives simple and efficient subspace embeddings that can be applied in time $O(nnz(A))$ for any matrix $A$. We will refer to this result as \emph{hashing} in the experimental section. While our algorithm requires more computation than hashing, it will produce more accurate estimates given a fixed sketch size. 
The third sketching approach is to find a small subset of matrix rows (or columns) that approximate the entire matrix. This problem is known as the `Column Subset Selection Problem' and has been thoroughly investigated~\cite{frieze2004fast,drineas2003pass,boutsidis2009improved,deshpande2006adaptive,drineas2011faster,boutsidis2011near}. Recent results offer algorithms with almost matching lower bounds~\cite{cw09,boutsidis2011near,deshpande2006adaptive}. 
A simple streaming solution to the `Column Subset Selection Problem' is obtained by sampling rows from the input matrix with probability proportional to their squared $\ell_2$ norm. Despite this algorithm's apparent simplicity, providing tight bounds for its performance required over a decade of research 
~\cite{frieze2004fast,ahlswede2002strong,drineas2003pass,rudelson2007sampling,vershynin2009note,oliveira2010sums,drineas2011faster}. We will refer to this algorithm as \emph{sampling}. Algorithms such as CUR utilize the leverage scores of the rows~\cite{drineas2008relative} and not their squared $\ell_2$ norms. The discussion on matrix leverage scores goes beyond the scope of this paper, see~\cite{drineas2011faster} for more information and references.

In this paper, we propose a fourth approach: \emph{frequent directions}.  It is deterministic and draws on the similarity between the matrix sketching problem and the item frequency estimation problem.  
We provide additive and relative error bounds for it; 
we show how to merge summaries computed in parallel on disjoint subsets of data; 
we show it achieves the optimal tradeoff between space and accuracy, up to constant factors, for any row-update based summary; 
and we empirically demonstrate that it outperforms exemplars from all of the above described approaches.

\subsection{Notations and preliminaries}
Throughout this manuscript, for vectors, $\|\cdot\|$ will denote the Euclidian norm of a vectors $\|x\| = \sqrt{\sum_i{x_i^2}}$.
For matrices $\|\cdot\|$ will denote the operator (or spectral) norm $\|A\| = \sup_{\|x\|=1} \|Ax\|$.
Unless otherwise stated, vectors are assumed to be column vectors. The notation $a_i$ denotes the $i$th row of the matrix $A$.
The Frobenius norm of a matrix $A$ is defined as $\|A\|_F = \sqrt{\sum_{i=1} \|a_i\|^2}$.  
$I_\ell$ refers to the $\ell \times \ell$ identity matrix.

The singular value decomposition of matrix $B \in \mathbb{R}^{\ell \times d}$ is denoted by $[U,\Sigma ,V] = \svd(B)$. 
If $\ell \le d$ it guarantees that $B = U\Sigma V^T$, $U^TU = I_\ell$, $V^TV = I_d$, $U\in \R^{\ell \times \ell}$, $V\in \R^{d \times \ell}$,
and $\Sigma \in \R^{\ell \times \ell}$ is a non-negative diagonal matrix such that $\Sigma_{1,1}\ge \Sigma_{2,2} \ge \ldots \ge \Sigma_{\ell,\ell} \ge 0$. 
It is convenient to denote by $U_k$, and $V_k$ the matrices containing the first $k$ columns of $U$ and $V$ and $\Sigma_k \in \R^{k \times k}$ the top left $k \times k$ block of $\Sigma$.
The matrix $A_k = U_k \Sigma_k V_k^T$ is the best rank $k$ approximation of $A$ in the sense that $A_k = {\arg \min}_{C : \rank(C) \leq k} \|A - C\|_{2,F}$.  
Finally we denote by $\pi_B(A)$ the projection of the rows $A$ on the span of the rows of $B$. In other words, $\pi_B(A) = A B^\dagger B$ where $(\cdot)^\dagger$ indicates taking the Moore-Penrose psuedoinverse.
Alternatively, setting $[U,\Sigma ,V] = \svd(B)$, we also have $\pi_B(A) = AV^TV$. 
Finally, we denote $\pi_B^k(A) = AV_k^TV_k$, the right projection of $A$ on the top $k$ right singular vectors of $B$.

\subsection{Item Frequency Approximation}
Our algorithm for sketching matrices is an extension of a well known algorithm for approximating item frequencies in streams.
The following section shortly overviews the frequency approximation problem.
Here, a stream $A = \{a_1,\cdots, a_n\}$ has $n$ elements where each $a_i \in [d]$.  Let $f_j = |\{a_i \in A \mid a_i = j\}|$ be the frequency of item $j$ and stands for number of times item $j$ appears in the stream.
It is trivial to produce all item frequencies using $O(d)$ space simply by keeping a counter for each item. Although this method computes exact frequencies, it uses space linear to the size of domain which might be huge. Therefore, we are interested in using less space and producing approximate frequencies $\hat{f}_j$.

This problem received an incredibly simple and elegant solution by Misra and Gries~\cite{mg-fre-82}. 
Their algorithm~\cite{mg-fre-82} employes a map of $\ell < d$ items to $\ell$ counters. 
It maintains the invariant that at least one of the items is mapped to counter of value zero.
The algorithm counts items in the trivial way. If it encounters an item for which is has a counter, that counter is increased. 
Else, it replaces one of the items mapping to zero value with the new item (setting the counter to one).
This is continued until the invariant is violated, that is, $\ell$ items map to counters of value at least $1$.
At this point, all counts are decreased by the same amount until at least one item maps to a zero value.
The final values in the map give approximate frequencies  $\hat f_j$ such that $0 \leq f_j - \hat f_j \leq n/\ell$ for all $j \in [d]$; unmapped $j$ imply $\hat f_j = 0$ and provides the same bounds.
The reason for this is simple, since we decrease $\ell$ counters simultaneously, we cannot do this more that $n/\ell$ times.
And since we decrement {\it different} counters, each item's counter is decremented at most $n/\ell$ times.
Variants of this very simple (and very clever) algorithm were independently discovered 
several times~\cite{demaine2002frequency, karp2003simple, golab2003identifying,metwally2006integrated}.\footnote{The reader is referred to~\cite{karp2003simple} for an efficient streaming implementation.}
From this point on, we refer to these collectively as \FI.

Later, Berinde \etal~\cite{BCIS09} proved a tighter bound for \FI.
Consider summing up the errors by $\hat R_k =  \sum_{i=k+1}^{d}  |f_j - \hat f_j |$ and assume without loss of generality that $f_j \geq f_{j+1}$ for all $j$.
Then, it is obvious that counting only the top $k$ items exactly is the best possible strategy if only $k$ counters are allowed.
That is, the optimal solution has a cost of $R_k =  \sum_{i=k+1}^{d}  f_j$. 
Berinde \etal~\cite{BCIS09} showed that if \FI uses $\ell > k$ counters then $|f_j - \hat f_j | \le R_k/(\ell -k)$.
By summing the error over the top $k$ items it is easy to obtain that $\hat R_k < \frac{\ell}{\ell - k}R_k$. 
Setting $\ell  = \lceil k + k/\eps\rceil$ yields the convenient form of $\hat R_k < (1 + \eps)R_k$.
The authors also show that to get this kind of guarantee in the streaming setting $\Omega(k/\eps)$ bits are indeed necessary.
This make \FI optimal up to a $\log$ factor in that regard.

\subsection{Connection to Matrix Sketching}
\label{sec:connection}
There is a tight connection between the matrix sketching problem and the frequent items problem. 
Let $A$ be a matrix that is given to the algorithm as a stream of its rows. 
For now, let us constrain the rows of $A$ to be indicator vectors. 
In other words, we have $a_i \in \{e_1, . . . , e_d\}$, where $e_j$ is the $j$th standard basis vector. 
Note that such a matrix can encode a stream of items (as above). 
If the $i$th element in the stream is $j$, then the $i$th row of the matrix is set to $a_i = e_j$. 
The frequency $f_j$ can be expressed as $f_j = \|Ae_j\|^2$. 
Assume that we construct a matrix $B \in \mathbb{R}^{\ell \times d}$ as follows.
First, we run \FI on the input. Then, for every item $j$ for which $\hat f_j > 0$ we generate one row in $B$ equal to $\hat f_j^{1/2}\cdot e_j$.
The result is a low rank approximation of $A$. 
Note that $\rank(B) = \ell$ and that $\|Be_j\|^2 = \hat f_j$. 
Notice also that $\|A\|_F^2 = n$ and that $A^TA = \diag(f_1,\ldots,f_d)$ and that $B^TB = \diag(\hat f_1,\ldots, \hat f_d)$.
Porting the results we obtained from \FI we get that $\|A^TA - B^TB\|_2 = \max_j | f_j - \hat f_j | \le \|A\|_F^2/(\ell-k)$.
Moreover, since the rows of $A$ (corresponding to different counters) are orthogonal, the best rank $k$ approximation of $A$ would capture exactly the most frequent items.
Therefore, $\|A - A_k\|_F^2 = R_k =  \sum_{i=k+1}^{d}  f_j$. 
If we follow the step above we can also reach the conclusion that $\|A - \pi_B^k(A)\|_F^2 \le \frac{\ell}{\ell - k}\|A - A_k\|_F^2$.
We observe that, for the case of matrices whose rows are basis vectors, \FI actually provides a very efficient low rank approximation result.
In this paper, we argue that an algorithm in the same spirit as \FI obtains the same bounds for general matrices and general test vectors.

\subsection{Main results}
\label{sec:results}
We describe the \FD algorithm, an extension of \FI to general matrices. 
The intuition behind \FD is surprisingly similar to that of \FI: In the same way that \FI periodically deletes $\ell$ different elements, \FD periodically `shrinks' $\ell$ orthogonal vectors by roughly the same amount. This means that during shrinking steps, the squared Frobenius norm of the sketch reduces $\ell$ times faster than its squared projection on any single direction. Since the Frobenius norm of the final sketch is non negative, we are guaranteed that no direction in space is reduced by ``too much". This intuition is made exact below. As a remark, when presented with an item indicator matrix, \FD exactly mimics a variant of \FI.

\begin{theorem}\label{add}
Given any matrix $A \in \R^{n\times d}$, \FD processes the rows of $A$ one by one and produces a sketch matrix $B \in \R^{\ell \times d}$, such that for any \emph{unit vector} $x \in \R^d$ 
\[
0 \leq \|Ax\|^2 - \|Bx\|^2 \leq \|A - A_k\|_F^2/(\ell-k).
\]
This holds also for all $k < \ell$ including $k=0$ where we define $A_0$ as the $n \times d$ all zeros matrix.
\end{theorem}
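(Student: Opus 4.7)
The plan is to set up two tight per-update inequalities, telescope them, and then plug $A$'s own top-$k$ right singular vectors into the per-direction bound to recover the Berinde-style $(\ell-k)$ factor. Let $B_i$ denote the sketch after processing row $a_i$, and view each update as: append $a_i$ as a new row to $B_{i-1}$ to form an intermediate $C_i$; then take $C_i = U\Sigma V^\top$ and subtract some nonnegative amount $\delta_i$ from each squared singular value (clipping negatives to zero) to obtain $B_i$. Write $\Delta = \sum_i \delta_i$; this is the total ``mass'' destroyed by shrinking, and both halves of the theorem will follow once $\Delta$ is controlled.

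\emph{Per-direction inequality.} For any unit $x$, $\|C_ix\|^2 = \|B_{i-1}x\|^2 + (a_i^\top x)^2$. Since $B_i^\top B_i = V(\Sigma^2 - \delta_i I)_+ V^\top$ and $\|V^\top x\|^2 \le 1$, one reads off
\[
\|C_ix\|^2 - \delta_i \;\le\; \|B_ix\|^2 \;\le\; \|C_ix\|^2.
\]
Telescoping across the stream immediately gives $0 \le \|Ax\|^2 - \|Bx\|^2 \le \Delta$, which proves the lower-bound half of the theorem and reduces the upper-bound half to showing $\Delta \le \|A-A_k\|_F^2/(\ell-k)$.

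\emph{Frobenius budget.} The same SVD computation shows that a shrink step drops $\|\cdot\|_F^2$ by at least $\ell\delta_i$ (each of the $\ell$ retained squared singular values is pushed down by $\delta_i$). Combined with $\|C_i\|_F^2 = \|B_{i-1}\|_F^2 + \|a_i\|^2$ and telescoping, this yields $\|B\|_F^2 \le \|A\|_F^2 - \ell\Delta$. For the matching lower bound on $\|B\|_F^2$, let $u_1,\dots,u_k$ be the top-$k$ right singular vectors of $A$, so $\|Au_j\|^2 = \sigma_j(A)^2$ and $\sum_{j=1}^k \|Au_j\|^2 = \|A_k\|_F^2$. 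Feeding each $u_j$ into the per-direction inequality gives $\|Bu_j\|^2 \ge \sigma_j(A)^2 - \Delta$, and extending $\{u_j\}$ to an orthonormal basis of $\R^d$ gives
\[
\|B\|_F^2 \;\ge\; \sum_{j=1}^k \|Bu_j\|^2 \;\ge\; \|A_k\|_F^2 - k\Delta.
\]
Subtracting the two Frobenius bounds yields $\|A-A_k\|_F^2 = \|A\|_F^2 - \|A_k\|_F^2 \ge (\ell-k)\Delta$, i.e.\ $\Delta \le \|A-A_k\|_F^2/(\ell-k)$. The $k=0$ case recovers the coarser $\Delta \le \|A\|_F^2/\ell$.

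The substantive obstacle is the lower bound on $\|B\|_F^2$ in the third step: the per-direction inequality is weakest when $x$ is chosen adversarially, and the right move is to feed it $A$'s own top-$k$ singular directions, converting a ``for all $x$'' upper bound into a lower bound on $\|B\|_F^2$ relative to $\|A_k\|_F^2$. Once that lower bound is in place, it absorbs $k\Delta$ worth of slack from the Frobenius budget and improves $\ell$ to $\ell-k$, exactly mirroring the Berinde et al.\ improvement for \FI on frequencies.
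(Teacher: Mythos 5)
Your proof is correct and follows essentially the same route as the paper's: the same three per-update facts (monotone decrease along every direction, decrease bounded by $\delta_i$ per step, and Frobenius drop of at least $\ell\delta_i$ per shrink), telescoped to $0 \le \|Ax\|^2 - \|Bx\|^2 \le \Delta$ and $\|B\|_F^2 \le \|A\|_F^2 - \ell\Delta$, and then the same bootstrap that feeds $A$'s top-$k$ right singular vectors into the per-direction bound to absorb $k\Delta$ and obtain $\Delta \le \|A-A_k\|_F^2/(\ell-k)$. Your rearrangement (a lower bound $\|B\|_F^2 \ge \|A_k\|_F^2 - k\Delta$ subtracted from the Frobenius budget) is algebraically identical to the paper's chain of inequalities.
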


Other convenient formulations of the above are $A^TA \succeq B^TB \succeq A^TA - I_d \cdot \|A - A_k\|_F^2/(\ell-k)$ or $A^TA - B^TB \succeq 0$ and $\|A^TA - B^TB\|_2 \le \|A - A_k\|_F^2/(\ell-k)$.  Note that setting $\ell = \lceil k + 1/\eps \rceil$ yields error of $\eps  \|A - A_k\|_F^2$ using $O(dk + d/\eps)$ space.  
This gives an additive approximation result which extends and refines that of Liberty~\cite{Lib12}. We also provide a multiplicative error bound from Ghashami and Phillips \cite{GhashamiP14}. 

\begin{theorem}\label{mul}
Let $B \in \R^{\ell \times d}$ be the sketch produced by \FD. For any $k < \ell$ it holds that
\[
\|A - \pi^k_B(A)\|^2_F \leq (1+\frac{k}{\ell - k}) \|A - A_k\|^2_F.
\]
\end{theorem}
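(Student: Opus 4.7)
The plan is to reduce the multiplicative bound to two applications of Theorem \ref{add}, one that lets me pass from $A$ to $B$ on the column span of $V_k$, and one that lets me pass from $A_k$ to $B$ on the top right singular subspace of $A$.

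First I would rewrite the left-hand side. Since $V_k$ has orthonormal columns, $\pi^k_B(A)=AV_kV_k^T$ and hence
\[
\|A-\pi^k_B(A)\|_F^2 \;=\; \|A\|_F^2 - \|AV_k\|_F^2.
\]
So the theorem is equivalent to the lower bound $\|AV_k\|_F^2 \geq \|A\|_F^2 - (1+k/(\ell-k))\,\|A-A_k\|_F^2$, and the task reduces to showing that $\|AV_k\|_F^2 \geq \|A_k\|_F^2 - k\Delta$ where $\Delta:=\|A-A_k\|_F^2/(\ell-k)$, since $\|A\|_F^2-\|A_k\|_F^2 = \|A-A_k\|_F^2$.

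Next I would sandwich $\|AV_k\|_F^2$ between two quantities involving $B$. On the one hand, applying the left inequality of Theorem \ref{add} ($\|Ax\|^2 \geq \|Bx\|^2$) to each of the $k$ orthonormal columns $v_i$ of $V_k$ and summing gives
\[
\|AV_k\|_F^2 \;\geq\; \|BV_k\|_F^2 \;=\; \sum_{i=1}^k \sigma_i(B)^2 \;=\; \|B_k\|_F^2,
\]
where the middle equality uses the fact that $V_k$ is by definition the top-$k$ right singular subspace of $B$. On the other hand, let $W_k\in\R^{d\times k}$ be the matrix of the top-$k$ right singular vectors of $A$, so that $\|AW_k\|_F^2=\|A_k\|_F^2$. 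Since $B_k$ is the best rank-$k$ approximation of $B$ in Frobenius norm, $\|B_k\|_F^2 \geq \|BW_k\|_F^2$. Applying the right inequality of Theorem \ref{add} ($\|Ax\|^2-\|Bx\|^2 \leq \Delta$) to each column of $W_k$ and summing yields
\[
\|BW_k\|_F^2 \;\geq\; \|AW_k\|_F^2 - k\Delta \;=\; \|A_k\|_F^2 - k\Delta.
\]

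Chaining the two bounds gives $\|AV_k\|_F^2 \geq \|A_k\|_F^2 - k\Delta$, and substituting this into the rewritten left-hand side gives $\|A-\pi^k_B(A)\|_F^2 \leq \|A-A_k\|_F^2 + k\Delta$. Plugging in $\Delta=\|A-A_k\|_F^2/(\ell-k)$ yields exactly the claimed factor $1+k/(\ell-k)$. The only real conceptual step is recognizing that Theorem \ref{add} must be used twice in opposite directions: once to drop from $A$ to $B$ on $V_k$ (the ``free'' side, since there is no $\Delta$ cost), and once to pay the $\Delta$ cost on $W_k$ via the optimality of $B_k$ among rank-$k$ approximations of $B$. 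Everything else is bookkeeping with Frobenius-norm identities for orthonormal projections.
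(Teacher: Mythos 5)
Your proof is correct and follows essentially the same route as the paper: the Pythagorean identity $\|A-\pi_B^k(A)\|_F^2=\|A\|_F^2-\sum_{i=1}^k\|Av_i\|^2$, then the chain $\sum_i\|Av_i\|^2\ge\sum_i\|Bv_i\|^2\ge\sum_i\|By_i\|^2\ge\|A_k\|_F^2-k\Delta$ using the two sides of Theorem~\ref{add} together with the variational optimality of $B$'s top-$k$ right singular vectors. The only cosmetic differences are that you package the middle step as $\|B_k\|_F^2\ge\|BW_k\|_F^2$ and work with $\Delta$ already set to its final bound $\|A-A_k\|_F^2/(\ell-k)$.
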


Note that by setting  $\ell = \lceil k + k/\eps\rceil$ one gets the standard form $\|A - \pi_B^k(A)\|^2_F \leq (1 + \eps)\|A - A_k\|^2_F$.
\FD achieves this bound while using less space than any other known algorithm, namely $O(\ell d) = O(kd/\eps)$ floating point numbers, and is deterministic.  
Also note that the $\pi_B^k$ operator projects onto a rank $k$ subspace $B_k$, where as other similar bounds\cite{sarlos2006improved,cw09,FSS13,mahoney2009cur,boutsidis2011near} project onto a higher rank subspace $B$, and then considers the best rank $k$ approximation of $\pi_B(A)$.  Our lower bound in Theorem \ref{opt}, from Woodruff \cite{Woo14}, shows our approach is also tight even for this weaker error bound.

\begin{theorem}\label{opt-add}
Assuming a constant word size, any matrix approximation algorithm, 
which guarantees $\|A^TA - B^TB\|_2 \leq \|A-A_k\|^2_F /(\ell-k)$ must use $\Omega(d \ell)$ space; or in more standard terms, guarantees of $\|A^TA - B^TB\|_2 \leq \eps \|A-A_k\|^2_F$ must use $\Omega(d k+d/\eps)$ space.
\end{theorem}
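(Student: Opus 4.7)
The plan is to prove the lower bound by reduction from the Augmented Indexing problem in one-way communication complexity, whose randomized complexity on inputs of length $N$ is $\Omega(N)$. Given any streaming sketching algorithm using $S$ bits of space that satisfies the covariance guarantee, Alice will encode an $N = \Theta(d\ell)$-bit string into a matrix $A$, run the sketching algorithm to produce $B$, and transmit the sketch state ($S$ bits) to Bob; Bob will then use $B$ together with his suffix of the input to decode a target bit. This yields a protocol of complexity $S$, forcing $S = \Omega(d\ell)$ bits; under the constant word-size assumption this is $\Omega(d\ell)$ words.

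The encoding partitions Alice's string $x \in \{0,1\}^{d\ell}$ into $\ell$ blocks of $d$ bits and maps each block $t$ to a sign vector $v_t \in \{\pm 1/\sqrt{d}\}^d$. Row $t$ of $A$ is set to $\alpha_t v_t$ for a scaling sequence $\{\alpha_t\}$ chosen so that, for every target index $j$, $\alpha_j^2$ dominates both the cumulative earlier energy $\sum_{t<j}\alpha_t^2$ and the error budget $\eta := \|A-A_k\|_F^2/(\ell-k)$ by a constant factor. For a target bit in block $j$, Bob's suffix knowledge yields the future contribution $C := \sum_{t>j}\alpha_t^2 v_t v_t^T$ exactly, so he can form $M := B^TB - C$, which by the sketch guarantee approximates $\sum_{t\leq j}\alpha_t^2 v_tv_t^T$ to within spectral-norm error $\eta$. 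By the choice of scaling, this partial sum has $\alpha_j v_j$ as its dominant singular direction, and a Davis--Kahan-type perturbation bound places the top eigenvector of $M$ within small $\ell_2$ distance of $\pm v_j$. Since $v_j$ has only $\pm 1/\sqrt{d}$ entries, coordinate-wise rounding recovers $v_j$ exactly, and hence the target bit.

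The main obstacle is that a single monotone scaling can make only $O(\log(\ell-k))$ of the $\ell$ indices simultaneously decodable against the shared error floor $\eta$, not all $\ell$. The cleanest resolution is to decompose the claimed bound as $\Omega(dk) + \Omega(d/\eps) = \Omega(d\ell)$, using $\ell = k + 1/\eps$ up to constants, and to run the reduction separately for each term. For the $\Omega(dk)$ term, a sharply geometric scaling with $k$ heavy rows puts the top $k$ indices well above $\eta$, and each yields a decodable $v_t$, embedding $\Omega(dk)$ bits. For the $\Omega(d/\eps)$ term one takes $k=0$ and calibrates the scaling so that the target row $\alpha_j v_j$, after Bob subtracts the future rows he knows, sits above $\eta = \eps\|A\|_F^2$; averaging over uniformly random target blocks converts this into an Augmented Indexing protocol of complexity $S$ on $\Omega(d/\eps)$ input bits. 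Summing the two regimes yields the stated $\Omega(d\ell)$ space lower bound, equivalently $\Omega(dk + d/\eps)$ in the $(k,\eps)$ parameterization.
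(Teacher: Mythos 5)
Your reduction has a genuine gap in the regime that actually carries the bound. For Bob to decode block $j$ he needs $\alpha_j^2 v_jv_j^T$ to spectrally dominate \emph{both} the prefix $\sum_{t<j}\alpha_t^2 v_tv_t^T$ \emph{and} the error floor $\eta$. Dominating the prefix forces the $\alpha_t^2$ to grow at least geometrically, so $\|A\|_F^2=\Theta(\alpha_{\max}^2)$; but in the $k=0$ regime the floor is $\eta=\eps\|A\|_F^2=\Theta(\alpha_{\max}^2/\ell)$, so only blocks within an $O(\ell)$ factor of $\alpha_{\max}^2$ --- i.e.\ only the last $O(\log \ell)$ of them under geometric growth --- clear the floor. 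You correctly identify this obstacle, but your proposed fix does not resolve it: after splitting into $\Omega(dk)+\Omega(d/\eps)$, the $\Omega(d/\eps)$ half faces exactly the same tension (a constant fraction of the $1/\eps$ blocks would have to simultaneously dominate their prefixes and sit above $\eps\|A\|_F^2$, which the calculation above rules out), and averaging over a random target block cannot help, since Augmented Indexing requires success on a typical index, not on a logarithmically small fraction of indices. A second, more local problem: recovering $v_j\in\{\pm 1/\sqrt{d}\}^d$ \emph{exactly} by rounding the top eigenvector requires $\|u-v_j\|_2<1/\sqrt{d}$, so Davis--Kahan demands a spectral gap of order $\sqrt{d}$ times the perturbation, inflating the dynamic range of the $\alpha_t$ to $d^{\Theta(\ell)}$; you would need to retreat to recovering only a constant fraction of the coordinates and invoke an advantage version of the Index bound (this weaker recovery is in fact what the paper does in its \emph{other} lower bound, Theorem~\ref{opt}, for projection error).

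For contrast, the paper proves this theorem by an entirely different and much shorter route that sidesteps all of this: a packing argument. It exhibits (via Kapralov--Talwar, building on Absil--Edelman--Koev) a family of $2^{\Omega(d\ell)}$ matrices whose Gram matrices are $\ell/4$-dimensional projections with pairwise spectral distance greater than $1/2$, while the permitted error $\|A-A_k\|_F^2/(\ell-k)$ equals $1/4$; by the triangle inequality the sketch $B$ identifies the input uniquely, so $B$ alone must encode $\Omega(d\ell)$ bits. This also yields a stronger conclusion than any communication reduction could: it lower-bounds the number of bits needed to represent any valid output $B$, for any algorithm whatsoever (streaming or offline), which is what the theorem as stated actually asserts.
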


\begin{theorem}\label{opt}
Assuming a constant word size, any randomized matrix approximation streaming algorithm  in the row-wise-updates model, 
which guarantees $\|A - \pi_B^k(A)\|^2_F \leq (1 + \eps)\|A - A_k\|^2_F$ and that succeeds with probability at least $2/3$, must use $\Omega(kd/\eps)$ space.
\end{theorem}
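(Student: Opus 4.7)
The plan is to prove the space lower bound by reducing from the one-way randomized communication complexity of Augmented Indexing, where Alice holds $s \in \{0,1\}^N$ with $N = \Theta(kd/\eps)$, Bob holds an index $j \in [N]$ together with $s_1, \ldots, s_{j-1}$, and Bob must output $s_j$ with probability $\ge 2/3$. Augmented Indexing requires $\Omega(N)$ bits in the one-way randomized model, so it suffices to show that any streaming algorithm meeting the projection guarantee lets Alice transmit the full string to Bob through the sketch state.

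First I would partition $s$ into $k$ blocks of length $m = \Theta(d/\eps)$ and encode each block as a unit vector from a binary code $\mathcal{C} \subset \{\pm 1/\sqrt{d}\}^d$ of size $2^{\Omega(d/\eps)}$ whose codewords have pairwise inner products bounded by $O(\sqrt{\eps})$; such a code exists by Gilbert--Varshamov. Let $v_1, \ldots, v_k$ be the $k$ resulting codewords. Alice builds the input matrix by streaming $k$ signal rows $\alpha_i v_i$ with geometrically decreasing weights $\alpha_1 > \cdots > \alpha_k$, followed by an isotropic tail of rows tuned so that $\|A - A_k\|_F^2$ is a small constant factor times $\alpha_k^2$. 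The calibration is chosen so that (a) the top-$k$ singular subspace of $A$ is exactly $\mathrm{span}\{v_1, \ldots, v_k\}$ with well-separated singular values, and (b) any rank-$k$ projector $\pi_B^k$ obeying the $(1+\eps)$ projection guarantee must recover a $1 - O(\eps)$ fraction of the energy along \emph{each} individual $v_i$, not merely in aggregate.

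Alice feeds $A$ to the streaming algorithm and transmits the memory state to Bob. Using his prior knowledge, Bob identifies the block $i^\star$ containing index $j$ together with $v_1, \ldots, v_{i^\star - 1}$, and extends the stream with additional rows $-\alpha_i v_i$ for $i < i^\star$ to cancel those signal components. The updated input $A'$ then has $v_{i^\star}$ as its top singular direction, and the algorithm's output $B'$ still satisfies $\|A' - \pi^k_{B'}(A')\|_F^2 \leq (1+\eps)\|A' - A'_k\|_F^2$. By step (b), the top direction of $\pi^k_{B'}$ must have inner product at least $1 - O(\eps)$ with $v_{i^\star}$; by the code's distance property $v_{i^\star}$ is the unique codeword in this inner-product ball, so Bob recovers block $i^\star$ and hence $s_j$. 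Yao's minimax principle converts the $2/3$-success randomized streaming algorithm into a protocol on the hard distribution, forcing $\Omega(N) = \Omega(kd/\eps)$ bits of memory, which equals $\Omega(kd/\eps)$ words under a constant word-size assumption.

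The main obstacle will be step (b): turning the global projection guarantee, which is only a bound on total Frobenius error, into \emph{per-direction} recovery for every $v_i$. Error slack from one direction must not be absorbable by another, which requires choosing the ratios $\alpha_i/\alpha_{i+1}$ and the tail magnitude carefully so that losing $\Omega(1)$ energy on any single $v_i$ would strictly violate the $(1+\eps)$ bound. The cancellation rows added by Bob must also be accounted for so they do not shift the error budget in $A'$ relative to $A$; verifying this demands the spectral bookkeeping at the heart of the construction of Woodruff~\cite{Woo14}.
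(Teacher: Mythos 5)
Your reduction has two concrete flaws, and the second one is an information-theoretic obstruction that caps your argument at $\Omega(kd)$ rather than $\Omega(kd/\eps)$.

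First, the mechanics of the peeling step fail in the row-update model. Appending a row $-\alpha_i v_i$ to the stream does not cancel the earlier row $\alpha_i v_i$: the relevant spectral object is $A'^TA' = \sum_j a_j a_j^T$, and the appended row contributes $+\alpha_i^2 v_i v_i^T$, doubling the energy along $v_i$ instead of removing it. Row cancellation is available in the turnstile/entry-update model but not here, so Bob cannot isolate $v_{i^\star}$ as the top direction of $A'$ by extending the stream. (This is repairable in principle --- e.g.\ by having Bob boost $v_{i^\star}$ rather than cancel the others --- but the next problem is not.)

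Second, and fatally, the encoding cannot carry $\Omega(kd/\eps)$ bits. You need each of the $k$ blocks, of length $m=\Theta(d/\eps)$, to be encoded injectively into a codeword of $\mathcal{C}\subset\{\pm 1/\sqrt{d}\}^d$, which forces $|\mathcal{C}|\ge 2^{\Theta(d/\eps)}$. But that set has only $2^d$ elements, and even allowing arbitrary unit vectors, a code whose members are pairwise distinguishable by a projector recovered only up to inner product $1-O(\eps)$ has at most $2^{O(d)}$ codewords. So each signal direction carries at most $O(d)$ bits and your construction tops out at $O(kd)$ bits total --- which is exactly the content of the paper's easy Lemma~\ref{lem:simple}, not Theorem~\ref{opt}. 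The structural point you are missing is where the $1/\eps$ factor comes from in the paper's proof: Alice's matrix has $ck/\eps$ rows (not $k$), each of the form $e_i + R_i$ with $R_i$ a random sign row carrying $\approx d$ bits, and Bob's contribution is to append $k$ standard basis vectors $e_{i_1},\dots,e_{i_k}$ that \emph{select} which $k$ of Alice's $ck/\eps$ rows become spectrally dominant (each selected pair $[e_i+R_i;e_i]$ has top squared singular value $2+\zeta>2$). Since Alice does not know which rows Bob will select, the sketch must retain information about all $ck/\eps$ of them, giving $\Omega((k/\eps)\cdot d)$. Bob then reads the signs of $R_{i_\ell}$ off $e_{i_\ell}^TP$, exploiting the fact that the optimal rank-one approximation of $[1,x;1,0]$ is close to $[1,x/2;1,x/2]$ rather than $[1,0;1,0]$. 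Any correct proof needs some mechanism like this by which the receiver's input multiplies the number of candidate "important" directions beyond $k$; a construction with only $k$ signal directions, however cleverly weighted, cannot reach $\Omega(kd/\eps)$.
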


Theorem~\ref{opt} claims that \FD is optimal with respect to the tradeoff between sketch size and resulting accuracy.
On the other hand, in terms of running time, \FD is not known to be optimal. 

\begin{theorem}\label{run}
The running time of \FD on input $A \in \R^{n \times d}$ and parameter $\ell$ is $O(nd\ell)$.
\FD is also ``embarrassingly parallel'' because its resulting sketch matrices constitute \emph{mergeable summaries~\cite{ACHPWY12}}.
\end{theorem}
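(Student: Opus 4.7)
}

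The plan is to handle the two claims separately: an amortized time analysis for the $O(nd\ell)$ bound, and a structural argument for mergeability that piggybacks on Theorems~\ref{add} and~\ref{mul}.

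\textbf{Running time.} I would first inspect the inner loop of \FD. Per incoming row, the non-SVD work (copying the row into a free slot of $B$, bookkeeping) is $O(d)$, summing to $O(nd)$ over the whole stream. The dominant cost is the periodic ``shrink'' step, where we take the SVD of the current $\ell \times d$ sketch $B$, subtract a common amount (essentially $\sigma_{\ell/2}^2$ or $\sigma_\ell^2$, depending on the variant) from every squared singular value, and rebuild $B$ from the truncated spectrum. A thin SVD of an $\ell \times d$ matrix with $\ell \le d$ costs $O(d\ell^2)$. The key observation is that each shrink step frees up $\Omega(\ell)$ empty rows of $B$ before the next shrink can be triggered, so the number of shrink steps across the whole stream is $O(n/\ell)$. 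Multiplying gives $O(n/\ell)\cdot O(d\ell^2)=O(nd\ell)$, which dominates the $O(nd)$ bookkeeping cost. I would present this as a short amortized argument, noting that the only slightly subtle point is that the shrink must indeed zero out a constant fraction of rows; this follows from the construction of the shrink amount.

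\textbf{Mergeability.} I would define the \emph{merge} of two $\ell \times d$ \FD sketches $B_1$ and $B_2$ as: vertically stack them into a $2\ell \times d$ matrix $C$, then apply exactly one \FD-style shrink step on $C$ to produce an $\ell \times d$ sketch $B$. To show this qualifies as a mergeable summary in the sense of~\cite{ACHPWY12}, I need to verify that $B$ satisfies the same guarantees as if \FD had been fed the concatenated stream $A=[A_1;A_2]$ directly. The core observation is that the error analysis behind Theorem~\ref{add} (and hence Theorem~\ref{mul}) is driven by two invariants on the running sketch: the PSD relation $A^TA \succeq B^TB$, and a budget bound on the total mass lost to shrinking, controlled by $\|A\|_F^2 - \|B\|_F^2$ and ultimately by $\|A-A_k\|_F^2$. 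For the stacked matrix $C$, one has $C^TC=B_1^TB_1+B_2^TB_2\preceq A_1^TA_1+A_2^TA_2=A^TA$, so the PSD invariant is inherited additively. The Frobenius mass lost across the combined history (the two original streams plus the new shrink on $C$) can be bounded by the same telescoping accounting used in the single-stream proof, because the shrink operation only ever decreases squared singular values uniformly and its effect on any direction is dominated by its effect on the Frobenius norm divided by $\ell-k$.

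\textbf{Main obstacle.} The time bound is a routine amortized accounting; the substantive step is the mergeability claim, specifically pushing the invariants used in Theorem~\ref{add} through a merge rather than only through single-row insertions. I would handle this by re-deriving the key inequality ``$\|Ax\|^2-\|Bx\|^2 \le \|A-A_k\|_F^2/(\ell-k)$'' for the merged sketch: write it as a telescoping sum over all shrink events (the shrinks internal to the constructions of $B_1$ and $B_2$, plus the final merge shrink), bound each term by the amount of Frobenius mass it destroys, and conclude by the same argument that gives the single-stream bound. Once this is in hand, $n$-way mergeability follows by induction, which yields the ``embarrassingly parallel'' claim: any binary tree of merges over a partition of the rows of $A$ produces a sketch satisfying Theorems~\ref{add} and~\ref{mul}.
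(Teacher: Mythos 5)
Your proposal is correct and follows essentially the same route as the paper: the $O(nd\ell)$ bound comes from the doubled-space variant (Algorithm~\ref{alg:freqDir2}) in which each $O(d\ell^2)$ shrink/SVD step zeroes out $\Omega(\ell)$ rows and hence occurs only $O(n/\ell)$ times, and mergeability is proved by stacking $B_1$ and $B_2$, applying one shrink, and re-verifying the three invariants (positive semidefinite domination, the per-direction error budget $\Delta$, and the Frobenius-mass bound $\ell\Delta$) with $\Delta = \Delta_1 + \Delta_2 + \delta$. This matches the paper's argument in Sections~\ref{run_forest_run} and~\ref{sec:parallelization}.
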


\subsection{Practical Implications}
\label{sec:implications}
Before we describe the algorithm itself, we point out how it can be used in practice.
As its name suggests, the \FI algorithm is often used to uncover frequent items in an item stream. 
Namely, if one sets $\ell > 1/\eps$, then any item that appears more than $\eps n$ times in the stream must appear in the final sketch. 
Similarly, \FD can be used to discover the space of unit vectors (directions) in space $x$ for which $\|Ax\|^2 \geq \eps\|A\|_F^2$, for example. 
This property makes \FD extremely useful in practice. 
In data mining, it is common to represent data matrices $A$ by their low rank approximations $A_k$.
But, choosing the right $k$ for which this representation is useful is not straight forward. 
If the chosen value of $k$ is too small the representation might be poor. 
If $k$ is too large, precious space and computation cycles are squandered.  
The goal is therefore to pick the minimal $k$ which provides an acceptable approximation.
To do this, as practitioners, we typically compute the top $k' \gg k$ singular vectors and values of $A$ (computing $\svd(A)$ partially).
We then keep only the $k$ singular vectors whose corresponding singular values are larger than some threshold value $t$. 
In other words, we only ``care about" unit vectors $x$ such that $\|Ax\| \geq t$. 
Using \FD we can invert this process. 
We can prescribe in advance the acceptable approximation $t$ and directly find a space containing those vectors $x$ for which $\|Ax\| \geq t$.
Thereby, not only enabling a one-pass or streaming application, but also circumventing the $\svd$ computation altogether.

\section{Frequent Directions}
The algorithm keeps an $\ell \times d$ sketch matrix $B$ that is updated every time a new row from the input matrix $A$ is added. 
The algorithm maintains the invariant that the last row of the sketch $B$ is always all-zero valued.
During the execution of the algorithm, rows from $A$ simply replace the all-zero valued row in $B$. 
Then, the last row is nullified by a two-stage process. 
First, the sketch is rotated (from the left) using its SVD such that its rows are orthogonal and in descending magnitude order. Then, the sketch rows norms are ``shrunk" so that at least one of them is set to zero. 

\begin{algorithm}
\caption{\label{alg:freqDir1} \FD}
\begin{algorithmic}
\STATE \textbf{Input:} $\ell, A \in R^{n \times d}$
\STATE $B \leftarrow 0^{\ell \times d}$
\FOR {$i \in 1, \ldots, n$}
	\STATE $B_\ell \leftarrow a_i$ \hfill \# $i$th row of $A$ replaced (all-zeros) $\ell$th row of $B$
	\STATE $[U,\Sigma, V] \leftarrow \svd(B)$ 
  	\STATE $C \leftarrow \Sigma V^T$ \hfill \# Not computed, only needed for proof notation
  	\STATE $\delta \leftarrow \sigma_{\ell}^2$
  	\STATE $B \leftarrow \sqrt{\Sigma^2 - \delta I_{\ell}} \cdot V^T$ \hfill \# The last row of $B$ is again zero
\ENDFOR
\STATE \textbf{return} $B$
\end{algorithmic}
\end{algorithm}

\subsection{Error Bound}
\label{sec:additive}
This section proves our main results for Algorithm~\ref{alg:freqDir1} which is our simplest and most space efficient algorithm.  
The reader will notice that we occasionally use inequalities instead of equalities at different parts of the proof to obtain three Properties.  
This is not unintentional. 
The reason is that we want the same exact proofs to hold also for Algorithm~\ref{alg:freqDir2} which we describe in Section~\ref{run_forest_run}.
Algorithm~\ref{alg:freqDir2} is conceptually identical to Algorithm~\ref{alg:freqDir1}, it requires twice as much space but is far more efficient.
Moreover, \emph{any algorithm} which produces an approximate matrix $B$ which satisfies the following facts (for any choice of $\Delta$) will achieve the error bounds stated in Lemma \ref{add} and Lemma \ref{mul}.  

In what follows, we denote by $\delta_i$, $B_{[i]}$, $C_{[i]}$ the values of $\delta$, $B$ and $C$ respectively \emph{after} the $i$th row of $A$ was processed.
Let $\Delta = \sum_{i=1}^n \delta_i$, be the total mass we subtract from the stream during the algorithm.
To prove our result we first prove three auxiliary properties.

\begin{property}\label{fact1}
For any vector $x$ we have $\|Ax\|^2 - \|Bx\|^2 \ge 0$.
\end{property}
\begin{proof}
Use the observations that $\langle a_i,x\rangle^2 + \|B_{[i-1]}x\|^2 = \|C_{[i]}x\|^2$.
\[
\|Ax\|^2 - \|Bx\|^2 = \sum_{i=1}^{n}[ \langle a_i,x\rangle^2 + \|B_{[i-1]}x\|^2 - \|B_{[i]}x\|^2] \ge \sum_{i=1}^{n} [\|C_{[i]}x\|^2 - \|B_{[i]}x\|^2]  \ge 0. \qedhere
\]
\end{proof}

\begin{property}\label{fact2}
For any unit vector $x \in \mathbb{R}^d$ we have 
$
\|Ax\|^2 - \|Bx\|^2 \le \Delta.
$
\end{property}
\begin{proof}To see this, first note that $\|C_{[i]} x\|^2 - \|B_{[i]}x\|^2 \le \|C_{[i]}^TC_{[i]} - B_{[i]}^TB_{[i]}\| \le  \delta_i$.
Now, consider the fact that $\|C_{[i]} x \|^2 = \|B_{[i-1]} x \|^2 + \|a_i x\|^2$. 
Substituting for $\|C_{[i]}x\|^2$ above and taking the sum yields
\begin{align}
\nonumber \sum_i \|C_{[i]}x\|^2 - \|B_{[i]}x\|^2 = \sum_i (\|B_{[i-1]} x \|^2 + \|a_i x\|^2) - \|B_{[i]}x\|^2 \\
\nonumber = \|Ax\|^2 + \|B_{[0]}x\|^2 - \|B_{[n]}x\|^2 = \|Ax\|^2 - \|Bx\|^2. 
\end{align}
Combining this with $\sum_i \|C_{[i]}x\|^2 - \|B_{[i]}x\|^2 \le \sum_i \delta_i = \Delta$ yields that $\|Ax\|^2 - \|Bx\|^2 \le \Delta$.
\end{proof}

\begin{property}\label{fact3} 
$\Delta \ell  \le \|A\|_F^2 - \|B\|_F^2$.
\end{property}
\begin{proof}
In the $i$th round of the algorithm $\|C_{[i]}\|^2_F \ge \|B_{[i]}\|^2_F + \ell \delta_i$ and $\|C_{[i]} \|^2_F = \|B_{[i-1]} \|^2_F + \|a_i\|^2$.  
By solving for $\|a_i\|^2$ and summing over $i$ we get
\[
\|A\|^2_F = \sum_{i=1}^n \|a_i\|^2 \le \sum_{i=1}^n \|B_{[i]}\|^2_F - \|B_{[i-1]}\|_F^2 + \ell \delta_i = 
\|B\|^2_F + \ell \Delta. \qedhere
\]
\end{proof}

Equipped with the above observations, and \emph{no additional requirements about the construction of $B$}, we can prove Lemma~\ref{add}.
Namely, that for any $k<\ell$, $\|A^TA - B^TB\|_2 \leq \|A - A_k\|_F^2/(\ell-k)$.
We use Property~\ref{fact2} verbatim and bootstrap Property~\ref{fact3} to prove a tighter bound on $\Delta$.
In the following, $y_i$ correspond to the singular vectors of $A$ ordered with respect to a decreasing corresponding singular values.
\begin{align*}
\Delta \ell	& \le \|A\|_F^2 - \|B\|_F^2 										& \text{via Property~\ref{fact3}} \\ 
		& = \sum_{i=1}^k \|A y_i\|^2 + \sum_{i=k+1}^d \|A y_i\|^2 - \|B\|_F^2 		&  \|A\|_F^2 = \sum_{i=1}^{d}\|Ay_i\|^2\\
		& = \sum_{i=1}^k \|A y_i\|^2 + \|A - A_k \|_F^2 - \|B\|_F^2 				& \\ 
		& \leq \|A - A_k \|_F^2 + \sum_{i=1}^k \left (\|A y_i\|^2 - \|B y_i\|^2 \right) 	& \text{$\sum_{i=1}^k \|B y_i\|^2 < \|B\|_F^2$} \\
		& \leq \|A - A_k \|_F^2 + k\Delta. 								& \text{via Property~\ref{fact2}}			
\end{align*}  
Solving $\Delta \ell \leq \|A - A_k\|_F^2 + k \Delta$ for $\Delta$ to obtain $\Delta \leq \|A - A_k\|_F^2 / (\ell - k)$, which combined with Property~\ref{fact1} and Property~\ref{fact2} proves Theorem \ref{add}:  for any unit vector $x$ we have
\[
0 \leq \|A x\|^2 - \|B x\|^2 \leq \Delta \leq \|A-A_k\|_F^2 / (\ell-k).  
\] 

Now we can show that projecting $A$ onto $B_k$ provides a relative error approximation.  
Here, $y_i$ correspond to the singular vectors of $A$ as above and $v_i$ to the singular vectors of $B$ in a similar fashion.
\begin{align*}
\|A - \pi_{B}^{k}(A)\|_F^2  	&=  \|A\|_F^2 - \|\pi_{B}^{k}(A)\|_F^2 = \|A\|_F^2 - \sum_{i=1}^k \|A v_i\|^2				& \text{Pythagorean theorem} \\ 
					& \leq  \|A\|_F^2 - \sum_{i=1}^k\|B v_i\|^2 										& \text{via Property~\ref{fact1}} \\
					& \leq  \|A\|_F^2 - \sum_{i=1}^k \|B y_i\|^2    									& \text{since $\sum_{i=1}^j \|B v_i\|^2 \geq \sum_{i=1}^j \|B y_i\|^2$} \\ 
					&\leq \|A\|_F^2 - \sum_{i=1}^k (\|A y_i\|^2 - \Delta)  								& \text{via Property~\ref{fact2}} \\ 
					& = \|A\|_F^2 - \|A_k\|_F^2 + k\Delta 											& \\ 
					&\leq \|A - A_k\|_F^2 + \frac{k}{\ell-k} \|A - A_k\|_F^2 								& \text{ by } \Delta \leq \|A - A_k\|_F^2 / (\ell - k) \\
					& = \frac{\ell}{\ell-k} \|A - A_k\|_F^2.  											&
\end{align*}
This concludes the proof of Theorem~\ref{mul}.
It is convenient to set $\ell = \lceil k + k/\eps \rceil$ which results in the standard bound form $\|A - \pi_{B}^{k}(A)\|_F^2 \leq (1+\eps) \|A - A_k\|_F^2$.

\section{Running Time Analysis}\label{run_forest_run}
Each iteration of Algorithm~\ref{alg:freqDir1} is dominated by the computation of the $\svd(B)$.
The standard running time of this operation is $O(d\ell^2)$ \cite{golub2012matrix}. 
Since this loop is executed once per row in $A$ the total running time would na\"ively be $O(nd\ell^2)$.
However, note that the sketch matrix $B$ actually has a very special form. 
The first $\ell-1$ rows of $B$ are always orthogonal to one another. 
This is a result of the sketch having been computed by an $\svd$ in the previous iteration.
Computing the $\svd$ of this matrix is possible in $O(d\ell)$ time using the Gu-Eisenstat procedure \cite{gu93stable}.
This requires using the Fast Multiple Method (FMM) and efficient multiplication of Cauchy matrices by vectors which is, unfortunately, far from being straight forward.
It would have been convenient to use a standard $\svd$ implementation and still avoid the quadratic term in $\ell$ in the running time.
We show below that this is indeed possible at the expense of doubling the space used by the algorithm.
Algorithm~\ref{alg:freqDir2} gives the details.

\begin{algorithm}
\caption{\label{alg:freqDir2} \textsc{Fast}-\FD}
\begin{algorithmic}
\STATE \textbf{Input:} $\ell, A \in R^{n \times d}$
\STATE $B \leftarrow$ all zeros matrix $\in R^{2\ell \times d}$ 
\FOR {$i \in 1,\ldots, n$}
  \STATE  Insert $a_i$ into a zero valued row of $B$
  \IF {$B$ has no zero valued rows}
  	\STATE  $[U,\Sigma, V] \leftarrow \svd(B)$
  	\STATE $C = \Sigma V^T$ \hfill \# Only needed for proof notation
  	\STATE $\delta \leftarrow \sigma_{\ell}^2$ 
  	\STATE $B \leftarrow \sqrt{\max(\Sigma^2 - I_{\ell}\delta , 0)} \cdot V^T$ \hfill \# The last $\ell+1$ rows of $B$ are zero valued.
  \ENDIF 
\ENDFOR
\STATE \textbf{return} $B$
\end{algorithmic}
\end{algorithm}

Note that in Algorithm~\ref{alg:freqDir2} the $\svd$ of $B$ is computed only $n/(\ell +1)$ times because the ``if" statement is only triggered once every $\ell+1$ iterations.
Thereby exhibiting a total running time of $O((n/\ell) d\ell^2) = O(nd\ell)$. 
The reader should revisit the proofs in Section~\ref{sec:additive} and observe that they still hold.
Consider the values of $i$ for which the ``if'' statement is triggered.
It still holds that $0 \preceq C_{[i]}^TC_{[i]} - B_{[i]}^TB_{[i]} \preceq \delta I_d$ and that $\|C_{[i]}\|_F^2 - \|B_{[i]}\|_F^2 \ge \ell\delta$.
For the other values of $i$, the sketch simply aggregates the input rows and there is clearly no incurred error in doing that.
This is sufficient for the same analysis to go through and complete our discussion on the correctness of Algorithm~\ref{alg:freqDir2}.

\subsection{Parallelization and Merging Sketches}
\label{sec:parallelization}
In extremely large datasets, the processing is often distributed among several machines.  
Each machine receives a disjoint input of raw data and is tasked with creating a small space summary.  Then to get a global summary of the entire data, these summaries need to be combined.  
The core problem is illustrated in the case of just two machines, each process a data set $A_1$ and $A_2$, where $A = [A_1; A_2]$, and create two summaries $B_1$ and $B_2$, respectively.   Then the goal is to create a single summary $B$ which approximates $A$ using only $B_1$ and $B_2$.  
If $B$ can achieve the same formal space/error tradeoff as each $B_1$ to $A_1$ in a streaming algorithm, then the summary is called a \emph{mergeable summary}~\cite{ACHPWY12}.  

Here we show that the \FD sketch is indeed mergeable under the following procedure.  Consider $B' = [B_1; B_2]$ which has $2 \ell$ rows; then run \FD (in particular Algorithm \ref{alg:freqDir2}) on $B'$ to create sketch $B$ with $\ell$ rows.  Given that $B_1$ and $B_2$ satisfy Facts \ref{fact1}, \ref{fact2}, and \ref{fact3} with parameters $\Delta_1$ and $\Delta_2$, respectively, we will show that $B$ satisfies the same facts with $\Delta = \Delta_1 + \Delta_2 + \delta$, where $\delta$ is taken from the single shrink operation used in Algorithm \ref{alg:freqDir2}.  This implies $B$ automatically inherits the bounds in Theorem \ref{add} and Theorem \ref{mul} as well.   

First note that $B'$ satisfies all facts with $\Delta' = \Delta_1 + \Delta_2$, by additivity of squared  spectral norm along any direction $x$ (e.g. $\|B_1 x\|^2 + \|B_2 x\|^2 = \|B' x\|^2$)
 and squared Frobenious norms (e.g. $\|B_1\|_F^2 + \|B_2\|_F^2 = \|B'\|_F^2$), but has space twice as large as desired.  	
Property~\ref{fact1} is straight forward for $B$ since $B$ only shrinks in all directions in relation to $B'$.  
For Property~\ref{fact2} follows by considering any unit vector $x$ and expanding 
\[
\|Bx\|^2 \geq \|B'x\|^2 - \delta \geq \|Ax\|^2 - (\Delta_1 + \Delta_2) -\delta = \|Ax\|^2 - \Delta.
\]
Similarly, Property~\ref{fact3} can be seen as 
\[
\|B\|_F^2 \leq \|B'\|_F^2 - \delta \ell \leq \|A\|_F^2 - (\Delta_1 + \Delta_2) \ell - \delta \ell = \|A\|_F^2 - \Delta \ell.
\]  

This property trivially generalizes to any number of partitions of $A$. It is especially useful when the matrix (or data) is distributed across many machines. 
In this setting, each machine can independently compute a local sketch. These sketches can then be combined in an arbitrary order using \FD.

\subsection{Worst case update time}
The total running time of the Algorithm~\ref{alg:freqDir2} is $O(nd\ell)$ and the \emph{amortized} running time per row update is $O(d\ell)$.
However, the worst case update time is still $\Omega(d\ell^2)$ in those cases where the $\svd$ is computed.
Using the fact that \FD sketches are mergeable, we can actually use a simple trick to guarantee a worst case $O(d\ell)$ update time.
The idea is to double the space usage (once again) and hold two sketches, one in `active' mode and one in svd `maintenance' mode.
For any row in the input, we first add it to the active sketch and then spend $O(d\ell)$ floating point operations in completing the $\svd$ of the sketch in maintenance mode.
After $\ell$ updates, the active sketch runs out of space and must go into maintenance mode.
But, in the same time, a total of $O(d\ell^2)$ floating point operations were invested in the inactive sketch which completed its $\svd$ computation.
At this point, we switch the sketch roles and continue. 
Once the entire matrix was processed, we combine the two sketches using their mergeable property.

\section{Space Lower Bounds}
\label{sec:lowerbounds}
In this section we show that \FD is space optimal with respect to the guaranteed accuracy.  
We present nearly-matching lower bounds for each case.  We show the number of \emph{bits} needed is equivalent to $d$ times the number of rows \FD requires.  
We first prove Theorem \ref{opt-add} showing the covariance error bound in Theorem \ref{add} is nearly tight, regardless of streaming issues.

\begin{theorem}\label{lemlem}
Let $B$ be a $\ell \times d$ matrix approximating a $n \times d$ matrix $A$ such that $\|A^TA - B^TB\|_2 \leq \|A-A_k\|_F^2/(\ell-k)$. 
For any algorithm with input as an $n \times d$ matrix $A$, the space complexity of representing $B$ is $\Omega (d\ell)$ bits of space.
\end{theorem}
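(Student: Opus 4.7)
My plan is an encoding/counting argument: exhibit a family $\mathcal{F}$ of $n \times d$ input matrices with $|\mathcal{F}| = 2^{\Omega(d\ell)}$ such that no single $\ell \times d$ sketch $B$ can satisfy the covariance bound for two distinct members of $\mathcal{F}$; this forces any encoding of $B$ to use at least $\log_2 |\mathcal{F}| = \Omega(d\ell)$ bits. The first step is to specialize the guarantee to the most restrictive choice $k = \ell - 1$, which collapses the promised error to the single scalar $\|A - A_{\ell-1}\|_F^2 = \sigma_\ell(A)^2$, the squared smallest singular value of $A$.

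I then construct $\mathcal{F}$ from a packing of the Grassmannian, working in the nontrivial regime $\ell \le d/2$. Fix a constant $M > 2$. Let $\mathcal{P}$ be a packing of $(\ell-1)$-dimensional subspaces of $\mathrm{span}(e_1, \ldots, e_{d-1}) \subset \R^d$ with pairwise projection-spectral distance $\|P_W - P_{W'}\|_2 \ge 1/2$; a standard volume bound on $G(\ell-1, d-1)$ yields $|\mathcal{P}| \ge 2^{\Omega((\ell-1)(d-\ell))} = 2^{\Omega(d\ell)}$. For each $W \in \mathcal{P}$ pick an orthonormal basis $w_1, \ldots, w_{\ell-1}$ of $W$ (viewed as vectors in $\R^d$ whose last coordinate is zero) and set
\[
A_W \;=\; \begin{pmatrix} M\, w_1^T \\ \vdots \\ M\, w_{\ell-1}^T \\ e_d^T \end{pmatrix},
\]
an $\ell \times d$ matrix with singular values $M, \ldots, M, 1$. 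Thus $\sigma_\ell(A_W)^2 = 1$ and $A_W^T A_W = M^2 P_W + e_d e_d^T$, where the $e_d e_d^T$ term is independent of $W$.

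The distinguishing step is then immediate: if a single sketch $B$ satisfied the covariance bound for both $A_W$ and $A_{W'}$ with $W \ne W'$, the triangle inequality would force $\|A_W^T A_W - A_{W'}^T A_{W'}\|_2 \le 2$, whereas the construction gives $\|A_W^T A_W - A_{W'}^T A_{W'}\|_2 = M^2 \|P_W - P_{W'}\|_2 \ge M^2/2 > 2$ for $M > 2$, a contradiction. Hence distinct members of $\mathcal{F}$ demand distinct sketches, so representing $B$ requires at least $\log_2 |\mathcal{F}| = \Omega(d\ell)$ bits. I expect the main technical obstacle to be verifying the Grassmannian packing bound with constant spectral separation and cardinality $2^{\Omega((\ell-1)(d-\ell))}$; this is a standard but slightly delicate volume/covering computation, which I would state precisely and cite rather than rederive.
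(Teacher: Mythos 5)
Your proof is correct and is essentially the paper's argument: both reduce to a $2^{\Omega(d\ell)}$ packing of the Grassmannian under the spectral distance of projections (the paper cites Kapralov--Talwar for exactly this), turn each subspace into an input matrix whose guaranteed covariance error is a small constant, and conclude via the triangle inequality that a single sketch $B$ can be consistent with at most one family member, so representing $B$ takes $\log_2|\mathcal{F}|=\Omega(d\ell)$ bits. The only differences are in the instantiation and are mildly restrictive on your side: the paper takes $A^TA$ to be an $\ell/4$-dimensional projection, which makes $\|A-A_k\|_F^2/(\ell-k)\le 1/4$ for \emph{every} $k<\ell$ simultaneously (so the bound holds even if the guarantee is only promised at one arbitrary $k$) and keeps the packing exponent $\Omega(d\ell)$ for all $\ell\le d$, whereas your construction certifies the lower bound only when the guarantee is invoked at $k=\ell-1$ (or holds for all $k$, as the paper's algorithm guarantees) and needs $\ell\le d/2$ so that $(\ell-1)(d-\ell)=\Omega(d\ell)$.
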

\begin{proof}
For intuition, consider the set of matrices $\mathcal{A}$ such that for all $A \in \mathcal{A}$ we have $A^TA$ is an $\ell/4$ dimensional projection matrix. For such matrices $\|A^TA\| = 1$ and $\|A-A_k\|_F^2/(\ell-k) = 1/4$.
The condition $\|A^TA - B^TB\|_2 \leq 1/4$ means that $B^TB$ gets ``close to'' $A^TA$ which should intuitively require roughly $\Theta(d\ell)$ bits (which are also sufficient to represent $A^TA$).

To make this argument hold mathematically, consider a set of matrices $\mathcal{A}$ such that $\|A_i^TA_i - A_j^TA_j\| > 1/2$ for all $A_i,A_j \in \mathcal{A}$.
Consider also that the sketching algorithm computes $B$ which corresponds to $A_i \in \mathcal{A}$. 
Since  $\|A^TA - B^TB\|_2 \leq 1/4$ and $\|A_i^TA_i - A_j^TA_j\| > 1/2$ there could be only one index such that $\|A_i^TA_i - B^TB\| \leq 1/4$ by the triangle inequality.
Therefore, since $B$ indexes uniquely into $\mathcal{A}$, it must be that encoding $B$ requires at least $\log(|\mathcal{A}|)$ bits.
To complete the proof we point out that there exists a set $\mathcal{A}$ for which $\|A_i^TA_i - A_j^TA_j\| > 1/2$ for all $A_i,A_j \in \mathcal{A}$ and  $\log(|\mathcal{A}|) = \Omega(d\ell)$. This is proven by Kapralov and Talwar \cite{KapralovT13} using a result of Absil, Edelman, and Koev \cite{AbsEdeKoe2006}.
In \cite{KapralovT13} Corollary $5.1$, setting $\delta=1/2$ and $k=\ell/4$ (assuming $d \ge \ell)$ yields that $|\mathcal{A}| = 2^{\Omega(\ell d)}$ and completes the proof.
\end{proof}
The consequence of Theorem~\ref{lemlem} is that the space complexity of \FD is optimal regardless of streaming issues.
In other words, any algorithm satisfying $\|A^TA - B^TB\|_2 \leq \|A-A_k\|_F^2/(\ell-k)$ must use space $\Omega(\ell d)$ since this is the information lower bound for representing $B$; or any algorithm satisfying $\|A^TA - B^TB\|_2 \leq \eps \|A - A_k\|_F^2$ must have $\ell = \Omega(k + 1/\eps)$ and hence use $\Omega(d \ell) = \Omega(dk + d/\eps)$ space.

Next we will prove Theorem \ref{opt}, showing the subspace bound $\|A - \pi_B^k(A)\|_F^2 \leq (1+\eps) \|A- A_k\|_F^2$ preserved by Theorem \ref{mul} with $\ell = k+ k/\eps$ rows, and hence $O(kd/\eps)$ space, is also nearly tight in the row-update streaming setting.  
In fact it shows that finding $B$ such that $\|A - A B^\dagger B\|_F = \|A - \pi_B(A)\|_F \leq (1+\eps) \|A - A_k\|_F$ requires $\Omega(dk/\eps)$ space in a row-update streaming setting.  
This requires careful invocation of more machinery from communication complexity common in streaming lower bounds, and hence we start with a simple and weaker lower bound, and then some intuition before providing the full construction.

\subsection{A Simple Lower Bound}
We start with a simple intuitive lemma showing an $\Omega(kd)$ 
lower bound, which we will refer to. We then prove our main $\Omega(kd/\epsilon)$ lower bound. 
\begin{lemma}\label{lem:simple}
Any streaming algorithm which, for every input $A$, with constant probability (over its internal randomness) succeeds in   
outputting a matrix $R$ for which $\|A- A R^{\dagger}R\|_F \leq (1+\eps)\|A-A_k\|_F$ must
use $\Omega(kd)$ bits of space. 
\end{lemma}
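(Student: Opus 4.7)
The plan is to give an elementary information-theoretic lower bound on the output $R$ itself: for a single rank-$k$ input the hypothesis forces $R$ to have rank at least $k$, so $R$ already occupies $\Omega(kd)$ bits of memory, and hence the algorithm's space is at least $\Omega(kd)$.

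Concretely, I would take the hard instance $A$ to be any fixed rank-$k$ matrix, for example $A = \bigl( I_k \mid 0 \bigr) \in \R^{k \times d}$. Then $A_k = A$ and $\|A - A_k\|_F = 0$, so the hypothesis $\|A - A R^{\dagger} R\|_F \le (1+\eps)\|A - A_k\|_F$ collapses to $\|A - A R^{\dagger} R\|_F = 0$, which is equivalent to the row space of $R$ containing that of $A$, and in particular forces $\rank(R) \ge k$. Any matrix in $\R^{\ell \times d}$ of rank at least $k$ has at least $k$ linearly independent rows, so specifying it (even just up to its row space) takes at least $kd$ real entries, i.e.\ $\Omega(kd)$ bits under the constant-word-size convention of Section~\ref{sec:lowerbounds}.

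For the randomized streaming aspect I would invoke the convention (matching the proof of Theorem~\ref{lemlem}) that ``space'' measures the bits used to represent the algorithm's internal state together with its output $R$. On the success event, of probability at least $2/3$, the algorithm emits a valid $R$ with $\rank(R) \ge k$, whose representation already occupies $\Omega(kd)$ bits of memory; since the worst-case space upper-bounds any particular run and this event has positive probability, the worst-case space must itself be $\Omega(kd)$.

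The main obstacle is really only articulating why the output $R$ is charged to the space budget, since otherwise one could emit a fixed matrix such as $R = I_d$ using $O(1)$ internal state and trivially ``cover'' every possible rank-$k$ input. This convention is standard in the matrix-sketching lower-bound literature, and is exactly the one already used in Theorem~\ref{lemlem}, so I would simply point to it. With that in place, the lemma reduces to the one-line rank-counting argument above, which is why it is flagged as ``simple''. A more robust but more laborious backup, should one wish to avoid the convention, would be a Yao-plus-Grassmannian packing argument in the style of Theorem~\ref{lemlem}, using the \cite{KapralovT13} packing to distinguish $2^{\Omega(kd)}$ well-separated $k$-dimensional subspaces of $\R^d$ and then showing that an output of at most $O(s/d)$ rows can ``cover'' at most $2^{O(ks/d)}$ of them.
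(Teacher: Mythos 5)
Your main argument has a genuine gap, and it is the central one. A space lower bound for a streaming algorithm must come from exhibiting \emph{many} inputs that the algorithm is forced to distinguish; it cannot come from a single fixed input together with a claim that the output matrix ``occupies $\Omega(kd)$ bits.'' The step ``any matrix of rank at least $k$ has at least $k$ linearly independent rows, so specifying it takes at least $kd$ real entries, i.e.\ $\Omega(kd)$ bits'' is false as a counting argument: the number of bits needed to represent an object is the logarithm of the number of possibilities it ranges over, not its number of entries. Your own hard instance $A = (I_k \mid 0)$ illustrates this --- the valid output $R = (I_k \mid 0)$ has rank $k$ and $kd$ entries, yet an algorithm hardwired to this one input can store and emit it with $O(\log kd)$ bits of state. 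Charging the output to the space budget does not rescue the argument, because a structured or hardcoded output is still cheap to represent. (Separately, the worry you raise about $R = I_d$ trivially covering every input is resolved not by an output-charging convention but by the implicit restriction, used throughout Section~\ref{sec:lowerbounds}, that $R^\dagger R$ is a rank-$k$ projection, so exact recovery pins down the row space of $R$ to equal that of $A$.)

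The paper's proof does exactly what your main argument omits: it constructs $2^{\Omega(kd)}$ rank-$k$ inputs with pairwise distinct row spaces --- binary $k \times d$ matrices spanning distinct $k$-dimensional subspaces of $GF(2)^d$, shown via a gcd/parity argument to remain distinct as subspaces of $\R^d$ --- and observes that since $\|A - A_k\|_F = 0$, success forces the row space of $R$ to equal that of $A$, so the output determines the input. A mutual-information inequality, $|R| \geq I(R;A \mid Z) \geq (2/3) I(R;A \mid Z=1) = \Omega(dk)$, then handles the constant success probability. Your ``backup'' Grassmannian-packing sketch is in the right spirit and could be made to work along these lines, but as written it is not developed (the claimed covering bound for an $O(s/d)$-row output is not justified and is not the step actually needed), so the proposal as a whole does not constitute a correct proof.
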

\begin{proof}
Let $\mathcal{S}$ be the set of $k$-dimensional subspaces over the vector space $GF(2)^d$, where $GF(2)$ denotes the finite
field of $2$ elements with the usual modulo $2$ arithmetic. The cardinality of $\mathcal{S}$
is known \cite{ln86} to be 
$$\frac{(2^d-1)(2^d-1) \cdots (2^{d-k+1}-1)}{(2^k-1)(2^{k-1}-1) \cdots 1} \geq 2^{dk/2 - k^2} \geq 2^{dk/6},$$
where the inequalities assume that $k \leq d/3$. 

Now let $A^1$ and $A^2$ be two $k \times d$ 
matrices with entries in $\{0,1\}$ whose rows span two different $k$-dimensional subspaces of $GF(2)^d$. 
We first claim that the rows also span two different $k$-dimensional subspaces of $\mathbb{R}^d$. 
Indeed, consider a vector $v \in GF(2)^d$
which is in the span of the rows of $A^1$ but not in the span of the rows of $A^2$. If $A^1$ and $A^2$ had the same row
span over $\mathbb{R}^d$, then $v = \sum_i w_i A^2_i$, where the $w_i \in \mathbb{R}$ and $A^2_i$ denotes the $i$-th row of $A^2$. 
Since $v$ has integer coordinates and the $A^2_i$ have integer coordinates, we can assume the $w_i$ are rational, since the
irrational parts must cancel. By scaling by the least common multiple of the denominators of the $w_i$, we obtain that
\begin{eqnarray}\label{eqn:relation} 
\alpha \cdot v = \sum_i \beta_i A^2_i,
\end{eqnarray}
where $\alpha, \beta_1, \ldots, \beta_k$ are integers. We can assume that the greatest
common divisor (gcd) of $\alpha, \beta_1, \ldots, \beta_k$ is $1$, 
otherwise the same conclusion holds after we divide $\alpha, \beta_1, \ldots, \beta_k$
by the gcd. Note that (\ref{eqn:relation}) implies that $\alpha v = \sum_i \beta_i A^2_i \bmod 2$, i.e., when we take
each of the coordinates modulo $2$. Since the $\beta_i$ cannot all be divisible by $2$ (since $\alpha$ would then be odd
and so by the gcd condition 
the left hand side would contain a vector with at least one odd coordinate, contradicting that the right hand
side is a vector with even coordinates), and the rows of $A^2$ form a basis
over $GF(2^d)$, the right hand side must be non-zero, which implies that $\alpha = 1 \bmod 2$. This implies that $v$ is
in the span of the rows of $A^2$ over $GF(2^d)$, a contradiction.  

It follows that there are at least $2^{dk/6}$ distinct $k$-dimensional subspaces of $\mathbb{R}^d$ spanned by the rows of the set
of binary
$k \times d$ matrices $A$. For each such $A$, $\|A-A_k\|_F = 0$ and so the row span of $R$ must agree with the row span of $A$
if the streaming algorithm succeeds. It
follows that the output of the streaming algorithm can be used to encode $\log_2 2^{dk/6} = \Omega(dk)$ bits of information. Indeed,
if $A$ is chosen at random from this set of at least $2^{dk/6}$ binary matrices, and $Z$ is a bit indicating if the streaming
algorithm succeeds, then 
$$|R| \geq H(R) \geq I(R ; A | Z) \geq (2/3) I(R ; A \mid Z = 1) \geq (2/3)(dk/6) = \Omega(dk),$$
where $|R|$ denotes the expected length of the encoding of $R$, $H$ is the entropy function, and $I$ is the mutual information. 
For background on information theory, see \cite{ct91}. This completes the proof. 
\end{proof}

\subsection{Intuition for Main Lower Bound}
The only other lower bounds for streaming algorithms for low rank approximation that we know of are due to 
Clarkson and Woodruff \cite{cw09}. As in their work, we use the {\sf Index} problem in communication complexity to establish
our bounds, which is a communication game between two players Alice and Bob, holding a string $x \in \{0,1\}^r$ and an 
index $i \in [r] \eqdef \{1, 2, \ldots, r\}$, respectively. In this game Alice sends a single message to Bob who should output
$x_i$ with constant probability. It is known (see, e.g., \cite{knr99}) that this problem requires Alice's message to be
$\Omega(r)$ bits long. If {\sf Alg} is a streaming algorithm for low rank approximation, and Alice can create a matrix
$A_x$ while Bob can create a matrix $B_i$ (depending on their respective inputs $x$ and $i$), 
then if from the output of {\sf Alg} on the concatenated matrix $[A_x ; B_i]$ Bob can output $x_i$ with constant probability,
then the memory required of {\sf Alg} is $\Omega(r)$ bits, since Alice's message is the state of {\sf Alg}
after running it on $A_x$. 

The main technical challenges are thus in showing how to choose $A_x$ and $B_i$, as well as showing how the output of {\sf Alg}
on $[A_x ; B_i]$ can be used to solve {\sf Index}. This is where our work departs significantly from that of Clarkson and
Woodruff \cite{cw09}. Indeed, a major challenge is that in Theorem \ref{opt}, we only require the output to be the matrix $R$, whereas
in Clarkson and Woodruff's work from the output one can reconstruct $AR^{\dagger} R$. This causes technical complications,
since there is much less information in the output of the algorithm to use to solve the communication game. 

The intuition behind the proof of Theorem \ref{opt} is that given a $2 \times d$ matrix $A = [1, x; 1, 0^d]$, 
where $x$ is a random unit vector, then if $P = R^{\dagger}R$ is a sufficiently good projection matrix for the low rank approximation
problem on $A$, 
then the second row of $AP$ actually reveals a lot of information about $x$. This may be counterintuitive at first, since one may
think that $[1, 0^d; 1, 0^d]$ is a perfectly good low rank approximation. However, it turns out that $[1, x/2; 1, x/2]$ is a much
better low rank approximation in Frobenius norm, and even this is not optimal. Therefore, Bob, who has $[1, 0^d]$ {\it together}
with the output $P$, can compute the second row of $AP$, which necessarily reveals a lot of information about $x$ (e.g., if
$AP \approx [1, x/2 ; 1, x/2]$, its second row would reveal a lot of information about $x$), and therefore one could hope to embed
an instance of the {\sf Index} problem into $x$.  Most of the technical work 
is about reducing the general problem to this $2 \times d$ primitive problem.

\subsection{Proof of Main Lower Bound for Preserving Subspaces}
Now let $c > 0$ be a small constant to be determined. 
We consider the following two player problem between Alice and Bob: Alice has a $c k/\eps \times d$ matrix 
$A$ which can be written as a block matrix
$[I, R]$, where $I$ is the $c k/\eps \times c k/\eps$ identity matrix, and $R$ is a $c k/\eps \times (d-c k/\eps)$
matrix in which the entries are in $\{-1/(d-c k/\eps)^{1/2}, +1/(d-c k/\eps)^{1/2}\}$. Here $[I, R]$ means we append the
columns of $I$ to the left of the columns of $R$; see Figure \ref{fig:lb}. 
Bob is given a set of $k$ standard unit vectors $e_{i_1}, \ldots, e_{i_k}$, for 
distinct $i_1, \ldots, i_k \in [c k/\eps] = \{1, 2, \ldots, c k/\eps\}$. Here we need $c/\eps > 1$, but we can assume
$\eps$ is less than a sufficiently small constant, as otherwise we would just need to prove an $\Omega(kd)$ lower bound, 
which is established by Lemma \ref{lem:simple}. 

Let $B$ be the matrix $[A; e_{i_1}, \ldots, e_{i_k}]$ obtained by stacking $A$ on top of the vectors $e_{i_1}, \ldots, e_{i_k}$. 
The goal is for Bob to output a rank-$k$ projection matrix $P \in \mathbb{R}^{d \times d}$ for which
$\|B-BP\|_F \leq (1+\eps)\|B-B_k\|_F$. 

Denote this problem by $f$. We will show the randomized $1$-way communication complexity of this problem $R^{1-way}_{1/4}(f)$, 
in which Alice sends a single message to Bob and Bob
fails with probability at most $1/4$, is $\Omega(kd/\eps)$ bits. More precisely, let $\mu$ be the following product
distribution on Alice and Bob's inputs: the entries of $R$ are chosen independently and uniformly at random in 
$\{-1/(d-ck/\eps)^{1/2}, +1/(d-ck/\eps)^{1/2}\}$,
while $\{i_1, \ldots, i_k\}$ is a uniformly random set among all sets of $k$ distinct indices in $[ck/\eps]$. 
We will show that $D^{1-way}_{\mu, 1/4}(f) = \Omega(kd/\eps)$, where $D^{1-way}_{\mu, 1/4}(f)$ denotes the minimum communication cost
over all deterministic $1$-way (from Alice to Bob) protocols which fail with probability at most $1/4$ when the inputs
are distributed according to $\mu$. By Yao's minimax principle (see, e.g., \cite{kn97}), $R^{1-way}_{1/4}(f) \geq D^{1-way}_{\mu,1/4}(f)$. 

We use the following two-player problem {\sf Index} in order to lower bound $D^{1-way}_{\mu,1/4}(f)$. 
In this problem Alice is given a string $x \in \{0,1\}^r$,
while Bob is given an index $i \in [r]$. Alice sends a single message to Bob, who needs to output $x_i$ with probability at
least $2/3$. Again by Yao's minimax principle, we have that $R^{1-way}_{1/3}({\sf Index}) \geq D^{1-way}_{\nu,1/3}({\sf Index})$, 
where $\nu$ is the distribution for which $x$ and $i$ are chosen independently and uniformly at random from
their respective domains. The following is well-known.

\begin{fact}\label{fact:index}\cite{knr99}
$D^{1-way}_{\nu,1/3}({\sf Index}) = \Omega(r).$
\end{fact}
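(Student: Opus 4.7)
The plan is to prove the information-theoretic lower bound on the one-way distributional complexity of \textsf{Index}. Let $M$ denote Alice's (possibly randomized) message, viewed as a random variable when $(x,i) \sim \nu$. Since the expected bit-length of any encoding of $M$ is at least its Shannon entropy $H(M)$, it suffices to show $H(M) = \Omega(r)$. Two structural facts drive the argument: (i) Alice does not see $i$, so $M$ is a function of $x$ alone, and in particular $M$ is independent of $i$; and (ii) under $\nu$, the bits $x_1,\dots,x_r$ are i.i.d.\ uniform.

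First I would unfold $H(M) \geq I(M;x)$ using the chain rule and then strip the inner conditioning:
\[
H(M) \;\geq\; I(M;x) \;=\; \sum_{j=1}^r I(M;\, x_j \mid x_{<j}) \;\geq\; \sum_{j=1}^r I(M;\, x_j),
\]
where the final inequality uses independence of the coordinates (so $H(x_j\mid x_{<j}) = H(x_j) = 1$) together with the monotonicity $H(x_j\mid M,x_{<j}) \leq H(x_j\mid M)$. This reduces the global entropy lower bound to a sum of per-coordinate mutual informations.

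Next I would apply Fano's inequality coordinate by coordinate. Fix $j\in[r]$; because $i$ is independent of $(M,x)$, conditioning on $i=j$ leaves the joint law of $(M,x_j)$ unchanged. On the event $i=j$, Bob's output is a deterministic function $\hat x_j(M)$ of the message alone, with some error probability $p_j$. Fano's inequality applied to the binary variable $x_j$ then yields $H(x_j\mid M) \leq H_2(p_j)$, whence $I(M;x_j) \geq 1 - H_2(p_j)$. Summing across $j$ and applying Jensen's inequality to the concave binary entropy gives
\[
H(M) \;\geq\; \sum_{j=1}^r \bigl(1 - H_2(p_j)\bigr) \;\geq\; r\bigl(1 - H_2(\bar p)\bigr),
\]
where $\bar p = \tfrac{1}{r}\sum_j p_j$ is precisely Bob's overall failure probability under $\nu$, hence at most $1/3$. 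Since $H_2(1/3)$ is a fixed constant strictly less than one, this yields $H(M) \geq r\bigl(1 - H_2(1/3)\bigr) = \Omega(r)$, as required.

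The main subtlety is bridging the gap between an \emph{average-case} success guarantee (Bob is correct with probability at least $2/3$ overall) and a \emph{per-coordinate} prediction statement (needed to invoke Fano on each $x_j$). The chain-rule reduction converts the global entropy bound into a sum of local Fano bounds, and Jensen's inequality converts the uniform average over $i$ into the single evaluation $H_2(\bar p) \leq H_2(1/3)$. Everything else is routine manipulation of mutual information using the product structure of $\nu$ and the independence $M \perp i$.
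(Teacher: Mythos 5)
Your proof is correct. Note, however, that the paper does not prove this statement at all: it is stated as a ``Fact'' and imported as a black box from Kremer--Nisan--Raz \cite{knr99}. What you have written is the standard information-theoretic derivation of the one-way \textsf{Index} lower bound, essentially the argument appearing in that literature: bound the message length by $H(M)\geq I(M;x)$, superadditivity of information over the independent coordinates of $x$ via the chain rule, Fano's inequality per coordinate (using that $M$ is independent of $i$, so conditioning on $i=j$ does not disturb the law of $(M,x_j)$), and Jensen's inequality on the concave binary entropy to pass from per-coordinate error probabilities to the average error $\bar p\leq 1/3$. The two small points one should be aware of are both fine here: $H_2$ is increasing on $[0,1/2]$ so $\bar p\leq 1/3$ indeed gives $H_2(\bar p)\leq H_2(1/3)<1$, and the (worst-case) communication cost of a deterministic protocol is at least $H(M)$ up to an additive constant since a $c$-bit protocol admits at most $O(2^{c})$ distinct transcripts. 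So the proposal is a complete and correct proof of the cited fact, by the standard route.
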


\begin{theorem}
For $c$ a small enough positive constant, and $d \geq k/\eps$, we have $D^{1-way}_{\mu,1/4}(f) = \Omega(dk/\eps)$. 
\end{theorem}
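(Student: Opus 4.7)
The plan is to reduce from the \textsf{Index} problem on $r = (ck/\eps)(d - ck/\eps) = \Omega(dk/\eps)$ bits, which by Fact~\ref{fact:index} requires $\Omega(r)$ bits of one-way communication. Alice, given an \textsf{Index} string $x \in \{0,1\}^r$, would place the bits of $x$ in bijection with the entries of $R$, setting $R_{i,j} = (2 x_{i,j} - 1)/\sqrt{d - ck/\eps}$ so that each row of $R$ has unit $\ell_2$-norm and the entries match the marginal distribution in $\mu$. Bob, given the \textsf{Index} query $(i^*, j^*)$, sets $i_1 = i^*$ and draws $i_2, \ldots, i_k$ uniformly without replacement from $[ck/\eps] \setminus \{i^*\}$ using private randomness; this induces exactly the joint distribution $\mu$ on $(R, \{i_1, \ldots, i_k\})$. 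Provided Bob can extract $x_{i^*, j^*}$ from a $(1+\eps)$-approximate projection $P$, Alice's message length must be $\Omega(r)$, giving $D^{1-way}_{\mu,1/4}(f) = \Omega(dk/\eps)$.

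The structural heart of the argument is to show that $B = [A; e_{i_1}; \ldots; e_{i_k}]$ decomposes into $k$ nearly independent copies of the $2 \times d$ primitive problem from the intuition, one per boosted index $i_j$. Because the identity block in $A$ makes distinct rows of $A$ orthogonal on their left halves, each row $a_i = [e_i, r_i]$ for $i \notin \{i_1, \ldots, i_k\}$ contributes squared mass $2$ to a direction essentially orthogonal to the others, while for $i = i_j$ the pair $\{a_{i_j}, e_{i_j}\}$ contributes squared mass $3$ concentrated in $\mathrm{span}\{e_{i_j}, r_{i_j}\}$. A per-block eigenvalue calculation identical to the $2 \times d$ primitive case then identifies the top singular direction inside each boosted block as $v_{i_j} \propto \phi \, e_{i_j} + r_{i_j}$ with $\phi = (1+\sqrt{5})/2$, and $B_k$ is spanned, up to lower-order cross-block corrections, by $\{v_{i_1}, \ldots, v_{i_k}\}$.

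Given such a structural description, a $(1+\eps)$-approximate $P$ is forced, in each boosted block, to be within principal angle $O(\sqrt{\eps})$ of $v_{i_j}$. The dominant part of $\|B - B_k\|_F^2$ comes from the $ck/\eps - k$ non-boosted rows contributing nearly $1$ each, so the total excess error budget allowed by $(1+\eps)$-approximation is $O(\eps \cdot k/\eps) = O(k)$, which charges $O(1)$ per boosted block; this $O(1)$ slack caps the angular error. Consequently $e_{i^*}^T P$ is, up to an $\ell_2$-error $O(\sqrt{\eps})$, a fixed positive scalar multiple of $r_{i^*}$. Since each entry of $r_{i^*}$ has magnitude $1/\sqrt{d - ck/\eps}$ and the error is spread across $d - ck/\eps$ coordinates, at most a constant fraction of coordinates can be sign-flipped; as Bob's query coordinate $j^*$ is uniform and independent of Alice's message, with constant probability Bob reads off the correct sign of $(r_{i^*})_{j^*} = 2 x_{i^*, j^*} - 1$ and solves his \textsf{Index} instance.

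The main obstacle will be the decoupling claim. Although the left halves of the rows of $A$ are orthogonal through $I$, the right halves $r_1, \ldots, r_{ck/\eps}$ share a common $(d - ck/\eps)$-dimensional subspace and interact through $R^T R$. To rule out spurious mixing of the $k$ boosted blocks in the top-$k$ SVD of $B$, I expect to invoke concentration for random sign matrices to control $\|R^T R - \tfrac{ck/\eps}{d - ck/\eps} I\|_2$, ensuring the non-boosted contribution to $B^T B$ is nearly isotropic and cannot produce top-$k$ singular directions that leak outside the boosted blocks; a Davis--Kahan style $\sin\theta$ bound then transports the spectral gap into the angular bound used above. A secondary subtlety is ensuring that the $O(\sqrt{\eps})$ angular slack really preserves per-coordinate sign information: this forces the constant $c$ in the definition of $f$ to be chosen small enough that the allowed $\ell_2$-error per coordinate stays strictly below the coordinate magnitude $1/\sqrt{d - ck/\eps}$, so a Markov-type argument bounds the fraction of corrupted signs and the uniformly random $j^*$ lands on a faithful coordinate with constant probability.
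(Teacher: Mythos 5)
Your reduction, gadget, and decoding plan coincide with the paper's: reduce from {\sf Index} with $r=(ck/\eps)(d-ck/\eps)$, encode $x$ in the sign pattern of $R$, boost $k$ rows by appending $e_{i_1},\ldots,e_{i_k}$, note that each boosted pair $[e_{i_\ell}+R_{i_\ell};\,e_{i_\ell}]$ has top squared singular value $2+\zeta$ with a top right singular vector mixing $e_{i_\ell}$ and $R_{i_\ell}$ in fixed proportions, and read the signs of $R_{i^*}$ off of $e_{i^*}^TP$. The gap is in the step you yourself identify as the main obstacle, and the tools you propose for it do not work as stated. First, the concentration claim is false: $R^TR$ has rank only $ck/\eps\ll d-ck/\eps$ while its nonzero eigenvalues are close to $1$, so $\|R^TR-\tfrac{ck/\eps}{d-ck/\eps}I\|_2\approx 1$ no matter how the randomness falls. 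The usable statement is the transposed one, $\|R\|_2\le 1+O(\sqrt{c})$ (Fact~\ref{fact:singular}), i.e.\ the \emph{rows} of $R$ are nearly orthonormal; this is the only randomness the argument needs. Second, a Davis--Kahan $\sin\theta$ bound does not directly apply, because $P$ is an arbitrary rank-$k$ projection satisfying only a Frobenius-error guarantee --- it is not the invariant subspace of any perturbed matrix. To salvage your route you would need two additional quantitative steps: (i) any near-optimal $P$ places all but $O(ck)$ of its trace on the top-$k$ right singular subspace of $B$ (this needs a spectral gap between $\sigma_k^2(B)\approx 2+\zeta$ and $\sigma_{k+1}^2(B)$, which itself requires bounding $\|B_1\|_2$), and (ii) that top-$k$ subspace nearly coincides with the span of the per-block top singular vectors despite the cross-block coupling through $R$. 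The paper avoids both by arguing about $P$ directly: Lemma~\ref{lem:technical} bounds $\|Bu\|^2$ for an arbitrary unit vector, the error budget then forces $\sum_i\|u^i_T\|^2\le c_1k$ and $\|R_T\bar U_R\|_F^2\le c_7k$ (no leakage into the non-boosted coordinates or rows), and the quantity $\Delta=\sum_\ell\|V_\ell^TP\|_F^2$ plus Markov bounds localizes $P$ inside most blocks.

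Two smaller corrections to your accounting. The non-boosted rows each have squared norm $2$, not $1$, so $\|B\|_F^2=2ck/\eps+k$ and $\|B-B_k\|_F^2\le 2ck/\eps+k-(2+\zeta)k$; the excess budget is therefore $\eps\|B-B_k\|_F^2\le 2ck$, i.e.\ $O(c)$ per boosted block, not $O(1)$, and the resulting per-block angular error is a function of $c$ alone (the $\eps$'s cancel), not $O(\sqrt{\eps})$. This is precisely why the theorem requires $c$ to be a sufficiently small constant, and it should drive the entire quantitative analysis rather than appear only as the final remark about preserving per-coordinate signs.
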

\begin{proof}
We will reduce from the {\sf Index} problem with $r = (ck/\eps)(d-ck/\eps)$. 
Alice, given her string $x$ to {\sf Index}, creates the $ck/\eps \times d$ matrix 
$A = [I, R]$ as follows. The matrix $I$ is the $ck/\eps \times ck/\eps$ identity matrix, while
the matrix $R$ is a $ck/\eps \times (d-ck/\eps)$ matrix with entries in $\{-1/(d-ck/\eps)^{1/2}, +1/(d-ck/\eps)^{1/2}\}$. For an
arbitrary bijection between the coordinates of $x$ and the entries of $R$, Alice sets a given entry in $R$ to $-1/(d-ck/\eps)^{1/2}$
if the corresponding coordinate of $x$ is $0$, otherwise Alice sets the given entry in $R$ to $+1/(d-ck/\eps)^{1/2}$. 
In the {\sf Index} problem, Bob is given an index, which under the bijection between coordinates of $x$ and entries of $R$,
corresponds to being given a row index $i$ and an entry $j$ in the $i$-th row of $R$ that he needs to recover. He
sets $i_{\ell} = i$ for a random $\ell \in [k]$, 
and chooses $k-1$ distinct and random indices $i_j \in [ck/\eps] \setminus \{i_{\ell}\}$, for $j \in [k] \setminus \{\ell\}$. 
Observe that if $(x,i) \sim \nu$, then $(R, i_1, \ldots, i_k) \sim \mu$.
Suppose there is a protocol in which Alice sends a single message to Bob who solves $f$ with probability at least $3/4$ under $\mu$. We show that this can be used to solve
${\sf Index}$ with probability at least $2/3$ under $\nu$. The theorem will follow by Fact \ref{fact:index}.
Consider the matrix $B$ which is the matrix $A$ stacked on top of the rows $e_{i_1}, \ldots, e_{i_k}$, in that order, so that $B$ has $ck/\eps + k$ rows. 

We proceed to lower bound $\|B-BP\|_F^2$ in a certain way, which will allow our reduction to {\sf Index} to be
carried out. We need the following fact:
\begin{fact}((2.4) of \cite{rv10})\label{fact:singular}
Let $A$ be an $m \times n$ matrix with i.i.d. entries which are each $+1/\sqrt{n}$ with probability $1/2$ and 
$-1/\sqrt{n}$ with probability $1/2$, and suppose $m/n < 1$. Then for all $t > 0$,
$$\Pr[\|A\|_2 > 1+ t + \sqrt{m/n}] \leq \alpha e^{-\alpha'n t^{3/2}}.$$
where $\alpha, \alpha' > 0$ are absolute constants. Here $\|A\|_2$ is the operator norm $\sup_x \frac{\|Ax\|}{\|x\|}$ of $A$. 
\end{fact}
We apply Fact \ref{fact:singular} to the matrix $R$, which implies,
$$\Pr[\|R\|_2 > 1 + \sqrt{c} + \sqrt{(ck/\eps)/(d-(ck/\eps))}] \leq \alpha e^{-\alpha'(d-(ck/\eps)) c^{3/4}},$$
and using that $d \geq k/\eps$ and $c > 0$ is a sufficiently small constant, this implies
\begin{eqnarray}\label{eqn:operator}
\Pr[\|R\|_2 > 1 + 3\sqrt{c}] \leq e^{-\beta d},
\end{eqnarray}
where $\beta > 0$ is an absolute constant (depending on $c$).  
Note that for $c > 0$ sufficiently small, $(1+3\sqrt{c})^2 \leq 1+7\sqrt{c}$. Let $\mathcal{E}$ be the event
that $\|R\|_2^2 \leq 1+ 7\sqrt{c}$, which we condition on.

We partition the rows of $B$ into $B_1$ and $B_2$, where $B_1$ contains those
rows whose projection onto the first $ck/\eps$ coordinates equals $e_i$ for some 
$i \notin \{i_1, \ldots, i_k\}$. Note that $B_1$ is $(ck/\eps -k) \times d$ and $B_2$ is $2k \times d$. Here, 
$B_2$ is $2k \times d$ since it includes the rows in $A$ indexed by $i_1, \ldots, i_k$, together with the
rows $e_{i_1}, \ldots, e_{i_k}$. 
Let us also partition the rows of $R$ into $R_T$ and $R_S$, so that the union of the rows in $R_T$ 
and in $R_S$ is equal to $R$, 
where the rows of $R_T$ are the rows of $R$ in $B_1$,
and the rows of $R_S$ are the non-zero rows of $R$ in $B_2$ (note that $k$ of the rows are non-zero and $k$ are zero in $B_2$
restricted to the columns in $R$). 

\begin{figure}
\begin{center}\includegraphics[width=138mm]{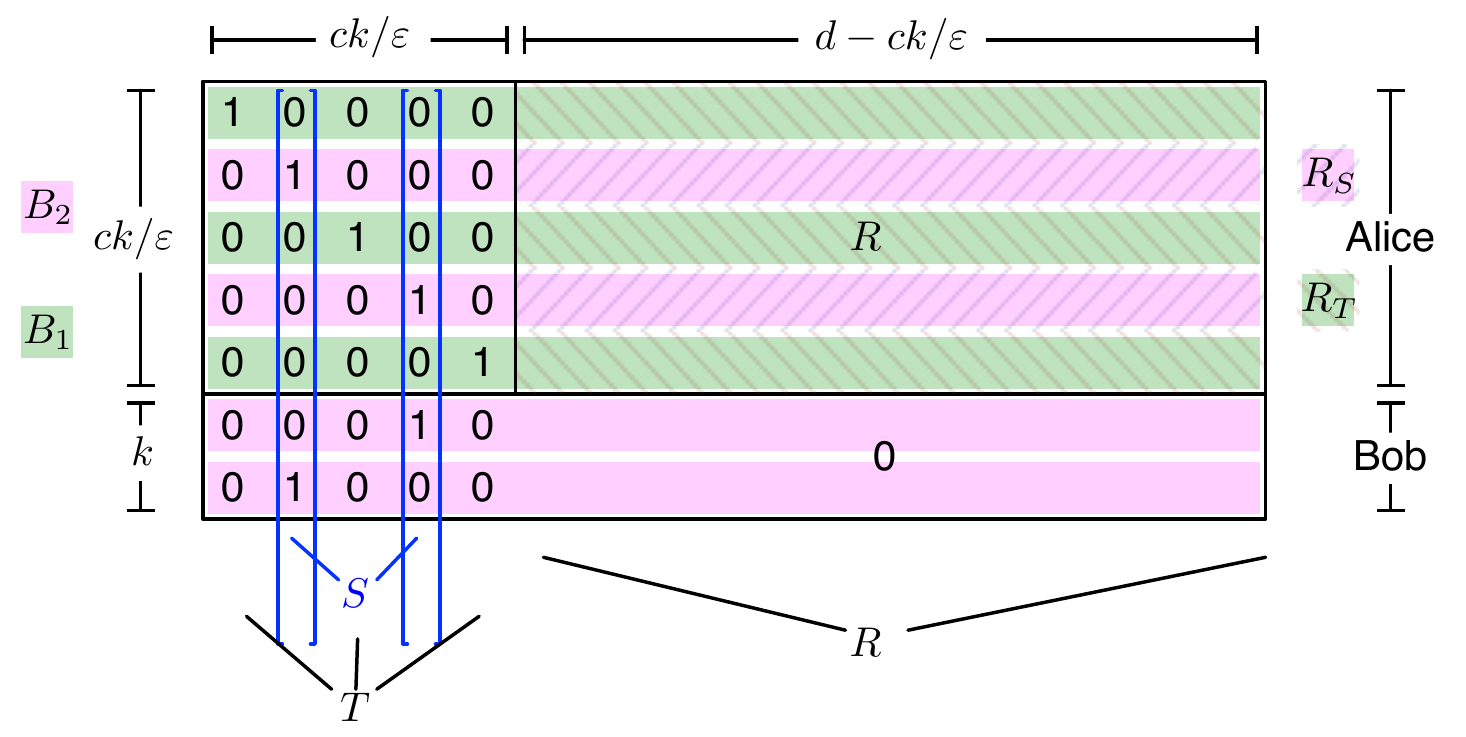}\end{center}
\caption{\label{fig:lb} Illustration of matrix $B$ used for lower bound construction.}
\end{figure}

%
\begin{lemma}\label{lem:technical}
For any unit vector $u$, write $u = u_R + u_S + u_T$, where $S = \{i_1, \ldots, i_k\}, T = [ck/\eps] \setminus S$,
and $R = [d] \setminus [ck/\eps]$, and where $u_A$ for a set $A$ is $0$ on indices $j \notin A$. Then, conditioned
on $\mathcal{E}$ occurring, 
$\|Bu\|^2 \leq (1 + 7\sqrt{c}) (2-\|u_T\|^2 - \|u_R\|^2 + 2\|u_S + u_T\| \|u_R\|).$
\end{lemma}
\begin{proof}

Let $C$ be the matrix consisting of the top $ck/\eps$ rows of $B$, so that 
$C$ has the form $[I, R]$, where $I$ is a $ck/\eps \times ck/\eps$ identity matrix. 
By construction of $B$, $\|Bu\|^2 = \|u_S\|^2 + \|Cu\|^2$. Now, 
$Cu = u_S + u_T + Ru_R$, and so 
\begin{eqnarray*}
\|Cu\|_2^2 & = & \|u_S + u_T\|^2 + \|Ru_R\|^2 + 2(u_s + u_T)^T Ru_R \\
& \leq & \|u_S + u_T\|^2 + (1+7\sqrt{c}) \|u_R\|^2 + 2\|u_S+u_T\| \|Ru_R\| \\
& \leq & (1+7\sqrt{c})(\|u_S\|^2 + \|u_T\|^2 + \|u_R\|^2) + (1+3\sqrt{c}) 2\|u_S + u_T\| \|u_R\| \\
& \leq & (1+7\sqrt{c})(1 + 2\|u_S + u_T\| \|u_R\|),
\end{eqnarray*}
and so 
\begin{eqnarray*}
\|Bu\|^2 & \leq & (1+7\sqrt{c})(1+\|u_S\|^2 + 2\|u_S+u_T\| \|u_R\|)\\
& = & (1+7\sqrt{c})(2-\|u_R\|^2 - \|u_T\|^2 + 2\|u_S+U_T\| \|u_R\|).\qedhere
\end{eqnarray*}
\end{proof}

We will also make use of the following simple but tedious fact.
\begin{fact}\label{fact:maximizer}
For $x \in [0,1]$, the function $f(x) = 2x\sqrt{1-x^2} - x^2$ is maximized when
$x = \sqrt{1/2 - \sqrt{5}/10}$. We define $\zeta$ to be the value of $f(x)$ at its maximum, 
where $\zeta = 2/\sqrt{5} + \sqrt{5}/10 - 1/2 \approx .618$. 
\end{fact}
\begin{proof}
Setting $y^2 = 1-x^2$, we can equivalently maximize $f(y) = -1 + 2y\sqrt{1-y^2} + y^2$, or equivalently
$g(y) = 2y\sqrt{1-y^2} + y^2$. Differentiating this expression and
equating to $0$, we have
$$2\sqrt{1-y^2} - \frac{2y^2}{\sqrt{1-y^2}} + 2y = 0.$$
Multiplying both sides by $\sqrt{1-y^2}$ one obtains the equation
$4y^2 - 2 = 2y \sqrt{1-y^2}$, and squaring both sides, after some algebra one obtains
$5y^4 - 5y^2 + 1 = 0$. Using the quadratic formula, we get that the maximizer satisfies $y^2 = 1/2 + \sqrt{5}/10$, or 
$x^2 = 1/2 - \sqrt{5}/10$. 
\end{proof}

\begin{corollary}\label{fact:operator}
Conditioned on $\mathcal{E}$ occurring, 
$\|B\|_2^2 \leq (1+7\sqrt{c})(2 + \zeta).$
\end{corollary}
\begin{proof}
By Lemma \ref{lem:technical}, for any unit vector $u$, 
$$\|Bu\|^2 \leq (1+7\sqrt{c})(2-\|u_T\|^2 - \|u_R\|^2 + 2\|u_S + u_T\| \|u_R\|).$$
Suppose we replace the vector $u_S + u_T$ with an arbitrary vector supported on coordinates in $S$ with the
same norm as $u_S + u_T$. Then the right hand side of this expression cannot increase, which means it is
maximized when $\|u_T\| = 0$, for which it equals
$(1+7\sqrt{c})(2-\|u_R\|^2 + 2\sqrt{1-\|u_R\|^2}\|u_R\|),$
and setting $\|u_R\|$ to equal the $x$ in Fact \ref{fact:maximizer}, we see that this expression
is at most $(1+7\sqrt{c})(2+\zeta)$. 
\end{proof}

Write the projection matrix $P$ output by the streaming algorithm as $UU^T$, where $U$ is $d \times k$ with orthonormal columns $u^i$
(note that $R^{\dagger} R = P$).
Applying Lemma \ref{lem:technical} and Fact \ref{fact:maximizer} to each of the columns $u^i$, we have:
\begin{eqnarray}
\|BP\|_F^2 & = & \|BU\|_F^2 = \sum_{i=1}^k \|Bu^i\|^2 \nonumber \\
& \leq & (1+7 \sqrt{c}) \sum_{i=1}^k (2-\|u^i_T\|^2 - \|u^i_R\|^2 + 2\|u^i_S + u^i_T\| \|u^i_R\|) \nonumber \\
& = & (1+7\sqrt{c})(2k - \sum_{i=1}^k (\|u^i_T\|^2) + \sum_{i=1}^k (2\|u^i_S + u^i_T\| \|u^i_R\| - \|u^i_R\|^2)) \nonumber \\
& = & (1+7\sqrt{c})(2k - \sum_{i=1}^k (\|u^i_T\|^2) + \sum_{i=1}^k (2 \sqrt{1-\|u^i_R\|^2}\|u^i_R\| - \|u^i_R\|^2)) \nonumber \\
& \leq & (1+7\sqrt{c})(2k - \sum_{i=1}^k (\|u^i_T\|^2) + k \zeta) \nonumber\\
& = & (1+7\sqrt{c}) ((2+\zeta)k - \sum_{i=1}^k \|u^i_T\|^2).\label{eqn:first}
\end{eqnarray}

Using the matrix Pythagorean theorem, we thus have,
\begin{eqnarray}\label{eqn:second}
\|B-BP\|_F^2 & = & \|B\|_F^2 - \|BP\|_F^2 \nonumber\\
& \geq & 2ck/\eps + k - (1+7\sqrt{c}) ((2+\zeta)k - \sum_{i=1}^k \|u^i_T\|^2) \text{ using } \|B\|_F^2 = 2ck/\eps + k \nonumber \\
& \geq & 2ck/\eps + k - (1+7\sqrt{c})(2+\zeta)k + (1+7\sqrt{c})\sum_{i=1}^k \|u^i_T\|^2.
\end{eqnarray}

We now argue that $\|B-BP\|_F^2$ cannot be too large if Alice and Bob succeed in solving $f$. First, we need to upper
bound $\|B-B_k\|_F^2$. To do so, we create a matrix $\tilde{B}_k$ of rank-$k$ and bound $\|B-\tilde{B}_k\|_F^2$. Matrix
$\tilde{B}_k$ will be $0$ on the rows in $B_1$. We can group the rows of $B_2$ into $k$ pairs so that each pair has
the form $e_i + v^i, e_i$, where $i \in [ck/\eps]$ and $v^i$ is a unit vector supported on $[d] \setminus [ck/\eps]$. We let
$Y_i$ be the optimal (in Frobenius norm) rank-$1$ approximation to the matrix $[e_i + v^i; e_i]$. By direct computation
\footnote{For an online SVD calculator, see \url{http://www.bluebit.gr/matrix-calculator/}}
the maximum squared singular value of this matrix is $2 + \zeta$. Our matrix $\tilde{B}_k$ then consists of a single $Y_i$
for each pair in $B_2$. Observe that $\tilde{B}_k$ has rank at most $k$ and
$$\|B-B_k\|_F^2 \leq \|B- \tilde{B}_k\|_F^2 \leq 2ck/\eps + k - (2+\zeta)k,$$
Therefore, if Bob succeeds in solving $f$ on input $B$, then,
\begin{eqnarray}\label{eqn:bound}
\|B-BP\|_F^2 & \leq & (1+\eps)(2ck/\eps +k - (2+\zeta)k) 
\leq 2ck/\eps + k - (2+\zeta)k + 2ck.
\end{eqnarray}
Comparing (\ref{eqn:second}) and (\ref{eqn:bound}), we arrive at, conditioned on $\mathcal{E}$:
\begin{eqnarray}\label{eqn:compare}
\sum_{i=1}^k \|u^i_T\|^2 \leq \frac{1}{1+7\sqrt{c}} \cdot (7\sqrt{c}(2+\zeta)k + 2ck) \leq c_1 k,
\end{eqnarray}
where $c_1 > 0$ is a constant that can be made arbitrarily small by making $ c > 0$ arbitrarily small. 

Since $P$ is a projector, $\|BP\|_F = \|BU\|_F$. Write $U = \hat{U} + \bar{U}$, where
the vectors in $\hat{U}$ are supported on $T$, and the vectors in $\bar{U}$ are supported on $[d] \setminus T$. 
We have,
\begin{eqnarray*}
\|B\hat{U}\|_F^2 \leq \|B\|_2^2 c_1 k \leq (1+7\sqrt{c}) (2+\zeta) c_1 k \leq c_2 k,
\end{eqnarray*}
where the first inequality uses $\|B\hat{U}\|_F \leq \|B\|_2 \|\hat{U}\|_F$ and (\ref{eqn:compare}), 
the second inequality uses that event $\mathcal{E}$ occurs, 
and the third inequality holds for a constant $c_2 > 0$ that can be made arbitrarily small by making the constant
$c > 0$ arbitrarily small. 

Combining with (\ref{eqn:bound}) and using the triangle inequality,
\begin{eqnarray}\label{eqn:mainFirst}
\|B\bar{U}\|_F & \geq & \|BP\|_F - \|B\hat{U}\|_F \hspace{1.1in} \text{ using the triangle inequality}\nonumber \\
& \geq & \|BP\|_F - \sqrt{c_2 k} \hspace{1.2in} \text{ using our bound on }\|B\hat{U}\|_F^2 \nonumber \\
& = & \sqrt{\|B\|_F^2 - \|B-BP\|_F^2} - \sqrt{c_2 k} \hspace{.2in} \text{ by the matrix Pythagorean theorem} \nonumber \\
& \geq & \sqrt{(2+\zeta)k - 2ck} - \sqrt{c_2 k} \hspace{.61in} \text{ by (\ref{eqn:bound})}\nonumber \\
& \geq & \sqrt{(2+\zeta)k - c_3 k},
\end{eqnarray}
where $c_3 > 0$ is a constant that can be made arbitrarily small for $c > 0$ an arbitrarily small constant (note that $c_2 > 0$
also becomes arbitrarily small as $c > 0$ becomes arbitrarily small). Hence, $\|B\bar{U}\|_F^2 \geq (2+\zeta)k - c_3k$,  
and together with Corollary \ref{fact:operator},
that implies $\|\bar{U}\|_F^2 \geq k-c_4k$ for a constant $c_4$ that can be made arbitrarily small by making 
$c > 0$ arbitrarily small. 

Our next goal is to show that $\|B_2\bar{U}\|_F^2$ is almost as large as $\|B\bar{U}\|_F^2$. 
Consider any column $\bar{u}$ of $\bar{U}$, and write it as $\bar{u}_S + \bar{u}_R$. 
Hence,
\begin{eqnarray*}
\|B\bar{u}\|^2 & = & \|R_T\bar{u}_R\|^2 + \|B_2 \bar{u}\|^2 \hspace{1.35in}  \text{ using } B_1\bar{u} = R_T\bar{u}_R\\
& \leq & \|R_T\bar{u}_R\|^2 + \|\bar{u}_S + R_S \bar{u}_R\|^2 + \|\bar{u}_S\|^2 \hspace{.32in} \text{ by definition of the components}\\
& = & \|R\bar{u}_R\|^2 + 2 \|\bar{u}_S\|^2 + 2 \bar{u}_S^T R_S \bar{u}_R \hspace{.65in} \text{ using the Pythagorean theorem}\\
& \leq & 1+7\sqrt{c} + \|\bar{u}_S\|^2 + 2 \|\bar{u}_S\| \|R_S \bar{u}_R\|\\
&& \text{using } \|R\bar{u}_R\|^2 \leq (1+7\sqrt{c})\|\bar{u}_R\|^2 \text{ and } \|\bar{u}_R\|^2 + \|\bar{u}_S\|^2 \leq 1\\
&& \text{(also using Cauchy-Schwarz to bound the other term)}.
\end{eqnarray*}
Suppose $\|R_S \bar{u}_R\| = \tau \|\bar{u}_R\|$ for a value $0 \leq \tau \leq 1+7\sqrt{c}$. Then
$$\|B\bar{u}\|^2 \leq 1 + 7\sqrt{c} + \|\bar{u}_S\|^2 + 2 \tau \|\bar{u}_S\| \sqrt{1-\|\bar{u}_S\|^2}.$$
We thus have, 
\begin{eqnarray}\label{eqn:god}
\|B\bar{u}\|^2 & \leq & 1 + 7\sqrt{c} + (1-\tau)\|\bar{u}_S\|^2 + \tau (\|\bar{u}_S\|^2 + 2 \|\bar{u}_S\| \sqrt{1-\|\bar{u}_S\|^2}) \nonumber \\
& \leq & 1 + 7\sqrt{c} + (1-\tau) + \tau(1 + \zeta) \text{ by Fact \ref{fact:maximizer}} \nonumber \\
& \leq & 2 + \tau \zeta + 7\sqrt{c},
\end{eqnarray}
and hence, letting $\tau_1, \ldots, \tau_k$ denote the corresponding values of $\tau$ for the $k$ columns of $\bar{U}$, we have
\begin{eqnarray}\label{eqn:god2}
\|B\bar{U}\|_F^2 \leq (2 + 7\sqrt{c})k + \zeta \sum_{i=1}^k \tau_i.
\end{eqnarray}
Comparing the square of (\ref{eqn:mainFirst}) with (\ref{eqn:god2}), we have
\begin{eqnarray}\label{eqn:beta}
\sum_{i=1}^k \tau_i & \geq & k - c_5k,
\end{eqnarray}
where $c_5 > 0$ is a constant that can be made arbitrarily small by making $c > 0$ an arbitrarily small constant. 
Now, $\|\bar{U}\|_F^2 \geq k-c_4 k$ as shown above, while since $\|R_s\bar{u}_R\| = \tau_i \|\bar{u}_R\|$ if $\bar{u}_R$ is the $i$-th column
of $\bar{U}$, by (\ref{eqn:beta}) we have 
\begin{eqnarray}\label{eqn:dead}
\|R_S \bar{U}_R \|_F^2 \geq (1-c_6)k
\end{eqnarray}
for a constant $c_6$ that can be made arbitrarily small by making $c > 0$ an arbitarily
small constant. 

Now $\|R \bar{U}_R\|_F^2 \leq (1+7\sqrt{c})k$ since event $\mathcal{E}$ occurs, 
and $\|R \bar{U}_R\|_F^2 = \|R_T \bar{U}_R\|_F^2 + \|R_S \bar{U}_R\|_F^2$ since the rows of $R$ are the
concatenation of rows of $R_S$ and $R_T$, so combining with (\ref{eqn:dead}), we arrive at 
\begin{eqnarray}\label{eqn:statement}
\|R_T \bar{U}_R \|_F^2 &\leq &c_7 k,
\end{eqnarray}
for a constant $c_7 > 0$ that can be made arbitrarily small by making $c > 0$ arbitrarily small. 

Combining the square of (\ref{eqn:mainFirst}) with (\ref{eqn:statement}), we thus have 
\begin{eqnarray}\label{eqn:toB2}
\|B_2\bar{U}\|_F^2 & = & \|B \bar{U}\|_F^2 - \|B_1 \bar{U}\|_F^2 
= \|B \bar{U}\|_F^2 - \|R_T \bar{U}_R\|_F^2 
\geq (2+\zeta)k - c_3 k - c_7 k \notag \\
& \geq & (2+\zeta)k - c_8 k,
\end{eqnarray}
where the constant $c_8 > 0$ can be made arbitrarily small by making $c > 0$ arbitrarily small.

By the triangle inequality, 
\begin{eqnarray}\label{eqn:above}
\|B_2U\|_F \geq \|B_2 \bar{U}\|_F - \|B_2 \hat{U}\|_F \geq ((2+\zeta)k - c_8k)^{1/2} - (c_2k)^{1/2}.
\end{eqnarray}
Hence,
\begin{eqnarray}\label{eqn:local}
\|B_2 -B_2 P\|_F & = & \sqrt{\|B_2\|_F^2 - \|B_2 U\|_F^2} \hspace{.6in} \text{ by matrix Pythagorean, }\|B_2U\|_F = \|B_2P\|_F \nonumber \\
& \leq & \sqrt{\|B_2\|_F^2 - (\|B_2 \bar{U}\|_F - \|B_2 \hat{U}\|_F)^2} \hspace{1.1in} \text{ by triangle inequality} \nonumber\\
& \leq & \sqrt{3k - (((2+\zeta)k - c_8k)^{1/2} - (c_2k)^{1/2})^2} \hspace{.5in} \text{ using (\ref{eqn:above}), } \|B_2\|_F^2 = 3k,\end{eqnarray}
or equivalently, $\|B_2 - B_2 P\|_F^2 \leq 3k - ((2+\zeta)k - c_8k) - (c_2k) + 2k(((2+\zeta)-c_8)c_2)^{1/2}
\leq (1-\zeta)k + c_8k + 2k(((2+\zeta)-c_8)c_2)^{1/2} \leq (1-\zeta)k + c_9k$ for a constant $c_9 > 0$ that can be made
arbitrarily small by making the constant $c > 0$ small enough.  
This intuitively says that $P$ provides
a good low rank approximation for the matrix $B_2$. Notice that by (\ref{eqn:local}),
\begin{eqnarray}\label{eqn:project}
\|B_2P\|_F^2 & = & \|B_2\|_F^2 - \|B_2 - B_2P\|_F^2 
\geq 3k - (1-\zeta) k - c_9k 
\geq (2+\zeta)k - c_9k.
\end{eqnarray}
Now $B_2$ is a $2k \times d$ matrix and we can partition its rows into $k$ pairs of rows
of the form $Z_{\ell} = (e_{i_\ell} + R_{i_{\ell}}, e_{i_{\ell}})$, for $\ell = 1, \ldots, k$. Here we abuse notation and think of $R_{i_{\ell}}$
as a $d$-dimensional vector, its first $ck/\eps$ coordinates set to $0$. Each such pair of rows is a rank-$2$ matrix, which we abuse
notation and call $Z_{\ell}^T$. By direct computation
\footnote{We again used the calculator at \url{http://www.bluebit.gr/matrix-calculator/}} 
$Z_{\ell}^T$ has squared maximum singular value $2+\zeta$. We would like to argue that the projection of $P$
onto the row span of most $Z_{\ell}$ has length very close to $1$. To this end, for each $Z_{\ell}$ consider the orthonormal basis $V_{\ell}^T$
of right singular vectors for its row space (which is span($e_{i_{\ell}}, R_{i_{\ell}}$)). We let $v_{\ell, 1}^T, v_{\ell, 2}^T$ be these two right
singular vectors with corresponding singular values $\sigma_1$ and $\sigma_2$ (which will be the same for all $\ell$, see below). 
We are interested in the quantity $\Delta = \sum_{\ell = 1}^k \|V_{\ell}^T P\|_F^2$ which
intuitively measures how much of $P$ gets projected onto the row spaces of the $Z_{\ell}^T$. 

\begin{lemma}\label{lem:delta}
Conditioned on event $\mathcal{E}$, $\Delta \in [k -c_{10} k, k+c_{10}k],$ where $c_{10} > 0$ is a constant that can be made
arbitrarily small by making $c > 0$ arbitrarily small. 
\end{lemma}
\begin{proof}
For any unit vector $u$, consider $\sum_{\ell = 1}^k \|V_{\ell}^T u\|^2$. This is equal to $\|u_S\|^2 + \|R_S u_R\|^2$. Conditioned on
$\mathcal{E}$, $\|R_S u_R\|^2 \leq (1+7\sqrt{c})\|u_R\|^2$. Hence, $\sum_{\ell = 1}^k \|V_{\ell}^T u\|^2 \leq 1+7\sqrt{c}$, and consequently,
$\Delta \leq k(1+7\sqrt{c})$. 

On the other hand, $\|B_2P\|_F^2 = \sum_{\ell = 1}^k \|Z_{\ell}^T P\|_F^2$. Since $\|Z_{\ell}^T\|_2^2 \leq 2+\zeta$, it follows by (\ref{eqn:project})
that $\Delta \geq k-(c_9/(2+\zeta))k$, as otherwise $\Delta$ would be too small in order for (\ref{eqn:project}) to hold. 

The lemma now follows since $\sqrt{c}$ and $c_9$ can be made arbitrarily small by making the constant $c > 0$ small enough. 
\end{proof}

We have the following corollary.
\begin{corollary}\label{cor:blah}
Conditioned on event $\mathcal{E}$, for a $1-\sqrt{c_9 + 2c_{10}}$ fraction of $\ell \in [k]$,
$\|V_{\ell}^TP\|_F^2 \leq 1 + c_{11}$, 
and for a $99/100$ fraction of $\ell \in [k]$, we have
$\|V_{\ell}^T P\|_F^2 \geq 1 - c_{11},$
where $c_{11} > 0$ is a constant that can be made arbitrarily small by making the constant $c > 0$ arbitrarily small. 
\end{corollary}
\begin{proof}
For the first part of the corollary, observe that 
$$\|Z_{\ell}^T P\|_F^2 = \sigma_1^2 \|v_{\ell, 1}^T P\|^2 + \sigma_2^2 \|v_{\ell, 2}^TP\|^2,$$
where $v_{\ell, 1}^T$ and $v_{\ell, 2}^T$ are right singular vectors of $V_{\ell}^T$, and $\sigma_1$, $\sigma_2$
are its singular values, with $\sigma_1^2 = 2+\zeta$ and $\sigma_2^2 = 1-\zeta$. Since
$\Delta \leq k + c_{10}k$ by Lemma \ref{lem:delta}, we have
$$\sum_{\ell=1}^k \|v_{\ell, 1}^T P\|^2 + \|v_{\ell, 2}^TP\|^2 \leq k + c_{10}k.$$
If $\sum_{\ell = 1}^k \|v_{\ell, 2}^TP\|^2 \geq (c_9 + 2c_{10})k$, then
\begin{eqnarray*}
\|B_2P\|_F^2 & \leq & \sum_{\ell} \|Z_{\ell}^T P\|_F^2\\
& \leq & (2+\zeta)(k + c_{10}k - 2c_{10}k - c_9k) + (1-\zeta) (2c_{10}k + c_9k)\\
& \leq & (2+\zeta)(k - c_9k) - (2+\zeta) c_{10}k + (1-\zeta)(2c_{10}k + c_9k)\\
& \leq & (2+\zeta)k - 2c_9k - \zeta c_9k - 2c_{10}k - \zeta c_{10} k + 2c_{10}k + c_9k - 2\zeta c_{10}k - \zeta c_9k\\
& \leq & (2+\zeta)k - (1+2\zeta)c_9k + -3 \zeta c_{10}k\\
& < & (2+\zeta)k - c_9k 
\end{eqnarray*}
which is a contradiction to (\ref{eqn:project}). 
Hence, $\sum_{\ell = 1}^k \|v_{\ell, 2}^TP\|^2 \leq (c_9 + 2c_{10}) k$. This means by a Markov bound that a $1-\sqrt{c_9 + 2c_{10}}$ fraction
of $\ell$ satisfy $\|v_{\ell, 2}^TP\|^2 \leq \sqrt{c_9 + 2c_{10}}$, which implies that for this fraction 
that $\|V_{\ell}^TP\|_F^2 \leq 1 + \sqrt{c_9 + 2c_{10}}$. 

For the second part of the corollary, suppose at most $99k/100$ different $\ell$ satisfy
$\|V_{\ell}^TP\|_F^2 > 1-200\sqrt{c_9 + 2c_{10}}$. By the previous part of the corollary, at most $\sqrt{c_9 + 2c_{10}}k$
of these $\ell$ can satisfy $\|V_{\ell}^TP\|_F^2 > 1+\sqrt{c_9 + 2c_{10}}$. Hence, since $\|V_{\ell}^TP\|_F^2 \leq 2$, 
\begin{eqnarray*}
\Delta & < & 2\sqrt{c_9 + 2c_{10}}k + (1+\sqrt{c_9 + 2c_{10}})(99/100-\sqrt{c_9 + 2c_{10}})k + (1-200\sqrt{c_9 + 2c_{10}})k/100\\
& \leq & 2\sqrt{c_9 + 2c_{10}}k + 99k/100 + 99k\sqrt{c_9 + 2c_{10}}/100 - k\sqrt{c_9 + 2c_{10}} + k/100 -2\sqrt{c_9 + 2c_{10}}k\\
& \leq & k - \sqrt{c_9 + 2c_{10}}k/100\\
& \leq & k - \sqrt{2c_{10}}k/100\\
& < & k - c_{10} k,
\end{eqnarray*}
where the final inequality follows for $c_{10} > 0$ a sufficiently small constant. 
This is a contradiction to Lemma \ref{lem:delta}. Hence, at least $99k/100$ different $\ell$ satisfy
$\|V_{\ell}^TP\|_F^2 > 1-200\sqrt{c_9 + 2c_{10}}$. Letting $c_{11} = 200\sqrt{c_9 + 2c_{10}}$, we see that $c_{11}$ can be made
an arbitrarily small constant by making the constant $c > 0$ arbitrarily small. This completes the proof. 
\end{proof}

Recall that Bob holds $i = i_{\ell}$ for a random $\ell \in [k]$. It follows (conditioned on $\mathcal{E}$) 
by a union bound that
with probability at least $49/50$, $\|V_{\ell}^TP\|_F^2 \in [1 -c_{11}, 1+c_{11}]$, which we call the event $\mathcal{F}$ and condition on. 
We also condition on event $\mathcal{G}$ that $\|Z_{\ell}^T P \|_F^2 \geq (2+\zeta) - c_{12}$, for a constant $c_{12} > 0$ that can be made
arbitrarily small by making $c > 0$ an arbitrarily small constant.
Combining the first part of Corollary \ref{cor:blah} together with (\ref{eqn:project}), event $\mathcal{G}$ holds with probability
at least $99.5/100$, provided $c > 0$ is a sufficiently small constant. By a union bound it follows that $\mathcal{E}$, $\mathcal{F}$,
and $\mathcal{G}$ occur simultaneously with probability at least $49/51$. 

As $\|Z_{\ell}^T P\|_F^2 = \sigma_1^2 \|v_{\ell, 1}^TP\|^2 + \sigma_2^2 \|v_{\ell, 2}^TP\|^2$, with $\sigma_1^2 = 2+\zeta$ and $\sigma_1^2 = 1-\zeta$, events
$\mathcal{E}, \mathcal{F}$, and $\mathcal{G}$ imply that
$\|v_{\ell,1}^TP\|^2 \geq 1-c_{13}$, where $c_{13} > 0$ is a constant that can be made arbitrarily small by making the constant $c > 0$ arbitrarily
small. Observe that $\|v_{\ell,1}^T P\|^2 = \langle v_{\ell,1}, z \rangle^2$, where $z$ is a unit vector in the direction of the projection
of $v_{\ell,1}$ onto $P$. 

By the Pythagorean theorem, $\|v_{\ell,1}- \langle v_{\ell,1}, z \rangle z\|^2 = 1 -\langle v_{\ell,1}, z \rangle^2$, and so  
\begin{eqnarray}\label{eqn:norm}
\|v_{\ell,1} - \langle v_{\ell, 1}, z \rangle z\|^2 \leq c_{14},
\end{eqnarray}
for a constant $c_{14} > 0$ that can be made arbitrarily small by making $c> 0$ arbitrarily small. 

We thus have $Z_{\ell}^T P = \sigma_1 \langle v_{\ell,1}, z \rangle u_{\ell,1} z^T + \sigma_2 \langle v_{\ell,2}, w \rangle u_{\ell,2} w^T$, 
where $w$ is a unit vector in the direction of the projection of 
of $v_{\ell,2}$ onto $P$, and $u_{\ell,1}, u_{\ell,2}$ are the left singular vectors of $Z_{\ell}^T$. 
Since $ \mathcal{F}$ occurs, we have that $|\langle v_{\ell,2}, w \rangle| \leq c_{11}$, where $c_{11} > 0$ is a 
constant that can be made arbitrarily small by making the constant $c > 0$ arbitrarily small. 
It follows now by (\ref{eqn:norm}) that
\begin{eqnarray}\label{matrixNorm}
\|Z_{\ell}^T P - \sigma_1 u_{\ell,1} v_{\ell,1}^t\|_F^2 \leq c_{15},
\end{eqnarray}
where $c_{15} > 0$ is a constant that can be made arbitrarily small
by making the constant $c > 0$ arbitrarily small. 

By direct calculation\footnote{Using the online calculator in earlier footnotes.}, 
$u_{\ell,1} = -.851e_{i_{\ell}} -.526R_{i_{\ell}}$ and $v_{\ell,1} = -.851e_{i_{\ell}} -.526R_{i_{\ell}}$. 
It follows that $\|Z_{\ell}^T P - (2+\zeta) [.724e_{i_{\ell}} + .448 R_{i_{\ell}}; .448e_{i_{\ell}} + .277R_{i_{\ell}}]\|_F^2 \leq c_{15}$. 
Since $e_{i_{\ell}}$ is the second row of $Z_{\ell}^T$, it follows that
$\|e_{i_{\ell}}^T P - (2+\zeta)(.448e_{i_{\ell}} + .277R_{i_{\ell}})\|^2 \leq c_{15}.$

Observe that Bob has $e_{i_{\ell}}$ and $P$, and can therefore compute $e_{i_{\ell}}^T P$. Moreover, as $c_{15} > 0$ can be made arbitrarily small
by making the constant $c > 0$ arbitrarily small, it follows that a $1-c_{16}$ fraction of the signs of coordinates of $e_{i_{\ell}}^TP$,
restricted to coordinates in $[d] \setminus [ck/\eps]$, must agree with those of $(2+\zeta).277R_{i_{\ell}}$, which in turn agree
with those of $R_{i_{\ell}}$. Here $c_{16} > 0$ is a constant that can be made arbitrarily small by making the constant $c > 0$ arbitrarily
small. Hence, in particular, the sign of the $j$-th coordinate of $R_{i_{\ell}}$, which Bob needs to output, agrees with that of 
the $j$-th coordinate of $e_{i_{\ell}}^TP$ with probability at least $1-c_{16}$. Call this event $\mathcal{H}$.

By a union bound over the occurrence of events $\mathcal{E}, \mathcal{F}$, $\mathcal{G}$, and $\mathcal{H}$, and the streaming
algorithm succeeding (which occurs with probability $3/4$),
it follows that Bob succeeds in solving {\sf Index} with probability at least $49/51 - 1/4 - c_{16} > 2/3$, as required. This completes
the proof. 
\end{proof}

\section{Related Work on Matrix Sketching and Streaming}
\label{sec:related}
As mentioned in the introduction, there are a variety of techniques to sketch a matrix.  There are also several ways to measure error and models to consider for streaming.  

In this paper we focus on the row-update streaming model where each stream elements appends a row to the input matrix $A$.   
A more general model the entry-update model fixes the size of $A$ at $n \times d$, but each element indicates a single matrix entry and adds (or subtracts) to its value.  

\paragraph{Error measures.}
The accuracy of a sketch matrix $B$ can be measured in several ways.  
Most commonly one considers an $n \times d$, rank $k$ matrix $\hat A$ that is derived from $B$ (and sometimes $A$) and measures the \emph{projection error} where \s{proj-err} $= \|A - \hat A\|_F^2/\|A - A_k\|_F^2$.  When $\hat A$ can be derived entirely from $B$, we call this a \emph{construction} result, and clearly requires at least $\Omega(n+d)$ space.  When the space is required to be independent of $n$, then $\hat A$ either implicitly depends on $A$, or requires another pass of the data.  It can then be defined one of two ways: 
$\hat A = \pi_{B}^k(A)$ (as is considered in this paper) takes $B_k$, the best rank-$k$ approximation to $B$, and then projects $A$ onto $B_k$;
$\hat A = \Pi_{B}^k(A)$ projects $A$ onto $B$, and then takes the best rank-$k$ approximation of the result.  Note that $\pi_{B}^k(A)$ is better than $\Pi_{B}^k(A)$, since it knows the rank-$k$ subspace to project onto without re-examining $A$.  

We also consider \emph{covariance error} where \s{covar-err} $= \|A^TA-B^TB\|_2/\|A\|_F^2$ in this paper.  One can also bound $\|A^TA-B^TB\|_2/\|A-A_k\|_F^2$, but this has an extra parameter $k$, and is less clean.  This measure captures the norm of $A$ along all directions (where as non-construction, projection error only indicates how accurate the choice of subspace is), but still does not require $\Omega(n)$ space.  

\paragraph{Sketch paradigms}
Given these models, there are several types of matrix sketches.  We describe them here with a bit more specificity than in the Introduction, with particular attention to those that can operate in the row-update model we focus on.  Specific exemplars are described which are used in out empirical study to follow.    
The first approach is to \emph{sparsify} the matrix~\cite{arora2006fast,achlioptas2001fast,drineas2011note}, by retaining a small number of non-zero.  These algorithms typically assume to know the $n \times d$ dimensions of $A$, and are thus not directly applicable in out model.  

\noindent  \s{Random-projection}: The second approach randomly combines rows of the matrix~\cite{papadimitriou1998latent,vempala2004random,sarlos2006improved,liberty2007randomized}.
For an efficient variant~\cite{achlioptas2001database} we will consider where the sketch $B$ is equivalent to $R A$ where $R$ is an $\ell \times n$ matrix such that each element $R_{i,j} \in \{-1/\sqrt{\ell},1/\sqrt{\ell}\}$ uniformly.  This is easily computed in a streaming fashion, while requiring at most $O(\ell d)$ space and $O(\ell d)$ operation per row updated.  
Sparser constructions of random projection matrices are known to exist~\cite{kane2012sparser,dasgupta2010sparse}. These, however, were not implemented since the running time of applying random projection matrices is not the focus of this experiment.

\noindent  \s{Hashing}: A variant of this approach~\cite{clarkson2013low} uses an extra sign-hash function to replicate the count-sketch~\cite{charikar2002finding} with matrix rows (analogously to how \FD does with the MG sketch).  
Specifically, the sketch $B$ is initialized as the $\ell \times d$ all zeros matrix, then each row $a_i$ of $A$ is added to row $h(i)$ as $B_{h(i)} \leftarrow B_{h(i)}+ s(i)a_i$, where $h : [n] \rightarrow [\ell]$ and $s : [n] \leftarrow \{-1,1\}$ are perfect hash functions. There is no harm in assuming such functions exist since complete randomness is na\"{\i}vely possible without dominating either space or running time. This method is often used in practice by the machine learning community and is referred to as ``feature hashing"~\cite{weinberger2009feature}.  
Surprising new analysis of this method~\cite{clarkson2013low} shows this approach is optimal for construction bounds and takes $O(nnz(A))$ processing time for any matrix $A$.  

\noindent \s{Sampling}: 
The third sketching approach is to find a small subset of matrix rows (and/or columns) that approximate the entire matrix. This problem is known as the `Column Subset Selection Problem' and has been thoroughly investigated~\cite{frieze2004fast,drineas2003pass,boutsidis2009improved,deshpande2006adaptive,drineas2011faster,boutsidis2011near}. 
Recent results offer algorithms with almost matching lower bounds~\cite{cw09,boutsidis2011near,deshpande2006adaptive}. 
A simple streaming solution to the `Column Subset Selection Problem' is obtained by sampling rows from the input matrix with probability proportional to their squared $\ell_2$ norm. 
Specifically, each row $B_j$ takes the value $A_i/\sqrt{\ell p_i}$ iid with probability $p_i = \|A_i\|^2/\|A\|_F^2$.
The space it requires is $O(\ell d)$ in the worst case but it can be much lower if the chosen rows are sparse. Since the value of $\|A\|_F$ is not a priori known, the streaming algorithm is implemented by $\ell$ independent reservoir samplers, each sampling a single row according to the distribution. The update running time is therefore $O(d)$ per row in $A$. 
Despite this algorithm's apparent simplicity, providing tight bounds for its error performance required over a decade of research~\cite{frieze2004fast,ahlswede2002strong,drineas2003pass,rudelson2007sampling,vershynin2009note,oliveira2010sums,drineas2011faster}.  
Advanced such algorithms utilize the leverage scores of the rows~\cite{drineas2008relative} and not their squared $\ell_2$ norms. The discussion on matrix leverage scores goes beyond the scope of this paper, see~\cite{drineas2011faster} for more information and references.

There are a couple of other sketching techniques that try to maintain some form of truncated SVD as the data is arriving.  Incremental SVD and its variants~\cite{golub2012matrix, hall1998incremental, levey2000sequential, brand2002incremental,ross2008incremental}, are quite similar to \FD, but do not perform the shrinkage step; with each new row they recompute the truncated SVD, simply deleting the smallest singular value/vector.  No worst case error bounds can be shown for this approach, and a recent paper~\cite{GDP14} demonstrates how it may fail dramatically; this paper~\cite{GDP14} (which appeared after the initial conference versions of this paper) also shows how to approach or surpass the performance of Incremental SVD approaches using variants of the \FD approach described herein.  
Another approach by Feldman \etal~\cite{FSS13} decomposes the stream into geometrically increasing subparts, and maintains a truncated SVD for each.  Recombining these in a careful process prevents the error from accumulating too much, and takes $O(n \cdot \s{poly}(d,\ell, \log n))$ time.

\paragraph{Catalog of Related Bounds.}
We have summarized bounds of row-update streaming in Table \ref{tbl:stream}.
The space and time bounds are given in terms of $n$ (the number of rows), $d$ (the number of columns), $k$ (the specified rank to approximate), $r$ (the rank of input matrix $A$), $\eps$ (an error parameter), and $\delta$ (the probability of failure of a randomized algorithm).  
An expresion $\tilde O(x)$ hides $\s{poly}\log (x)$ terms.  
The size is sometimes measured in terms of the number of 
rows (\#R).  Otherwise, if \#R 
is not specified the space refers the number of words in the RAM model where it is assumed $O(\log nd)$ bits fit in a single word.  

Recall that the error can be measured in several ways.  
\begin{itemize}\denselist
\item A \emph{construction} result is denoted \textsf{C}.

\item If it is not constructive, it is denoted by a \textsf{P} for projection if it uses a projection $\hat A = \pi_{B}^k(A)$.  Alternatively if the weaker form of $\hat A = \Pi_{B}^k(A)$, then this is denoted $\textsf{P}_r$ where $r$ is the rank of $B$ before the projection.  

\item In most recent results a projection error of $1+\eps$ is obtained, and is denoted $\eps$\textsf{R}.  

\item Although in some cases a weaker additive error of the form 
$\|A- \hat A\|^2_F \leq \|A - A_k\|^2_F + \eps \|A\|^2_F$ is achieved and denoted $\eps$\textsf{A}.
\\
This can sometimes also be expressed as a spectral norm of the form
$\|A - \hat A\|^2_2 \leq \|A - A_k\|^2_2 + \eps \|A\|^2_F$ (note the error term $\eps \|A\|^2_F$ still has a Frobenius norm).  This is denoted $\eps$\textsf{L}$_2$.  

\item In a few cases the error does not follow these patterns and we specially denote it.  

\item Algorithms are randomized unless it is specified.  In all tables we state bounds for a constant probability of failure.  If we want to decrease the probability of failure to some parameter $\delta$, we can generally increase the size and runtime by $O(\log(1/\delta))$.  
\end{itemize}

\begin{table}[h]
\caption{\label{tbl:stream} Streaming Algorithms}
  \begin{tabular}{|p{2.7cm}|p{3.8cm}|p{4.5cm}|p{4cm}|}
\cline{1-4}  
\multicolumn{4}{|c|}{\label{tbl:steam}\textbf{Streaming algorithms}}   \\
\cline{1-4}
\textbf{Paper} & \textbf{Space} & \textbf{Time} & \textbf{Bound} \\
     \cline{1-4} 
 DKM06\cite{drineas2006fast2} \newline LinearTimeSVD  & 
    \#R = $O(1/\eps^2)$ \newline $O((d+1/\eps^2)/\eps^4)$ & 
    $O((d+1/\eps^2)/\eps^4 + \nnz(A))$ & 
    \pbox{4cm}{\textsf{P}, $\eps$\textsf{L}$_2$} \\
    \cline{2-4}
    & \#R = $O(k/\eps^2)$ \newline $O((k/\eps^2)^2(d+k/\eps^2))$ & 
    $O((k/\eps^2)^2(d+k/\eps^2)+ \nnz(A))$ & 
    \pbox{4cm}{\textsf{P}, $\eps$\textsf{A}} \\

    \cline{1-4}
  Sar06\cite{sarlos2006improved}   \newline turnstile
    & \#R  = $O(k/\eps + k\log k)$ \newline 
        $O(d(k/\eps + k\log k))$ & 
        $O(\nnz(A)(k/\eps+k \log k) + d(k/\eps+k \log k)^2))$ & 
       \textsf{P}$_{O(k/\eps + k \log k)}$, $\eps$\textsf{R}\\

 \cline{1-4} 
 	CW09\cite{cw09}   &\#R =  $O(k/\eps)$  & $O(nd^2 + (ndk/\eps))$ & \textsf{P}$_{O(k/\eps)}$, $\eps$\textsf{R} \\

 \cline{1-4} 
 	CW09\cite{cw09} &$O((n+d)(k/\eps))$  & $O(nd^2 + (ndk/\eps))$ & \textsf{C}, $\eps$\textsf{R} \\
	
 	\cline{1-4}
    FSS13\cite{FSS13} \newline deterministic & $O((dk/\eps) \log n)$ & $n ((dk/\eps)\log n)^{O(1)}$ &
    $\textsf{P}$, $\eps \textsf{R}$ \\

 	\cline{1-4}
	\hline \hline
    \textit{This paper} \newline deterministic & \#R = $\lceil 1/\eps \rceil$ \newline $\Theta(d/\eps)$ & $O(nd/\eps)$ & $\s{covar-err} \leq \eps$ \\
 	\cline{2-4}
     & \#R = $\lceil k/\eps + k \rceil$  \newline $\Theta(dk/\eps)$ & $O(ndk/\eps)$ & \textsf{P}, $\eps$\textsf{R}\\
	\cline{1-4}    
    \end{tabular}
\end{table}

It is worth noting that under the construction model, and allowing streaming turnstile updates to each element of the matrix, the hashing algorithm has been shown space optimal by Clarkson and Woodruff~\cite{cw09} (assuming each matrix entry requires $O(\log nd)$ bits, and otherwise off by only a $O(\log nd)$ factor).  
It is randomized and it constructs a decomposition of a rank $k$ matrix $\hat A$ that satisfies $\|A - \hat A\|_F \leq (1+\eps) \|A - A_k\|_F$, with probability at least $1-\delta$.   This provides a relative error construction bound of size $O((k/\eps)(n + d) \log(nd))$ bits.  
They also show an $\Omega((k/\eps)(n+d))$ bits lower bound.  
Our paper shows that the row-update model is strictly easier with a lower upper bound.  

Although not explicitly described in their paper~\cite{cw09}, one can directly use their techniques and analysis to achieve a weak form of a non-construction projection bound.  One maintains a matrix $B = A S$ with $m = O((k/\eps) \log(1/\delta))$ columns where $S$ is a $d \times m$ matrix where each entry is chosen from $\{-1,+1\}$ at random.
Then setting $\hat A = \pi_{B}(A)$, achieves a $\s{proj-err} = 1+\eps$, however $B$ is rank $O((k/\eps) \log(1/\delta))$ and hence that is the only bound on $\hat A$ as well.

\section{Experimental Results}
We compare \FD to five different algorithms. The first two constitute brute force and na\"{\i}ve baselines, described precisely next. The other three are common algorithms that are used in practice: \s{sampling}, \s{hashing}, and \s{random-projection}, described in Section \ref{sec:related}. 
All tested methods receive the rows of an $n\times d$ matrix $A$ one by one. They are all limited in storage to an $\ell \times d$ sketch matrix $B$ and additional $o(\ell d)$ space for any auxiliary variables. This is with the exception of the brute force algorithm that requires $\Theta(d^2)$ space. For a given input matrix $A$ we compare the computational efficiency of the different methods and their resulting sketch accuracy. The computational efficiency is taken as the time required to produce $B$ from the stream of $A$'s rows. The accuracy of a sketch matrix $B$ is measured by both the covariance error and the projection error.  
Since some of the algorithms below are randomized, each algorithm was executed 5 times for each input parameter setting. The reported results are median values of these independent executions.
All methods are implemented in python with NumPy and compiled by python 2.7 compiler, and experiments are performed on a linux machine with a 6 Core Intel (R) Xeon (R) 2.4GHz CPU and 128GB RAM.

\subsection{Competing algorithms}
\noindent \s{Brute Force}: The brute force approach produces the optimal rank $\ell$ approximation of $A$. It explicitly computes the matrix $A^TA = \sum_{i=1}^{n}A_i^TA_i$ by aggregating the outer products of the rows of $A$. The final `sketch' consists of the top $\ell$ right singular vectors and values (square rooted) of $A^TA$ which are obtained by computing its SVD. The update time of Brute Force is $\Theta(d^2)$ and its space requirement is $\Theta(d^2)$.

\noindent  \s{Na\"{\i}ve}: Upon receiving a row in $A$ the na\"{\i}ve method does nothing. The sketch it returns is an all zeros $\ell$ by $d$ matrix. This baseline is important for two reasons: First, it can actually be more accurate than random methods due to under sampling scaling issues. Second, although it does not perform any computation, it does incur computation overheads such as I/O exactly like the other methods.

The other three baselines are \s{Sampling}, \s{Hashing}, and \s{Random-projection}, described in Section \ref{sec:related}.

\subsection{Datasets}
\label{sec:datasets}
We compare the performance of our algorithm on both synthetic and real datasets.
For the \s{synthetic} dataset, each row of the generated input matrices, $A$, consists of an $m$ dimensional signal and $d$ dimensional noise ($m \ll d$).
More accurately, $A = SDU + N/\zeta$. The signal coefficients matrix $S \in \mathbb{R}^{n \times m}$ is such that $S_{i,j} \sim N(0,1)$ i.i.d. 
The matrix $D \in \R^{m \times m}$ has only diagonal non-zero entered $D_{i,i} = 1 - (i - 1)/m$, which gives linearly diminishing signal singular values. The signal row space matrix $U \in \mathbb{R}^{m \times d}$ contains a random $m$ dimensional subspace in $\mathbb{R}^d$, for clarity, $UU^T = I_d$. The matrix $SDU$ is exactly rank $m$ and constitutes the signal we wish to recover.
The matrix $N \in \mathbb{R}^{n \times d}$ contributes additive Gaussian noise $N_{i,j} \sim N (0,1)$. Due to~\cite{vershynin2011spectral}, the spectral norms of $SDU$ and $N$ are expected to be the same up to some universal constant $c_1$. Experimentally, $c_1 \approx 1$. Therefore, when $\zeta \leq c_1$ we cannot expect to recover the signal because the noise spectrally dominates it. On the other hand, when $\zeta \geq c_1$ the spectral norm is dominated by the signal which is therefore recoverable. Note that the Frobenius norm of $A$ is dominated by the noise for any $\zeta \leq c_2\sqrt{d/m}$, for another constant close to $1$, $c_2$. Therefore, in the typical case where $c_1 \leq \zeta \leq c_2 \sqrt{d/m}$, the signal is recoverable by spectral methods even though the vast majority of the energy in each row is due to noise.
In our experiments, we consider $n \in \{10000,20000,\ldots,100000\}$ ($n = 10000$ as default value), $d = 1000$, $m \in \{10,20,50\}$ ($m = 10$ as default value), and $\zeta \in \{5,10,15,20\}$ ($\zeta = 10$ as default value). In projection error experiments, we recover rank $k = m$ of the dataset.

We consider the real-world dataset \s{Birds}~\cite{birds-link} in which each row represents an image of a bird, and each column a feature. This dataset has 11788 data points and 312 features, and we recover rank $k = 100$ for projection error experiments.

\vspace{2mm}
\begin{figure}[t!]
\begin{centering}
\includegraphics[width=\figsize]{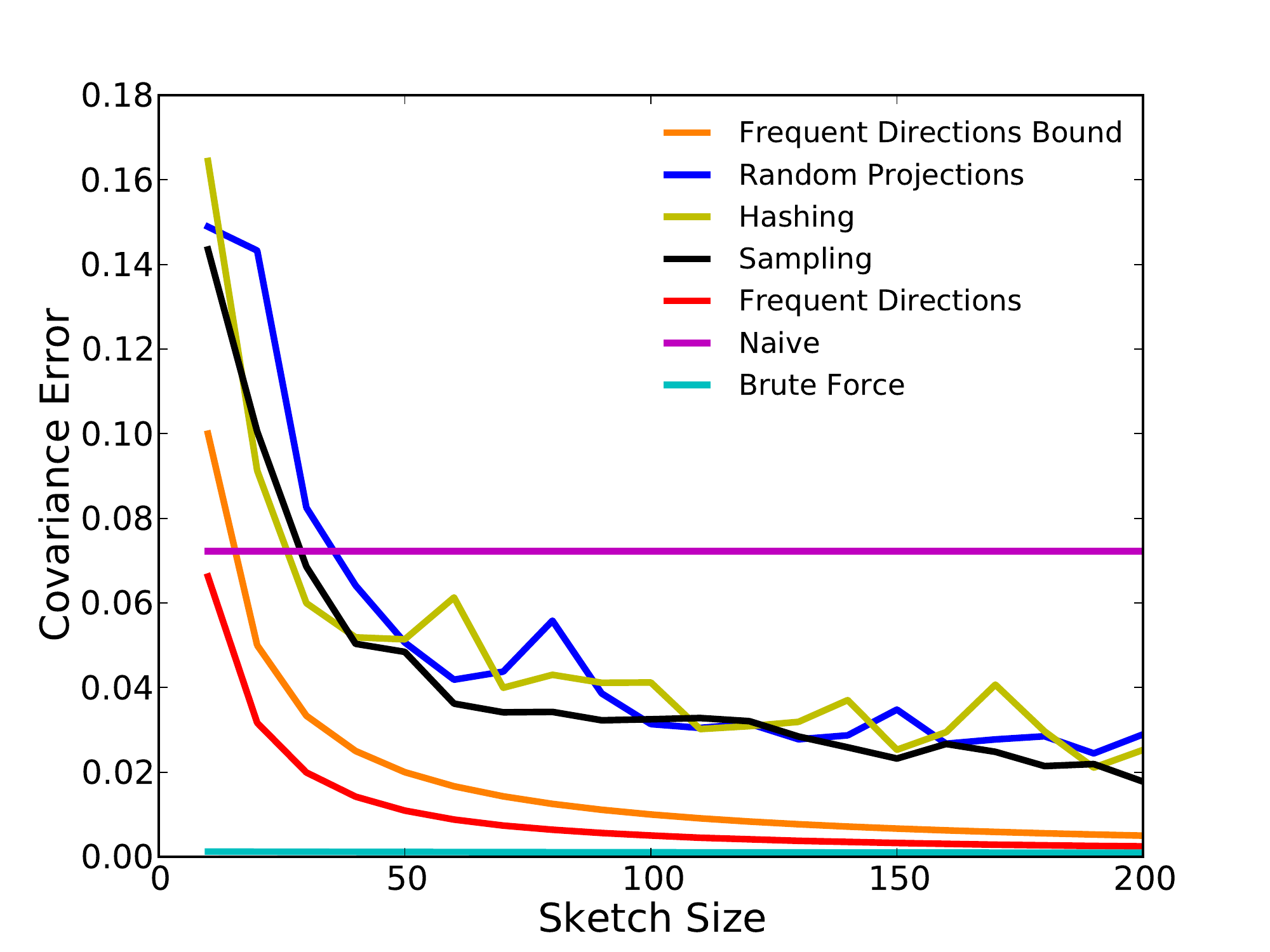}
\includegraphics[width=\figsize]{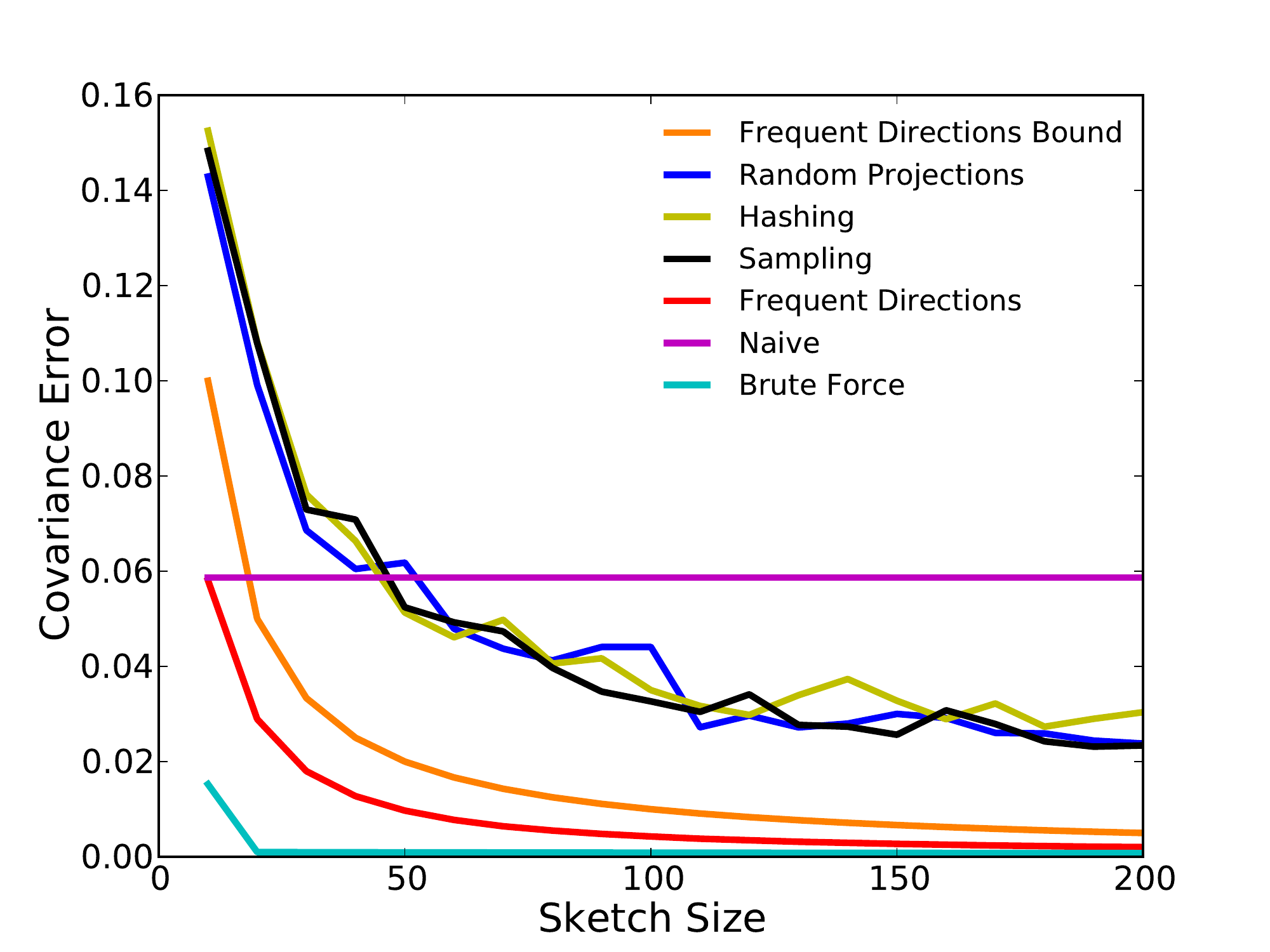}
\includegraphics[width=\figsize]{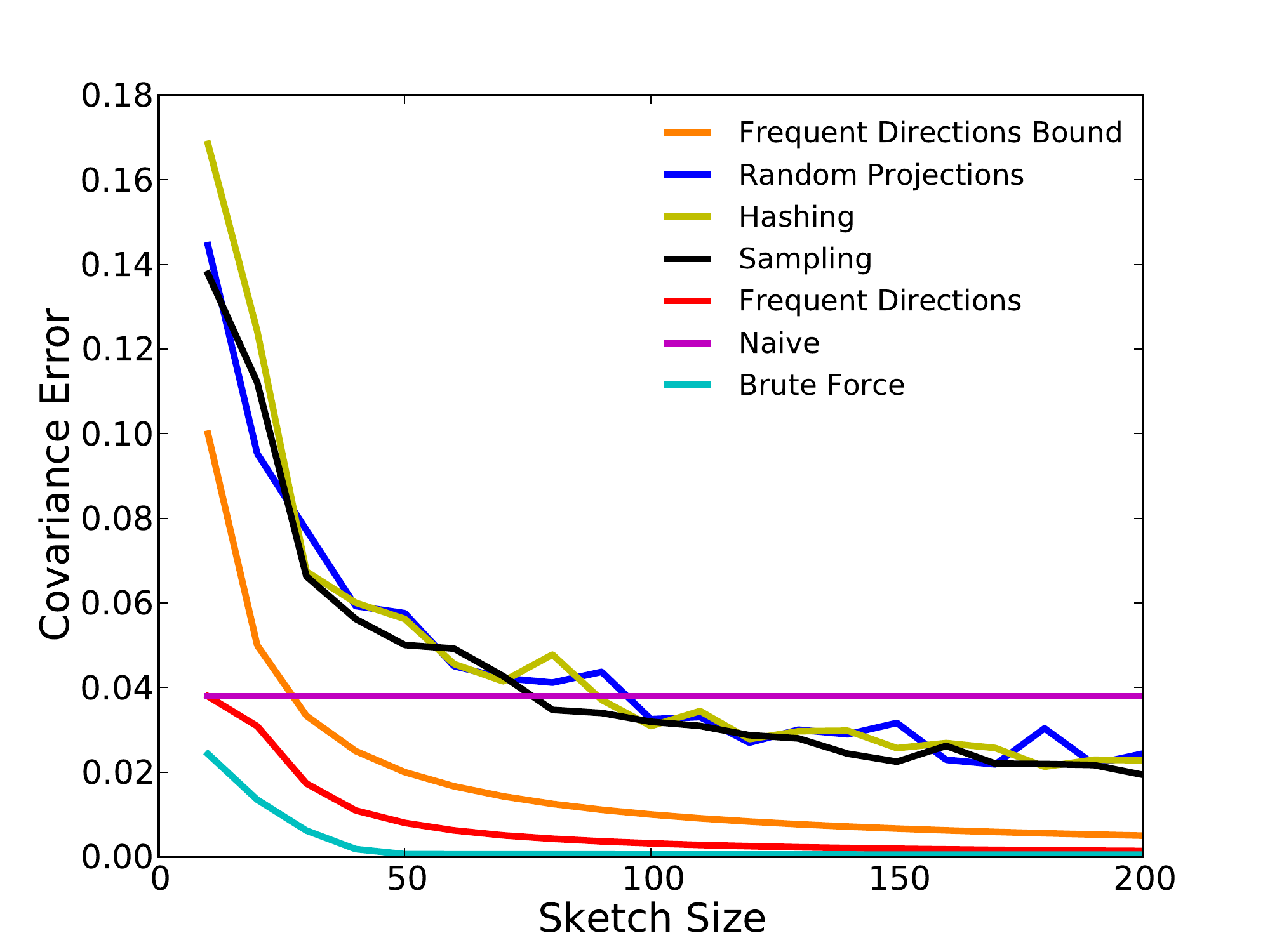}
\caption{
Covariance error vs sketch size. On synthetic data with different signal dimension ($m$). $m = 10$ (left), $m = 20$ (middle), and $m = 50$ (right).}  
\label{fig:cov_err_vs_sketch_m}
\end{centering}
\end{figure}

\begin{figure}[t!]
\begin{centering}
\includegraphics[width=\figsize]{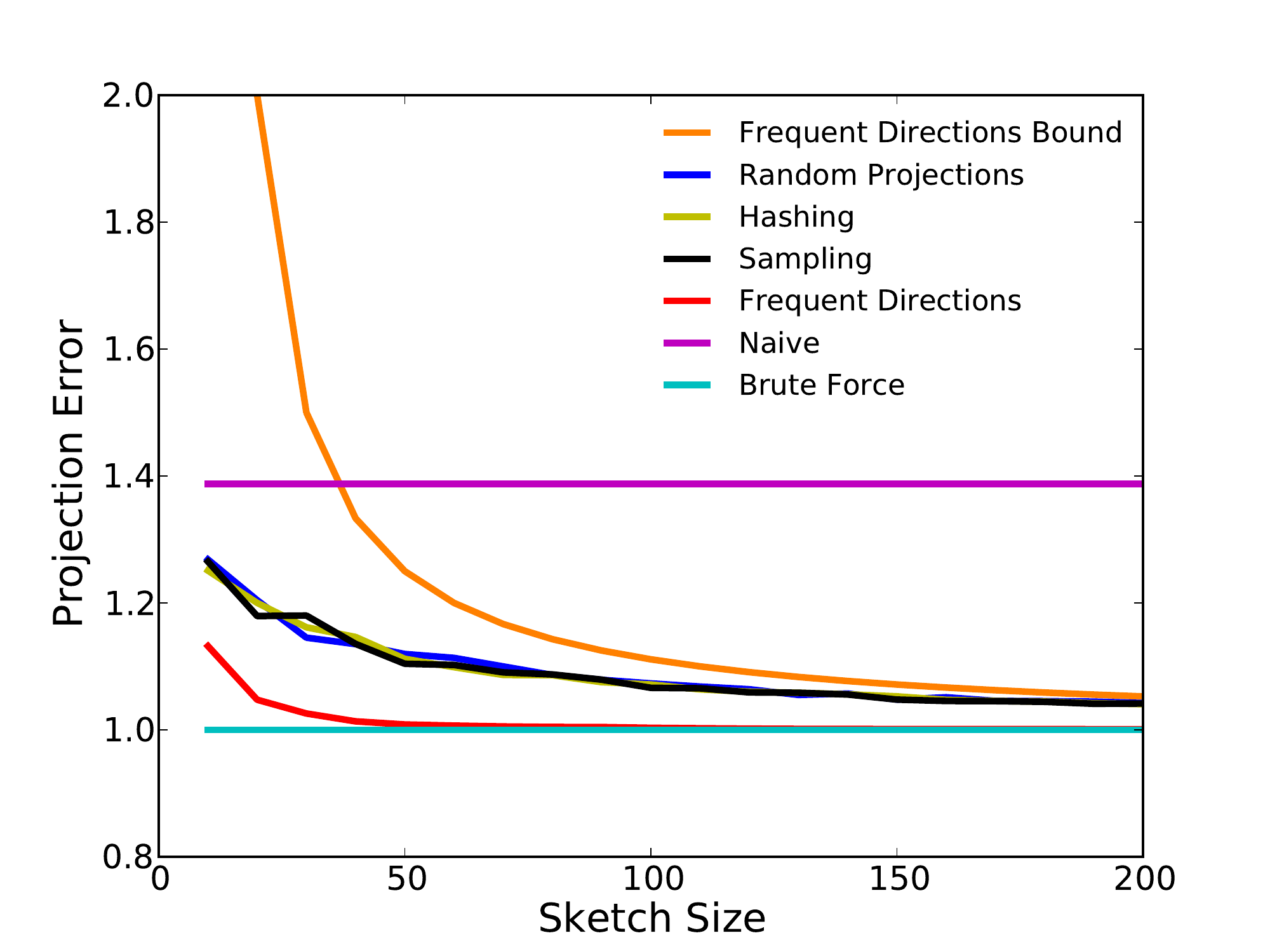}
\includegraphics[width=\figsize]{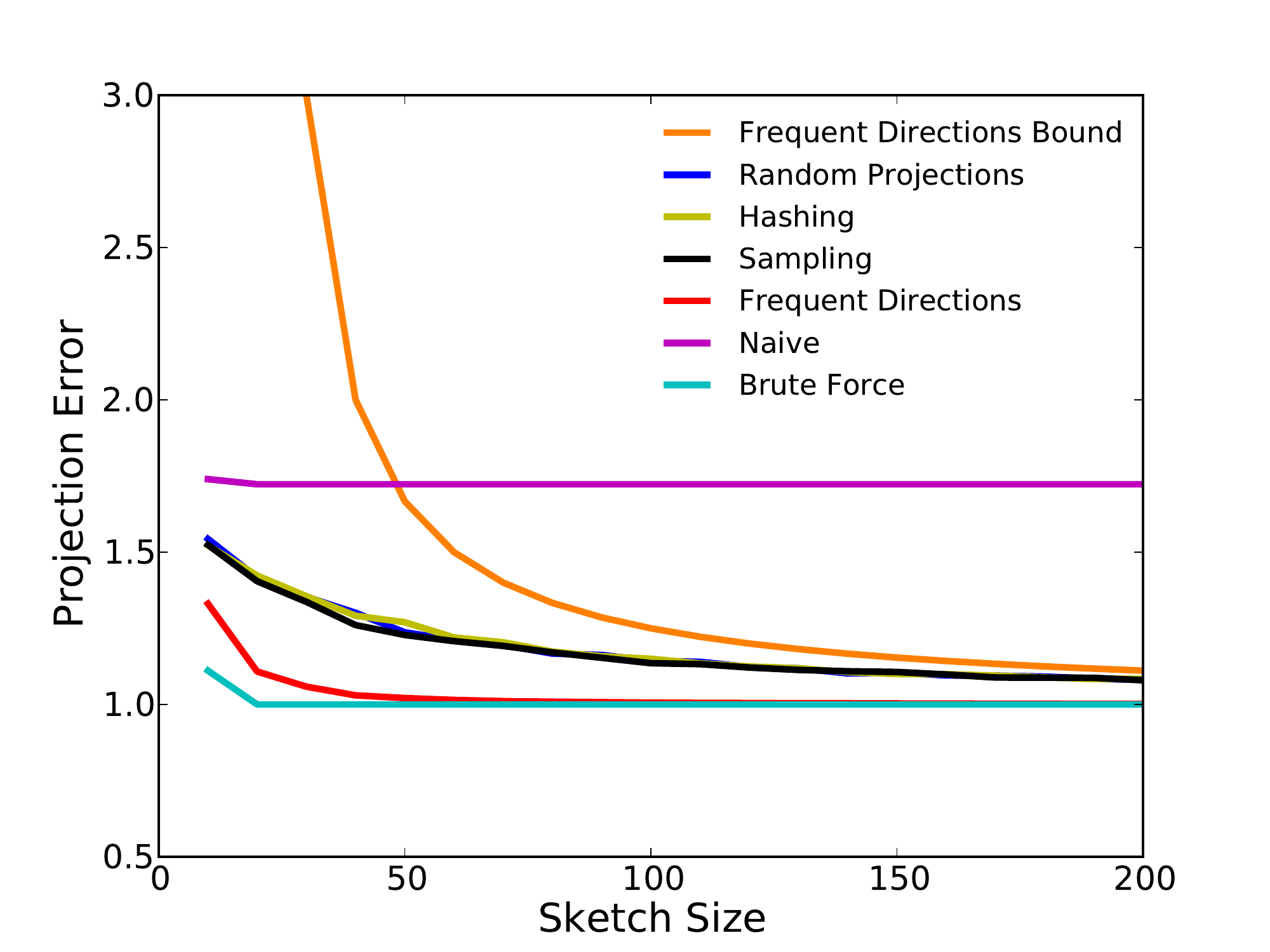}
\includegraphics[width=\figsize]{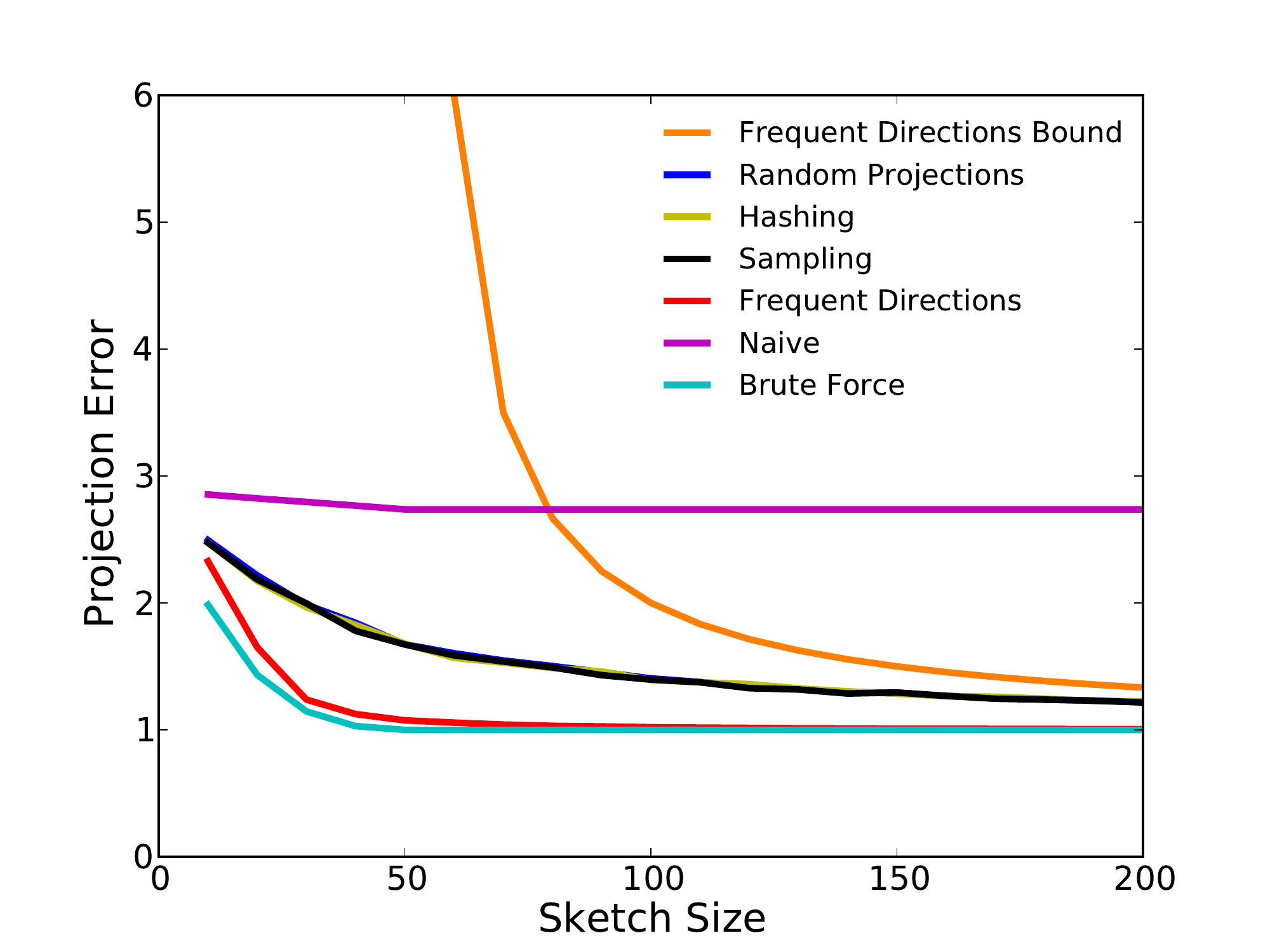}
\caption{
Projection error vs sketch size. On synthetic data with different signal dimension ($m$). $m = 10$ (left), $m = 20$ (middle), and $m = 50$ (right).}
\label{fig:prj_err_vs_sketch_m}
\end{centering}
\end{figure}

\vspace{2mm}
\begin{figure}[t!]
\begin{centering}
\includegraphics[width=\figsize]{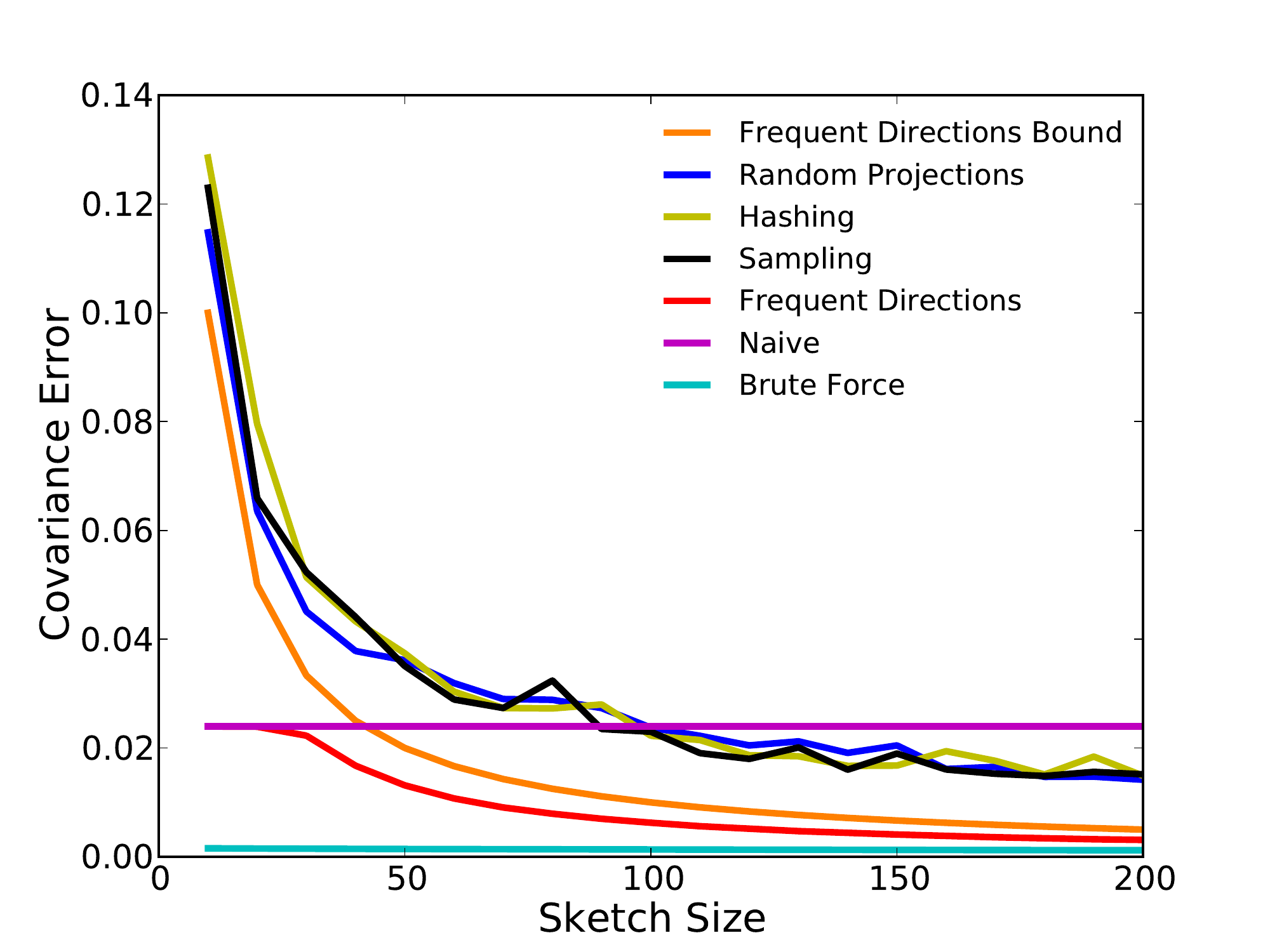}
\includegraphics[width=\figsize]{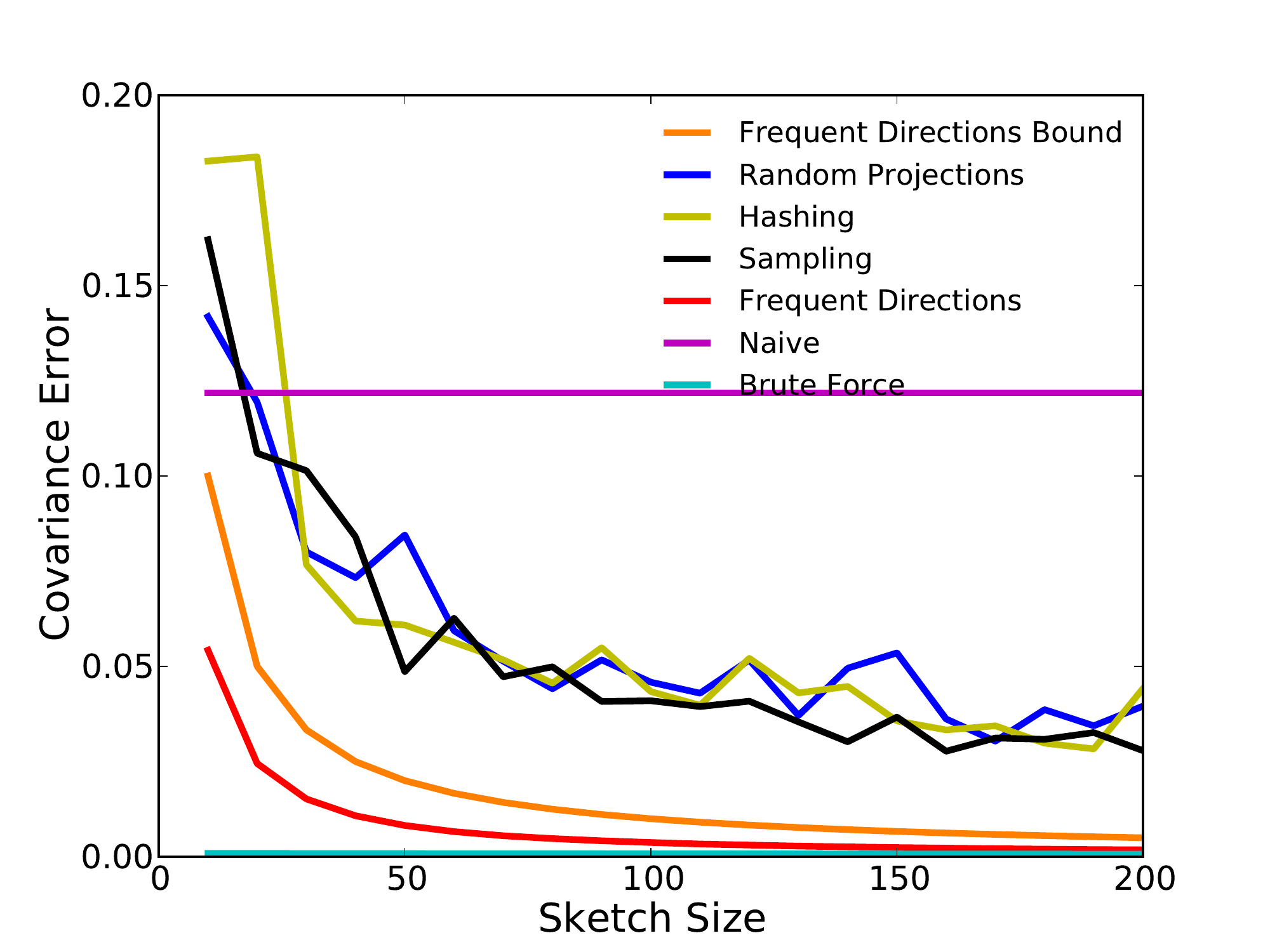}
\includegraphics[width=\figsize]{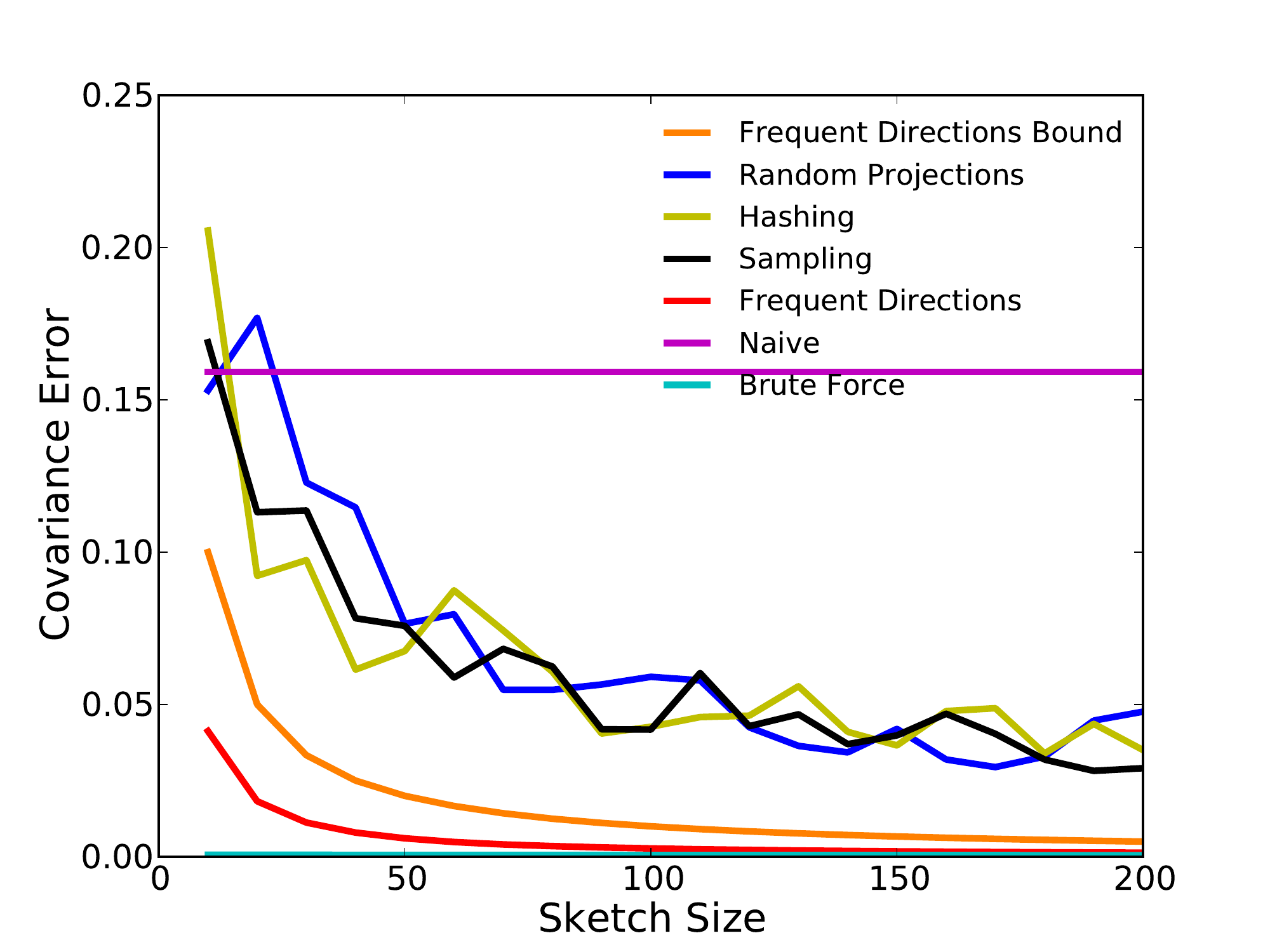}
\caption{
Covariance error vs sketch size. On synthetic data with different noise ratio ($\eta$). $\eta = 5$ (left), $\eta = 15$ (middle), and $\eta = 20$ (right).}  
\label{fig:cov_err_vs_sketch_eta}
\end{centering}
\end{figure}

\begin{figure}[t!]
\begin{centering}
\includegraphics[width=\figsize]{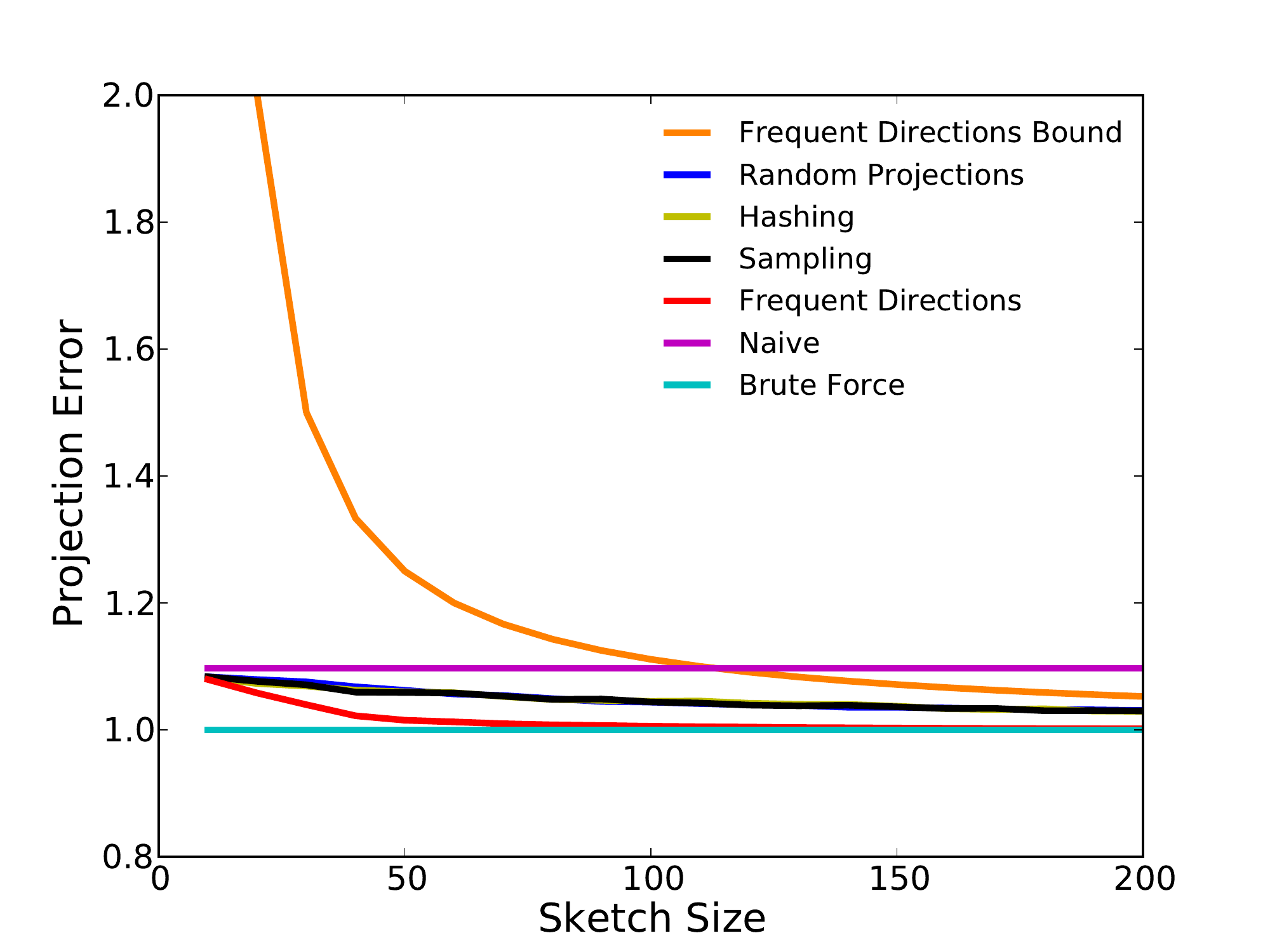}
\includegraphics[width=\figsize]{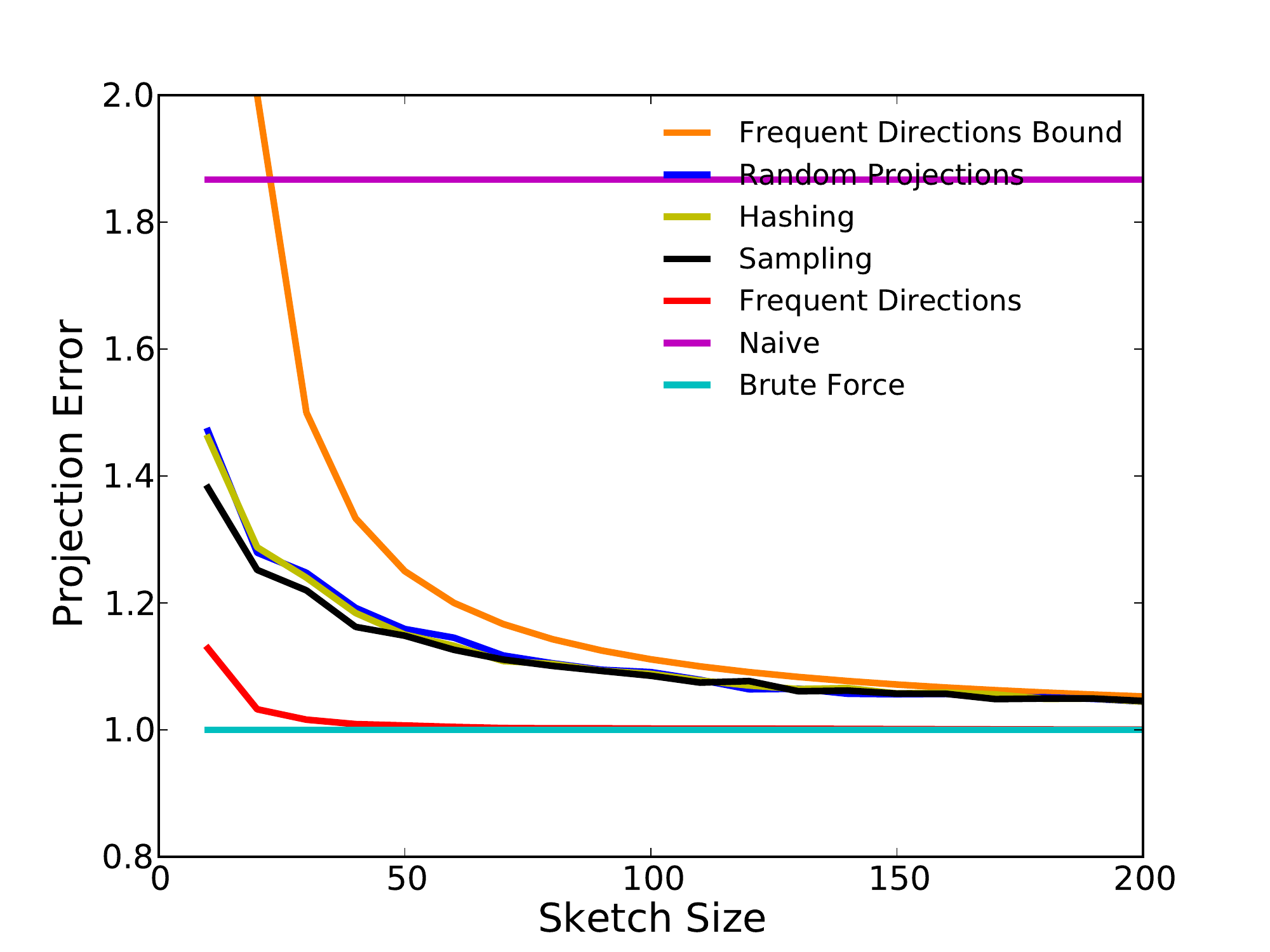}
\includegraphics[width=\figsize]{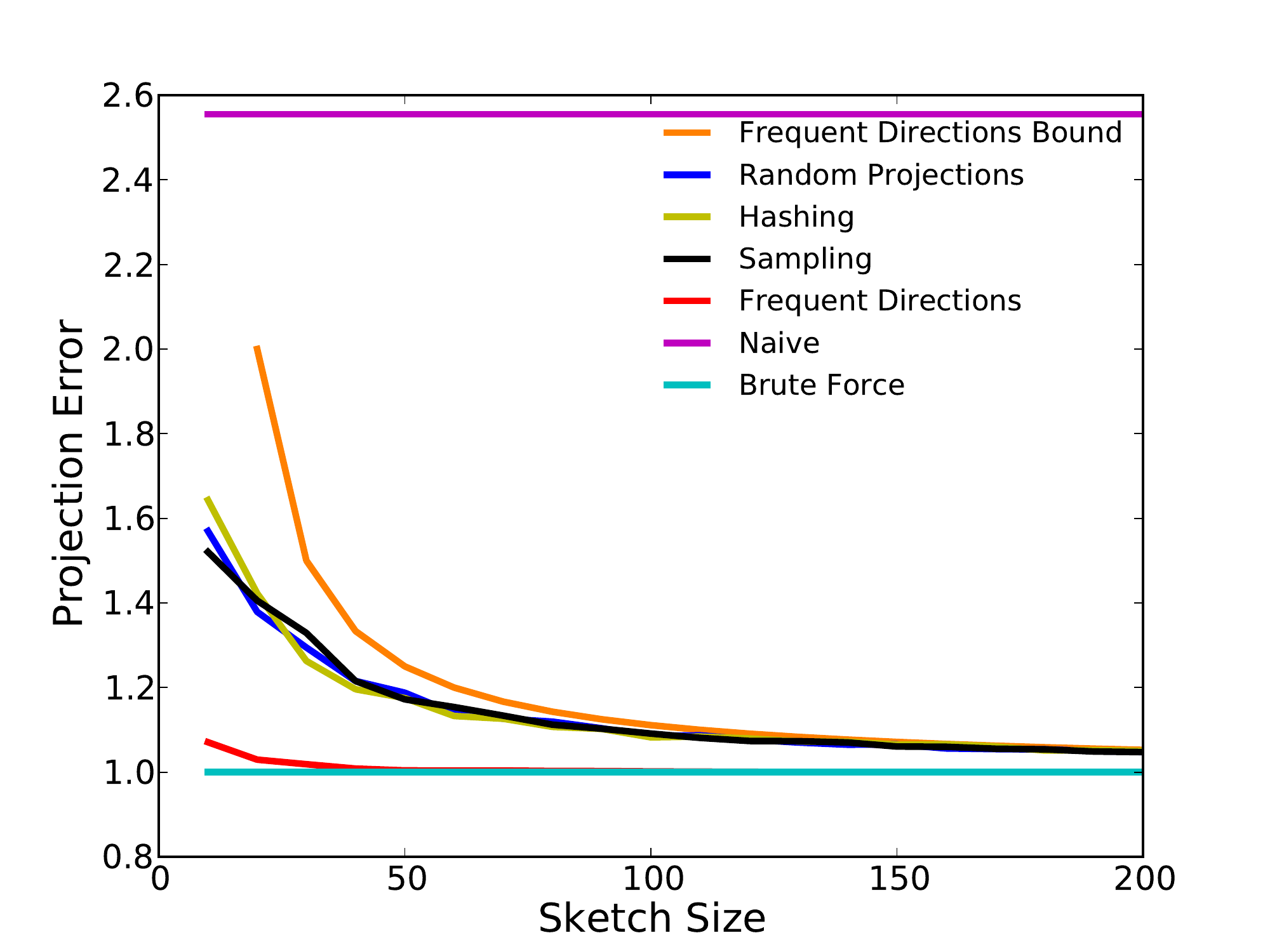}
\caption{
Projection error vs sketch size. On synthetic data with different noise ratio ($\eta$). $\eta = 5$ (left), $\eta = 15$ (middle), and $\eta = 20$ (right).}
\label{fig:prj_err_vs_sketch_eta}
\end{centering}
\end{figure}

\subsection{Results}
The performance of \FD  was measured both in terms of accuracy and running time compared to the above algorithms. In the first experiment, a moderately sized matrix ($10{,}000 \times 1{,}000$) was approximated by each algorithm. The moderate input matrix size is needed to accommodate the brute force algorithm and to enable the exact error measure. The results are shown as we vary the data dimension $m$ in Figure \ref{fig:cov_err_vs_sketch_m} for covariance error and Figure \ref{fig:prj_err_vs_sketch_m} for projection error; similar results are shown as the noise parameter $\eta$ is changed in Figure \ref{fig:cov_err_vs_sketch_eta} for covariance error and Figure \ref{fig:prj_err_vs_sketch_eta} for projection error.  
These give rise to a few interesting observations. First, all three random techniques actually perform worse in covariance error than \s{na\"{\i}ve} for small sketch sizes. This is a side effect of under-sampling which causes overcorrection. This is not the case with \FD. 
Second, the three random techniques perform equally well. This might be a result of the chosen input. Nevertheless, practitioners should consider these as potentially comparable alternatives. 
Third, the curve indicated by ``Frequent Direction Bound" plots the relative accuracy guaranteed by \FD, which is equal to $1/\ell$ in case of covariance error, and equal to $\ell/(\ell - k)$ in case of projection error. Note that in covariance error plots,``Frequent Direction Bound" is consistently lower than the random methods. This means that the worst case performance guarantee is lower than the actual performance of the competing algorithms. Finally, \FD produces significantly more accurate sketches than predicted by its worst case analysis.

The running time of \FD (specifically Algorithm \ref{alg:freqDir2}), however, is not better than its competitors. This is clearly predicted by their asymptotic running times. In Figure \ref{fig:run_time_vs_sketch_m} as we vary the data dimension $m$ and in Figure \ref{fig:run_time_vs_sketch_eta} as we vary the noise parameter $\eta$, the running times (in seconds) of the sketching algorithms are plotted as a function of their sketch sizes. Clearly, the larger the sketch, the higher the running time. Note that \s{hashing} is extremely fast. In fact, it is almost as fast as \s{na\"{\i}ve}, which does nothing other than read the data!  \s{Sampling} is also faster than Frequent Directions.  
\s{Random-project} is also faster than \FD, but only by a factor $3$; this makes sense since they have the same asymptotic runtime, but \s{random-projection} just needs to add each new row to $\ell$ other vectors while \FD amortizes over more complex \svd\ calls.  
It is important to stress that the implementations above are not carefully optimized.
Different implementations might lead to different results.

\begin{figure}[t!]
\begin{centering}
\includegraphics[width=\figsize]{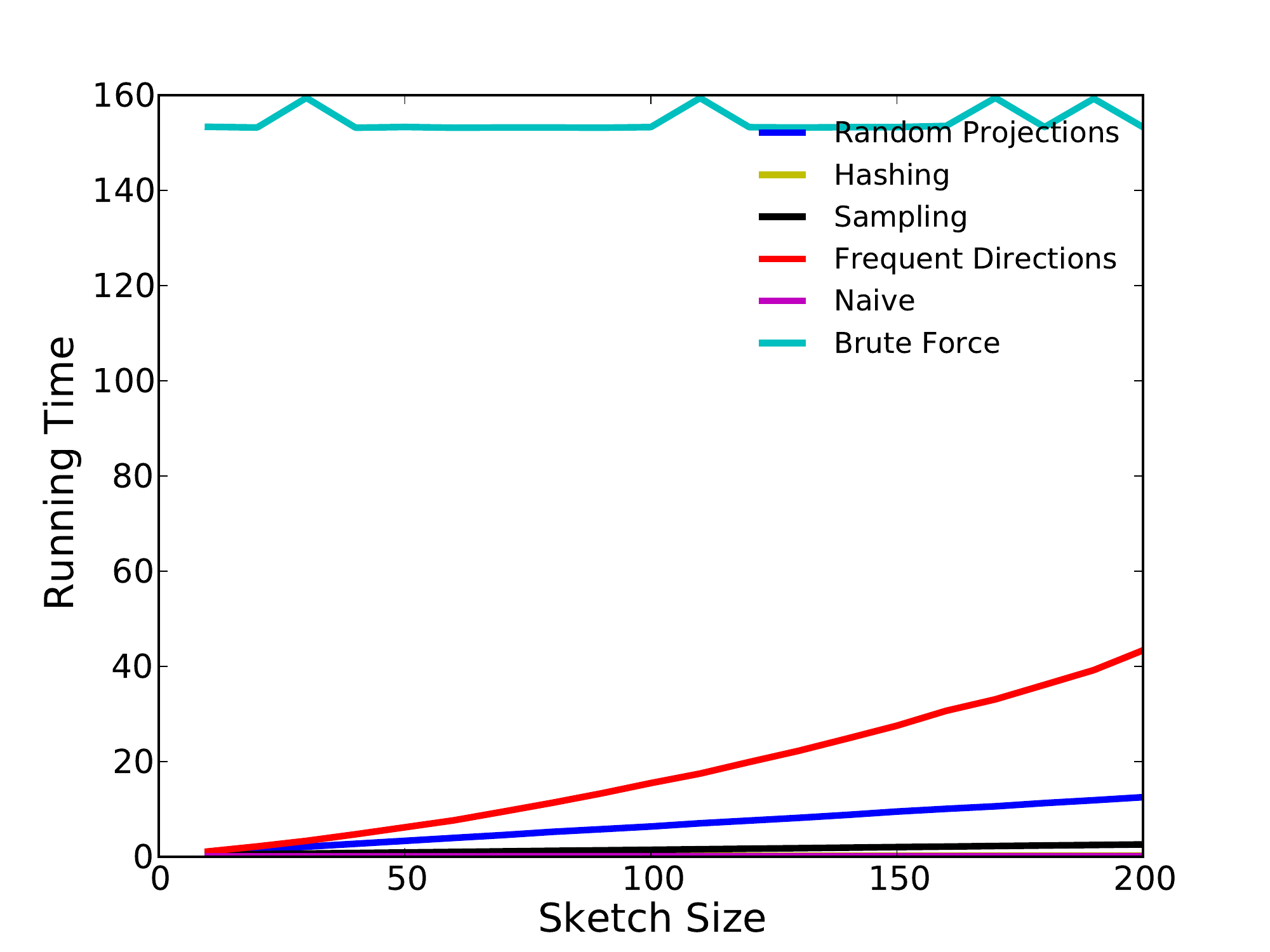}
\includegraphics[width=\figsize]{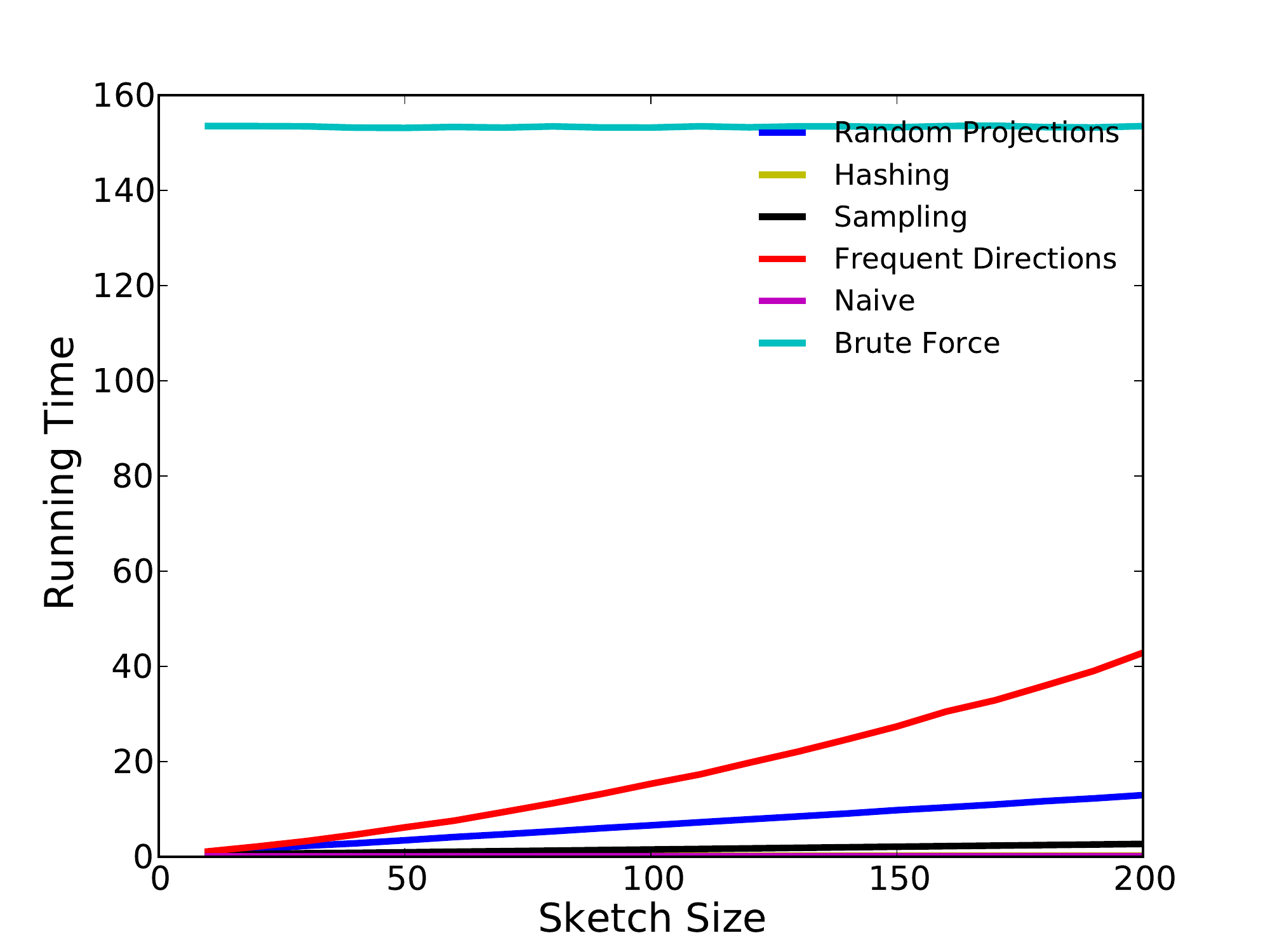}
\includegraphics[width=\figsize]{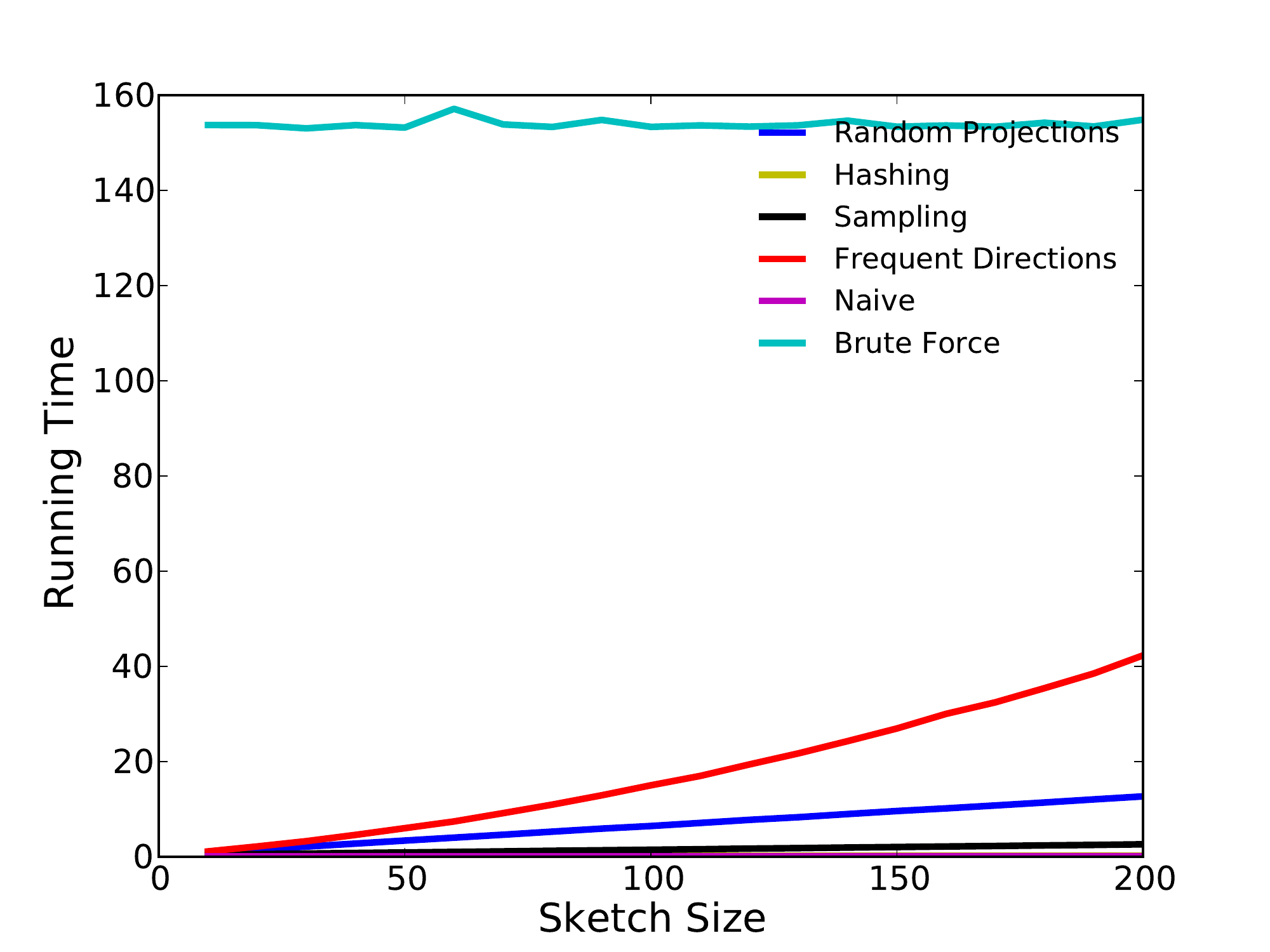}
\caption{
Running time (seconds) vs sketch size. On synthetic data with different signal dimension ($m$). $m = 10$ (left), $m = 20$ (middle), and $m = 50$ (right).}
\label{fig:run_time_vs_sketch_m}
\end{centering}
\end{figure}

\begin{figure}[t!]
\begin{centering}
\includegraphics[width=\figsize]{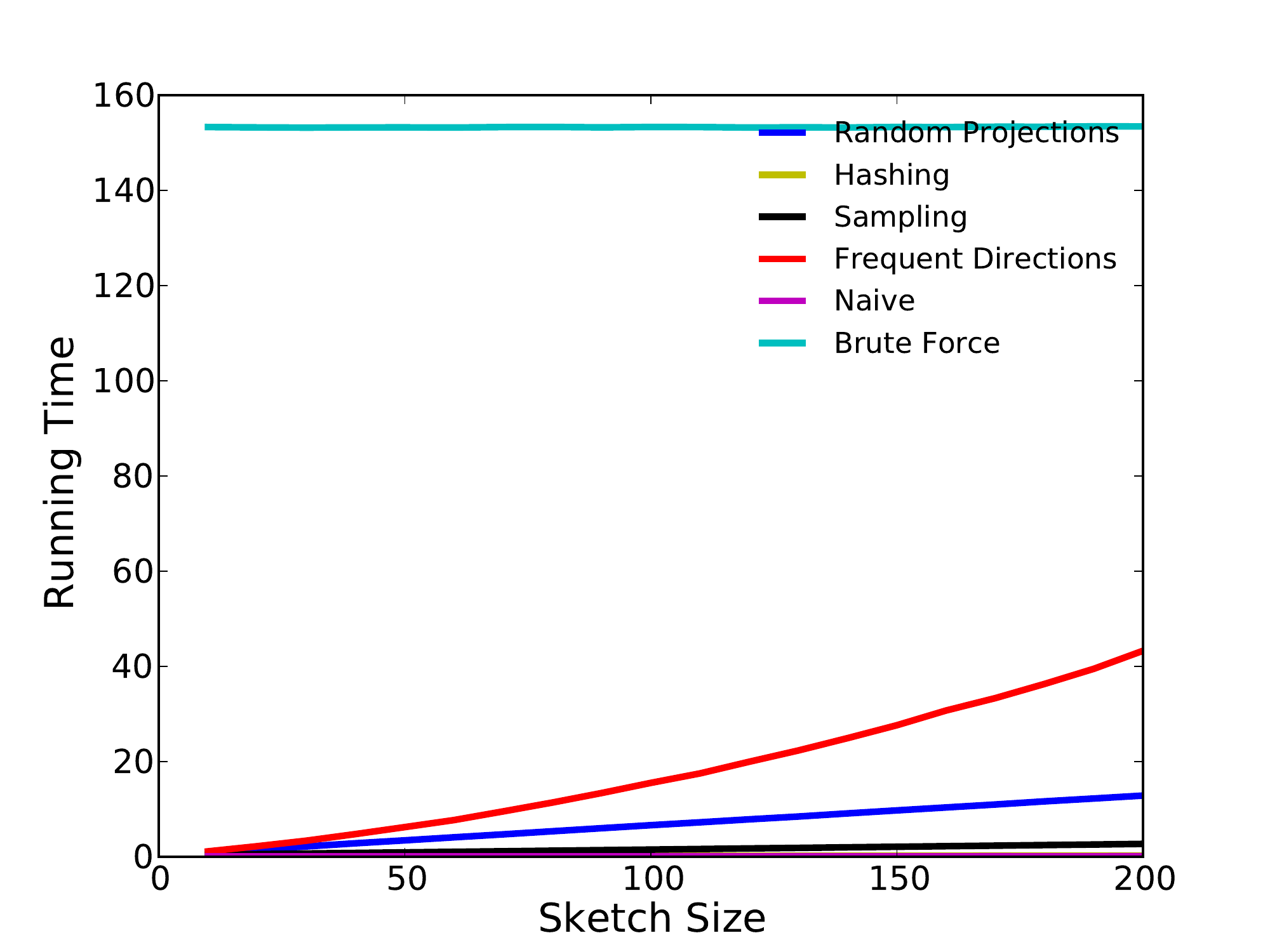}
\includegraphics[width=\figsize]{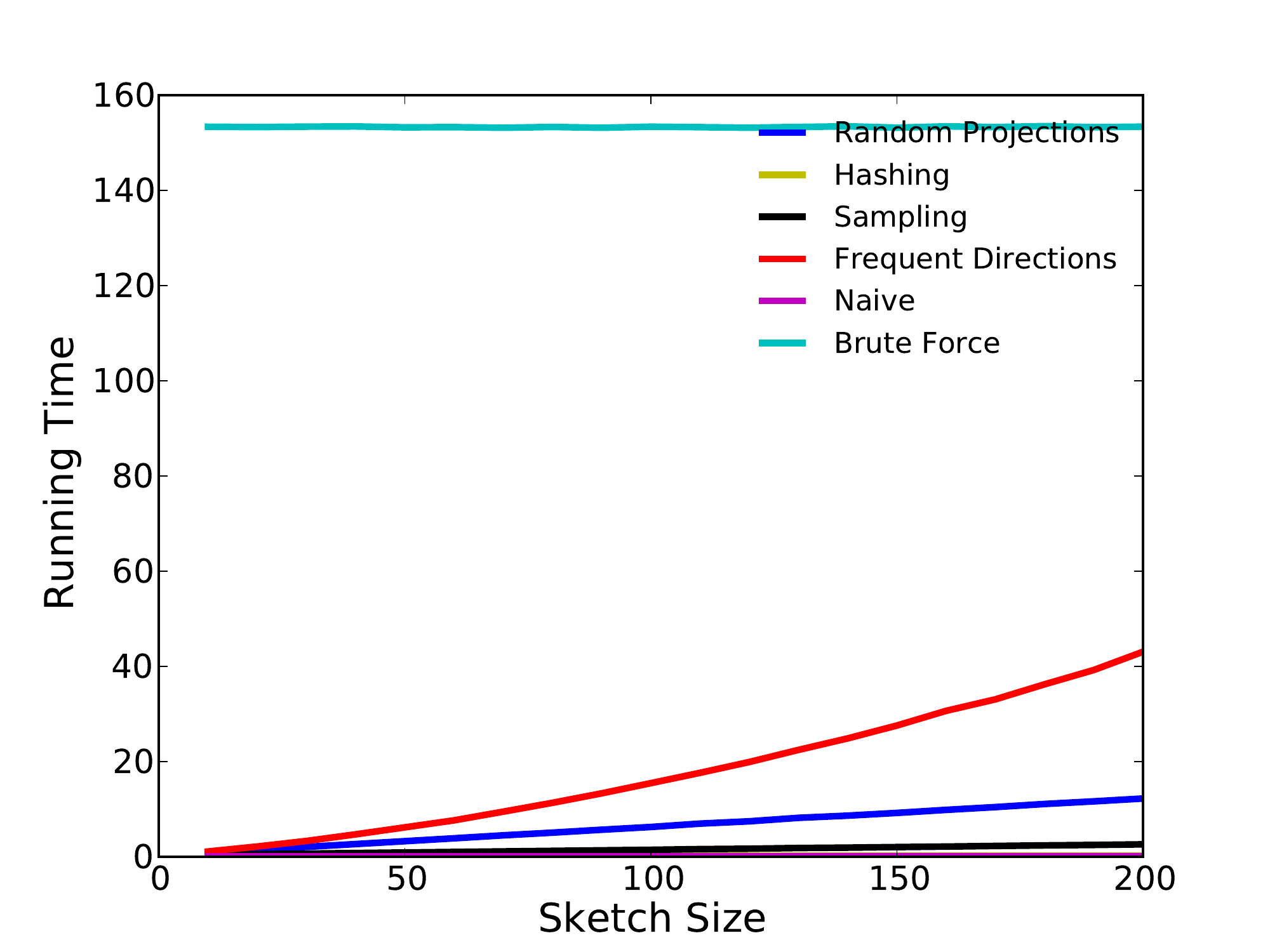}
\includegraphics[width=\figsize]{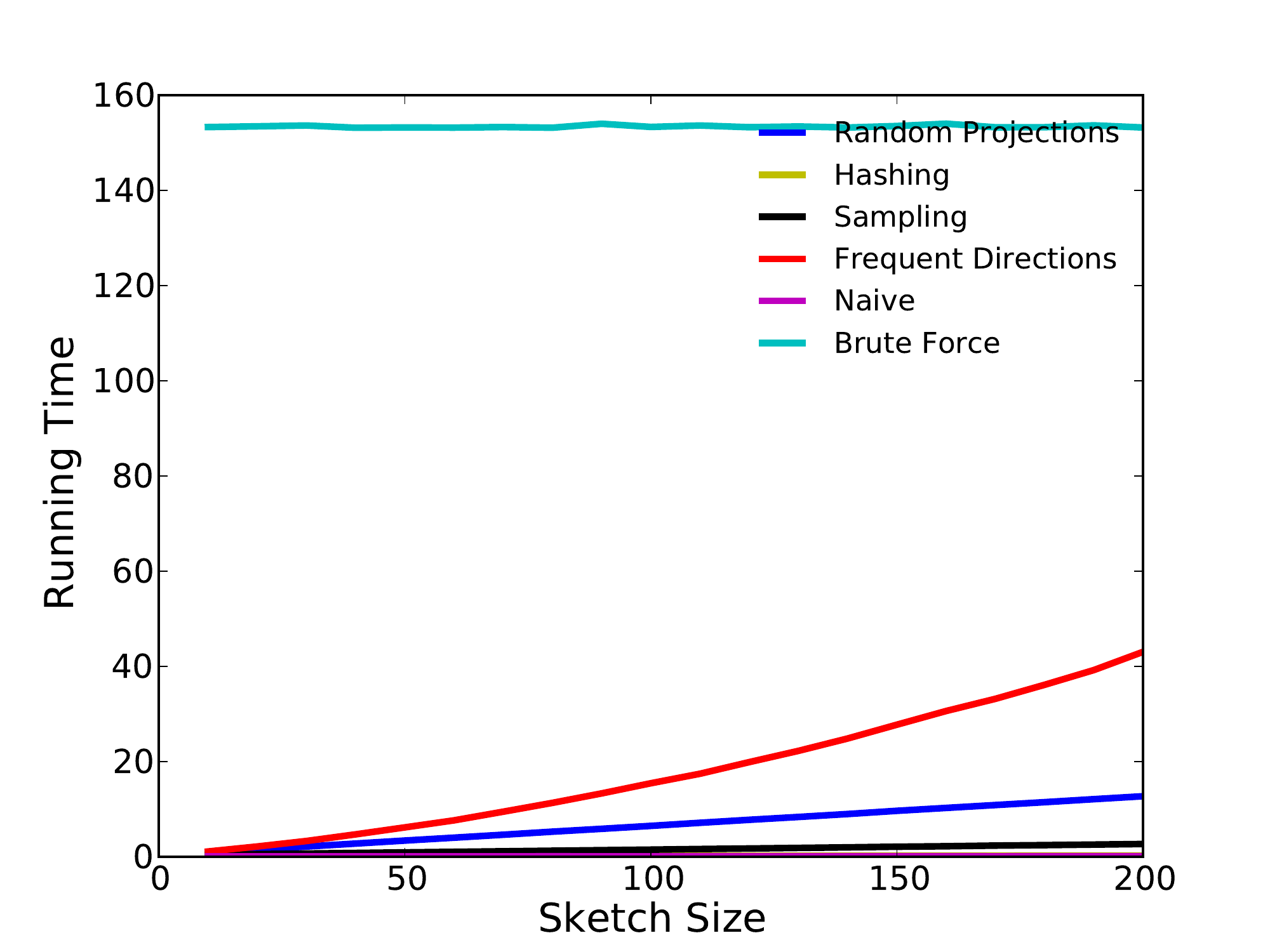}
\caption{
Running time (seconds) vs sketch size. On synthetic data with different noise ratio ($\eta$). $\eta = 5$ (left), $\eta = 10$ (middle), and $\eta = 15$ (right).}
\label{fig:run_time_vs_sketch_eta}
\end{centering}
\end{figure}

Similar results in error and runtime are shown on the real \s{Birds} data set in Figure \ref{fig:birds}.  

\begin{figure}[t!]
\begin{centering}
\includegraphics[width=\figsize]{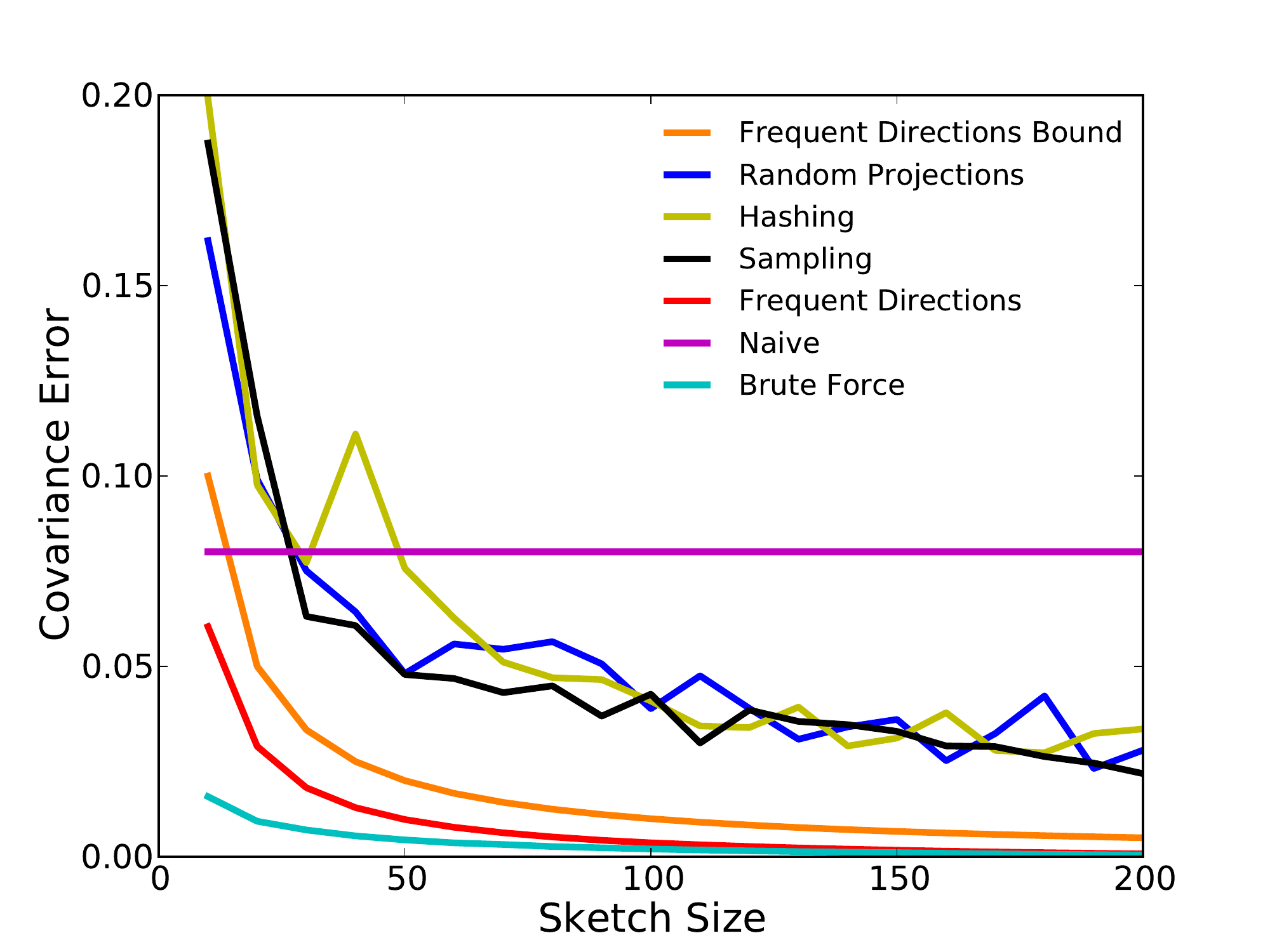}
\includegraphics[width=\figsize]{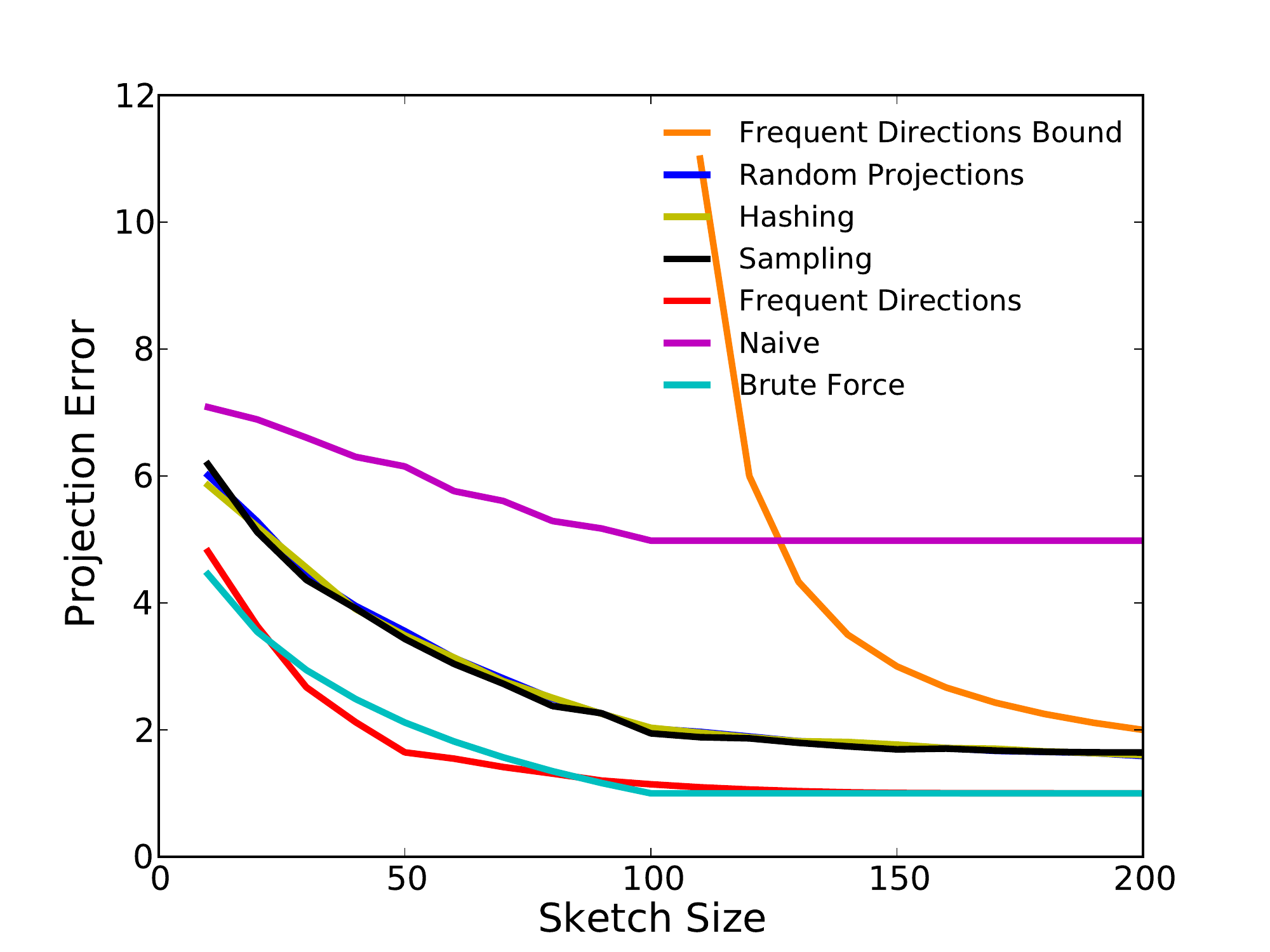}
\includegraphics[width=\figsize]{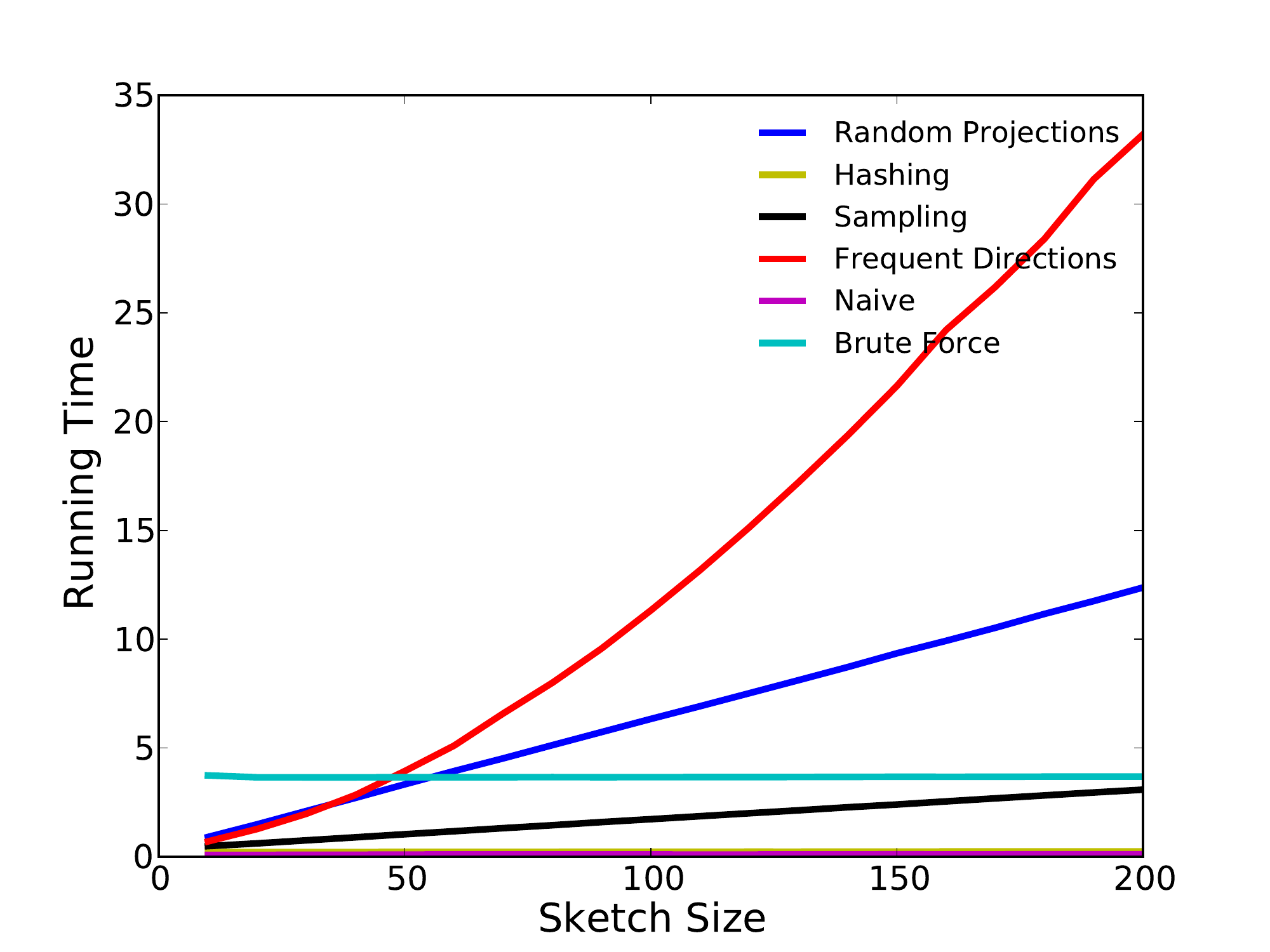}
\caption{
Covariance error (left), Projection error (middle) and Running time (right) on birds data.}
\label{fig:birds}
\end{centering}
\end{figure}

Nevertheless, we will claim that \FD scales well. Its running time is $O(nd\ell)$, which is linear in each of the three terms. In Figure \ref{fig:time_vs_input}, we fix the sketch size to be $\ell = 100$ and increase $n$ and $d$. Note that the running time is indeed linear in both $n$ and $d$ as predicted. Moreover, sketching an input matrix of size $10^5 \times 10^4$ requires roughly 3 minutes. Assuming 4 byte floating point numbers, this matrix occupies roughly 4Gb of disk space. More
importantly though, \FD is a streaming algorithm. Thus, its memory footprint is fixed and its running time is exactly linear in $n$. For example, sketching a 40Gb matrix of size $10^6 \times 10^4$
terminates in half an hour. The fact that \FD is also perfectly parallelizable (Section \ref{sec:parallelization}) makes \FD applicable to truly massive and distributed matrices.

\begin{figure}[t!]
\begin{centering}
\subfigure{
\includegraphics[width=\figsize]{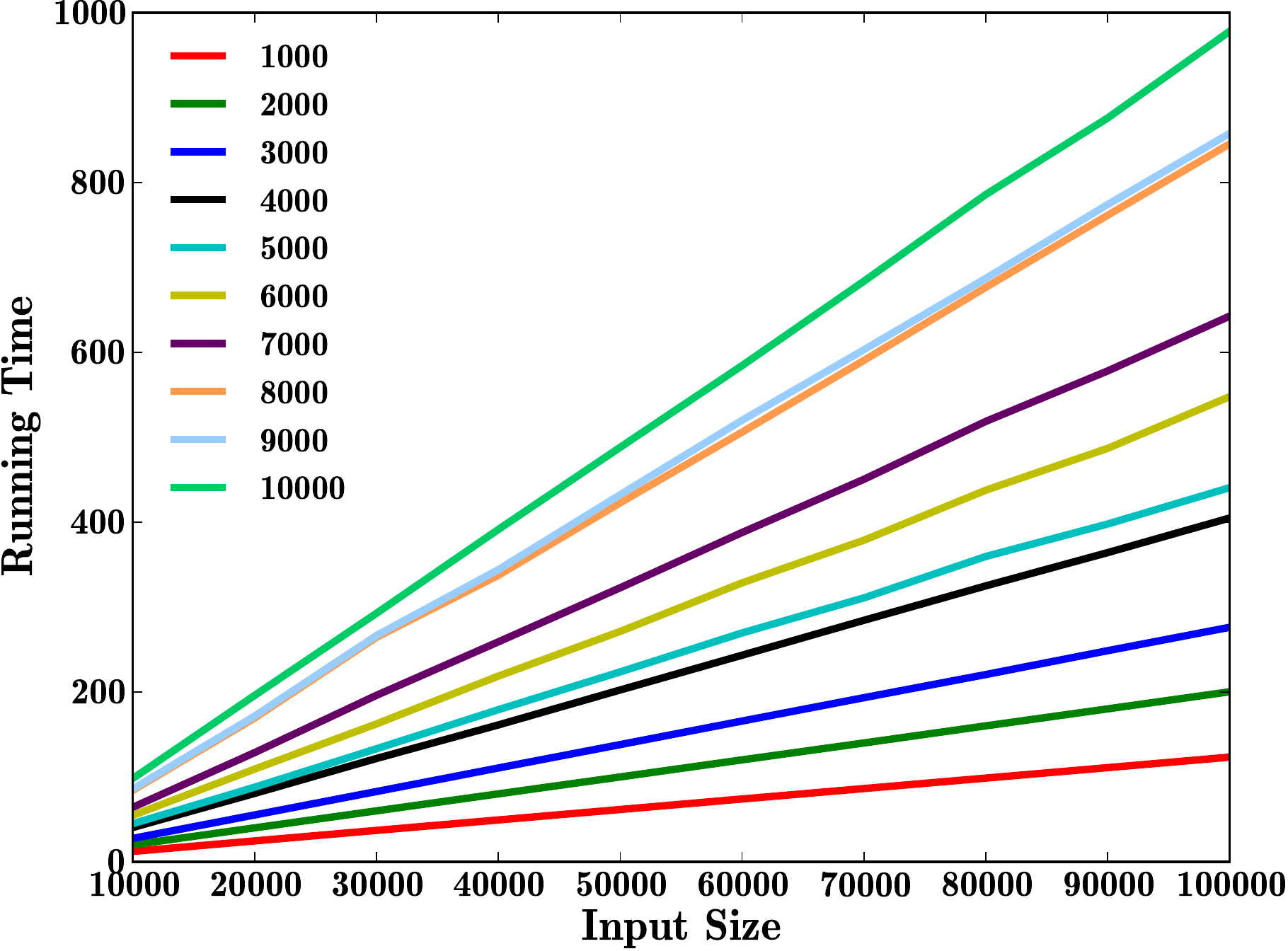}
\label{fig:time_vs_input}
}\vspace{2mm} 
\caption{Running time in seconds vs. input size.
Here, we measure only the running time of \FD (specifically Algorithm \ref{alg:freqDir2}). 
The sketch size is kept fixed at $\ell = 100$. The size of the synthetic input matrix is $n \times d$.
The value of $n$ is indicated on the x-axis. Different plot lines correspond to different values of $d$
(indicated in the legend). The running time is measured in seconds and is indicated on the y-axis.
It is clear from this plot that the running time of \FD is linear in both $n$ and $d$. Note
also that sketching a $10^5 \times 10^4$ dense matrix terminates in roughly 3 minutes.
}
\label{fig:time_vs_input}
\end{centering}
\end{figure}

\section*{Acknowledgments} 
The authors thank Qin Zhang for discussion on hardness results.  
We also thank Stanley Eisenstat, and Mark Tygert for their guidance regarding efficient $\svd$ rank-1 updates.  

\bibliographystyle{plain}
\bibliography{soda_journal}

\end{document}